\documentclass[sigconf, nonacm]{acmart}

\newcommand{\kw}[1]{\mbox{\textnormal{\textsf{#1}}}}

\newcommand{\invoke}[1]{\mbox{\textnormal{\ensuremath{\textsf{#1}}}}}
\newcommand{\stitle}[1]{\vspace{1ex} \noindent{\bf #1}}
\long\def\comment#1{}
\newcommand\mdbf[1]{\textnormal{\textbf{#1}}}

\newcommand{\mycolora}{black}
\newcommand{\mycolorb}{black}
\newcommand{\mycolorc}{black}

\renewcommand{\a}{\alpha}
\renewcommand{\b}{\beta}
\renewcommand{\int}{\overline}

\newcommand{\fullversion}[1]{{#1}}
\newcommand{\submitversion}[1]{\comment{#1}}

\newtheorem{Lemma}{Lemma}
\newtheorem{Theorem}{Theorem}
\newtheorem{Example}{Example}
\newtheorem{Definition}{Definition}

\newcommand\figref[1]{Figure~\ref{#1}}
\newcommand\subfigref[1]{Figure~\ref{#1}}
\newcommand\lemref[1]{Lemma~\ref{#1}}
\newcommand\theref[1]{Theorem~\ref{#1}}
\newcommand\tabref[1]{Table~\ref{#1}}
\newcommand\algref[1]{Algorithm~\ref{#1}}
\newcommand\expref[1]{Example~\ref{#1}}
\newcommand\defref[1]{Definition~\ref{#1}}


\settopmatter{printacmref=false}

\comment{

\AtBeginDocument{%
  \providecommand\BibTeX{{%
    \normalfont B\kern-0.5em{\scshape i\kern-0.25em b}\kern-0.8em\TeX}}}
\setcopyright{acmcopyright}
\copyrightyear{2025}
\acmYear{2025}
\acmDOI{XXXXXXX.XXXXXXX}
\acmConference[Conference acronym 'XX]{Make sure to enter the correct
  conference title from your rights confirmation emai}{June 03--05,
  2018}{Woodstock, NY}
\acmPrice{15.00}
\acmISBN{978-1-4503-XXXX-X/18/06}

}

\usepackage[linesnumbered,ruled,vlined]{algorithm2e}
\usepackage{algorithmic} \SetKwProg{Fn}{Function}{}{end}
\usepackage{xcolor}
\usepackage{subcaption}
\captionsetup[sub]{labelfont=normalfont, textfont=normalfont}
\usepackage{mdsymbol}
\usepackage{mathrsfs}
\usepackage{multirow}
\usepackage{tikz}
\usepackage{CJKutf8}

\makeatletter
\patchcmd\algocf@Vline{\vrule}{\vrule \kern-0.4pt}{}{}
\patchcmd\algocf@Vsline{\vrule}{\vrule \kern-0.4pt}{}{}
\makeatother

\begin{document}
\sloppy

\title{Optimal $(\alpha,\beta)$-Dense Subgraph Search in Bipartite Graphs}

\author{Yalong Zhang}
\affiliation{%
  \institution{Beijing Institute of Technology}\city{Beijing}\country{China}
}\email{yalong-zhang@qq.com}

\author{Rong-Hua Li}
\affiliation{%
  \institution{Beijing Institute of Technology}\city{Beijing}\country{China}
}\email{lironghuabit@126.com}

\author{Qi Zhang}
\affiliation{%
  \institution{University of Science and Technology Beijing}\city{Beijing}\country{China}
}
\email{qizhangcs@ustb.edu.cn}

\author{Guoren Wang}
\affiliation{%
  \institution{Beijing Institute of Technology}\city{Beijing}\country{China}
}
\email{wanggrbit@126.com}

\begin{abstract}
Dense subgraph search in bipartite graphs is a fundamental problem in graph analysis, with wide-ranging applications in fraud detection, recommendation systems, and social network analysis. The recently proposed $(\alpha, \beta)$-dense subgraph model has demonstrated superior capability in capturing the intrinsic density structure of bipartite graphs compared to existing alternatives. However, despite its modeling advantages, the $(\alpha, \beta)$-dense subgraph model lacks efficient support for query processing and dynamic updates, limiting its practical utility in large-scale applications. To address these limitations, we propose \kw{BD-Index}, a novel index that answers $(\alpha, \beta)$-dense subgraph queries in optimal time while using only linear space $O(|E|)$, making it well-suited for real-world applications requiring both fast query processing and low memory consumption. We further develop two complementary maintenance strategies for dynamic bipartite graphs to support efficient updates to the \kw{BD-Index}. The space-efficient strategy updates the index in time complexity of $O(p \cdot |E|^{1.5})$ per edge insertion or deletion, while maintaining a low space cost of $O(|E|)$ (the same as the index itself), where $p$ is typically a small constant in real-world graphs. In contrast, the time-efficient strategy significantly reduces the update time to $O(p \cdot |E|)$ per edge update by maintaining auxiliary orientation structures, at the cost of increased memory usage up to $O(p \cdot |E|)$. These two strategies provide flexible trade-offs between maintenance efficiency and memory usage, enabling \kw{BD-Index} to adapt to diverse application requirements. Extensive experiments on 10 large-scale real-world datasets demonstrate high efficiency and scalability of our proposed solutions.
\end{abstract}

\comment{
\begin{CCSXML}
<ccs2012>
   <concept>
       <concept_id>10003752.10003809.10003635</concept_id>
       <concept_desc>Theory of computation~Graph algorithms analysis</concept_desc>
       <concept_significance>500</concept_significance>
       </concept>
 </ccs2012>
\end{CCSXML}
\ccsdesc[500]{Theory of computation~Graph algorithms analysis}

\keywords{$(\a,\b)$-dense subgraph }

\received{20 February 2007}
\received[revised]{12 March 2009}
\received[accepted]{5 June 2009}
}

\maketitle

\section{INTRODUCTION}

Bipartite graphs are widely used to represent relationships between two types of entities, such as user-page social networks \cite{social}, customer-product networks \cite{customer, customer2}, collaboration networks \cite{collaboration}, and gene co-expression networks \cite{gene, gene2}. Mining dense subgraphs in bipartite graphs has numerous practical applications. For example, in customer-product networks, users within the same dense community typically share similar product preferences, enabling e-commerce platforms to make targeted recommendations. In social networks, fraudsters may hire a large number of bot accounts to boost their visibility through likes or interactions. Such behavior tends to form dense communities around the fraudsters, making it easier for platforms to detect these fraudulent activities.

Motivated by these applications, various cohesive subgraph models have been proposed to identify dense communities in bipartite graphs, including biclique \cite{biclique1, biclique2, biclique3, biclique4}, $k$-biplex \cite{biplex1, biplex2, biplex3}, $k$-bitruss \cite{bitruss1, bitruss2, bitruss3}, $(\alpha,\beta)$-core \cite{abcore1, abcore2, abcore3}, and $(\alpha,\beta)$-dense subgraph \cite{ddbipartite}. However, despite the significant progress made by these models in capturing dense communities, they still suffer from several critical limitations. For example, there are no known polynomial-time algorithms for computing biclique and $k$-biplex, limiting their applicability in large bipartite networks. The computation of $k$-bitruss heavily relies on enumerating butterfly structures, which becomes prohibitively expensive in dense graphs \cite{bitruss1, bitruss2, bitruss3}. The $(\alpha,\beta)$-core model only considers node degrees and often fails to accurately capture the underlying density structure of the graph. {\color{\mycolora}In contrast, the $(\alpha,\beta)$-dense subgraph is a density-based model that exhibits a desirable ``dense-inside, sparse-outside'' property (\theref{The:edges}) and has been shown in \cite{ddbipartite} to capture the density structure of bipartite graphs more effectively than other cohesive subgraph models such as $(\alpha,\beta)$-core, bitruss, biplex, and biclique.}

{\color{\mycolorb}Given the favorable properties of the $(\alpha,\beta)$-dense subgraph model, it can be applied to a wide range of real-world scenarios where $(\alpha,\beta)$-dense subgraph queries are frequently issued. For example: (1) in user–product bipartite networks for recommendation systems, platforms may process millions of recommendation requests per second \cite{amazon,DBLP:conf/cikm/ZhangTWW24}; by tuning $(\alpha,\beta)$ values, the system can retrieve product groups with different densities, enabling personalized and diverse recommendations; (2) in fraud detection over financial networks, malicious behaviors may hide in subgraphs with varying density levels, requiring the system to frequently query dense subgraphs with different $(\alpha,\beta)$ combinations \cite{DBLP:conf/apweb/AllahbakhshIBBBF13, abcore2}; (3) in social networks, different $(\alpha,\beta)$ combinations reveal communities with distinct interaction patterns, such as a user–blog bipartite graph where high $\alpha$ and low $\beta$ (i.e., stronger filtering on the user side) identify highly active users, while low $\alpha$ and high $\beta$ (i.e., stronger filtering on the blog side) highlight popular blogs \cite{ddbipartite}.} Moreover, these bipartite networks are typically dynamic, with frequent edge insertions and deletions as users interact with items or each other in real time. This makes it essential to support efficient and responsive $(\alpha,\beta)$-dense subgraph computation under continuous graph updates.

{\color{\mycolora}However, existing state-of-the-art algorithm \kw{DSS++} \cite{ddbipartite} for computing $(\alpha,\beta)$-dense subgraphs struggles to meet these demands. The \kw{DSS++} algorithm adopts a network-flow-based approach that requires solving a maximum flow problem per query, resulting in a worst-case time complexity of $O(|E|^{1.5})$, where $|E|$ is the number of edges in the bipartite graph. As real-world graphs continue to grow rapidly in size and $(\alpha,\beta)$-dense subgraph queries are issued at high frequency, this online approach becomes increasingly inefficient. For instance, as shown in our experiments (Section~6, Exp-1), on the large \kw{LI} dataset ($112.3$ million edges), the \kw{DSS++} algorithm takes an average of 21.49 seconds to process a single query—far exceeding the latency requirements of many real-time applications, which typically demand a response time below 0.5 seconds \cite{amazon,DBLP:conf/cikm/ZhangTWW24,DBLP:conf/sigmod/KersbergenSS22}.} Moreover, real-world bipartite graphs are typically dynamic, with frequent updates such as edge insertions and deletions. Efficient computation of $(\alpha,\beta)$-dense subgraphs in such evolving graphs is crucial for numerous applications, including identifying closely-connected communities in social networks over time, generating real-time product recommendations in customer-product networks, and dynamically monitoring dense regions potentially indicative of fraudulent activities in financial networks. Unfortunately, existing methods for handling $(\alpha,\beta)$-dense subgraph queries in dynamic graphs remain limited to inefficiently re-executing the \kw{DSS++} algorithm whenever the graph is updated.


To address these limitations, we study efficient computation of $(\alpha,\beta)$-dense subgraphs on both static and dynamic bipartite graphs. We first propose \kw{BD-Index}, a novel index structure specifically designed to support the optimal query processing (i.e., the query time complexity is linear to the size of results). Meanwhile, the space complexity of \kw{BD-Index} is carefully bounded to $O(|E|)$, ensuring scalability to large-scale graphs. Thus, \kw{BD-Index} serves as a practical solution for real-time and large-scale dense subgraph computation. For dynamic graphs, we investigate the maintence of \kw{BD-Index} and propose two complementary strategies. The first is a space-efficient approach that preserves the linear space complexity advantage of \kw{BD-Index}. The second is a time-efficient approach that leverages additional storage for faster updates. Notably, the latter provides a practical trade-off between time and space complexity, enabling efficient maintenance of \kw{BD-Index} even for graphs with hundreds of millions of edges. In summary, the main contributions of this paper are as follows.


\stitle{A novel index structure with optimal query time.} Our design exploits the hierarchical property of $(\alpha,\beta)$-dense subgraphs, i.e., higher-density subgraphs are necessarily contained within lower-density ones. Leveraging this property, we introduce two novel concepts: $\alpha$-rank and $\beta$-rank, which represent the largest $(\a, \b)$ values for which the node belongs to a dense subgraph. Using these ranks, we elaborately organize nodes into $p$ node lists, where $p$ is the largest integer such that $(p,p)$-dense subgraph is non-empty. For each $(\alpha, \beta)$ pair, we maintain a pointer to the corresponding node list, allowing the query to retrieve all result nodes by scanning the list once. This enables an optimal query time of $O(|D_{\a, \b}|)$, where $D_{\a, \b}$ is the result set. Furthermore, by fully exploiting the nested nature of dense subgraphs, our index requires only linear space $O(|E|)$, making it both query-efficient and space-efficient. We also present an index construction algorithm with a time complexity of $O(p \cdot |E|^{1.5} \cdot \log |U \cup V|)$, which can effectively handle graphs with hundreds of millions of edges. 


\stitle{Novel space-efficient index maintenance algorithms.} To support efficient updates of the \kw{BD-Index}, we first establish several update theorems for $(\alpha,\beta)$-dense subgraphs, covering both insertion and deletion cases. These theorems reveal that when an edge is updated, the $\alpha$-rank of nodes in the lower side $V$ and the $\beta$-rank of nodes in the upper side $|U|$ may change by at most 1. Leveraging these update properties, we propose two space-efficient maintenance algorithms: \kw{BD-Insert-S} and \kw{BD-Delete-S}, which handle edge insertions and deletions, respectively. Both algorithms operate in linear space $O(|E|)$, preserving the space advantage of \kw{BD-Index}. In terms of time complexity, our \kw{BD-Insert-S} and \kw{BD-Delete-S} algorithms achieve an update complexity of $O(p \cdot |E|^{1.5})$, which is significantly lower than the $O(p \cdot |E|^{1.5} \cdot \log |U \cup V|)$ complexity of the baseline that recomputes the entire index from scratch.

\stitle{Novel time-efficient index maintenance algorithms.} Building on the established update theorems, we further propose time-efficient maintenance algorithms. We introduce a novel concept called \emph{egalitarian orientation}, transforming the maintenance of \kw{BD-Index} into maintaining a set of egalitarian orientations. Leveraging this transformation, we propose two algorithms, \kw{BD-Insert-T} and \kw{BD-Delete-T}, that first maintain the egalitarian orientations and subsequently update \kw{BD-Index} using only a few breadth-first search operations. These algorithms achieve a significant reduction in time complexity from $O(p \cdot |E|^{1.5})$ required by the space-efficient algorithms to $O(p \cdot |E|)$. While maintaining $p$ additional egalitarian orientations increases the space complexity to $O(p \cdot |E|)$, this remains within acceptable limits. Consequently, \kw{BD-Insert-T} and \kw{BD-Delete-T} achieve a favorable time-space trade-off, enabling efficient and scalable maintenance of \kw{BD-Index}.

\stitle{Extensive experiments.} We conduct extensive experiments on 10 large real-world datasets, and the results demonstrate the efficiency and scalability of our solutions. First, the index-based query algorithm, \kw{Query-BD-Index}, outperforms the state-of-the-art online algorithm by 3 to 4 orders of magnitude. On the largest dataset \kw{LI} (with over 100 million edges), it achieves an average query time of merely 2.74 milliseconds. Second, \kw{BD-Index} exhibits memory usage comparable to the original graph size, and our index construction algorithm scales effectively to large graphs like \kw{LI}. Third, our space-efficient maintenance algorithms achieve up to one order of magnitude speedup compared to index recomputation from scratch, while the time-efficient algorithms are 2-4 orders of magnitude faster than the space-efficient approaches. Although requiring approximately one order of magnitude more memory, the time-efficient maintenance algorithms only consumes 51 GB for maintaining \kw{BD-Index} on the \kw{LI} dataset, which is easily acceptable in practical applications.

\stitle{Reproducibility and full version paper.} The source code and the full version of this paper can be found at \url{https://anonymous.4open.science/r/bd-index-F7E2}.

\section{PRELIMINARIES}

We consider an undirected and unweighted bipartite graph $G = (U, V, E)$, where $U$ and $V$ are two disjoint node sets, and $E \subseteq U \times V$ represents the set of edges connecting nodes from distinct sets. For each node $x \in U \cup V$, we denote its neighbor nodes set in $G$ as $N_G(x)$, and its degree as $d_x(G) = |N_G(x)|$ (or simply $N(x)$ and $d_x$ when the context is unambiguous). Given a subset of nodes $X \subseteq U \cup V$, we define subsets $X^U = X \cap U$ and $X^V = X \cap V$ corresponding to the respective partitions. The subgraph induced by $X$ is $G(X)=(X^U, X^V, E(X))$, where $E(X)$ includes all edges between nodes in $X$.

\begin{table}[t]
\color{\mycolorc}
\caption{Frequently used notations.} \label{notations}
\vspace*{-0.3cm}
\small
\begin{tabular}{c|l}
\hline
\textbf{Notation}                            & \textbf{Definition}                                                \\ \hline
$N_G(x)$ or $N(x)$                  & The neighbor nodes of $x$ in $G$.                          \\
$d_x(G)$ or $d_x$                   & The degree of $x$ in $G$.                                  \\
$G(X)=(X^U,X^V,E(X))$               & The subgraph induced by $X$.                               \\
$\vec{G}$                           & An orientation of $G$.                                     \\
$\vec{d}_x(\vec{G})$ or $\vec{d}_x$ & The indegree of $x$ in $\vec{G}$.                          \\
$s\rightsquigarrow t$               & A path from $s$ to $t$.                                    \\
$r_{\alpha}(x)$ and $r_{\beta}(x)$  & The $\alpha$-rank and $\beta$-rank of $x$.                 \\
$p$                                 & The largest integer $p$ such that $D_{p,p}\neq \emptyset$. \\ \hline
\end{tabular}
\end{table}

By assigning each edge a specific direction, we can convert the undirected bipartite graph $G = (U, V, E)$ into a directed bipartite graph called an \emph{orientation} of $G$, denoted $\vec{G}=(U, V, \vec{E})$, where $\vec{E}$ contains the directed edges. In this orientation, the indegree of node $x$, denoted $\vec{d}_x(\vec{G})$ (or $\vec{d}_x$ for simplicity), counts its incoming edges. A \emph{path} $s \rightsquigarrow t$ in an orientation is a sequence of nodes $x_0, x_1, \ldots, x_l$, where $(x_{i-1}, x_i)\in \vec{E}$ for $i = 1,\ldots,l$, and the length of this path is $l$. If a path $x \rightsquigarrow y$ exists, we say $x$ can \emph{reach} $y$. {\color{\mycolorc}The frequently used notations are summarized in \tabref{notations}.} Next, we introduce the definition of the $(\a,\b)$-dense subgraph \cite{ddbipartite}. 

\begin{Definition} \label{ab-dense-subgraph}
    \cite{ddbipartite} \mdbf{($(\a,\b)$-dense subgraph)} Given a bipartite graph $G=(U, V, E)$, two non-negative integers $\a$ and $\b$, let $\vec{G}$ be an orientation of $G$, and let $S=\{u \in U| \vec{d}_u< \a\} \cup \{v \in V| \vec{d}_v < \b \}$ and $T=\{u \in U| \vec{d}_u> \a\} \cup\{v \in V| \vec{d}_v> \b \}$. If there is no path $s \rightsquigarrow t$ in $\vec G$ with $s\in S$ and $t\in T$, then the $(\a,\b)$-dense subgraph $G(D_{\a,\b})$ is induced by $D_{\a,\b}=T \cup \{x| x \text{ can reach a node in } T~in~\vec G\}$. 
\end{Definition}

{\color{\mycolora}
\begin{Theorem} \label{The:edges} \cite{ddbipartite}
    The $(\a,\b)$-dense subgraph $D_{\a,\b}$ has the following two properties: 
    (1) \textbf{(inside dense)} for any non-empty $X \subseteq D_{\a,\b}$, the removal of $X$ from $D_{\a,\b}$ results in a deletion of more than $\a\cdot|X^U|+\b\cdot|X^V|$ edges; (2) \textbf{(outside sparse)} for any subset $Y \subseteq (U\cup V)\setminus D_{\a,\b}$, the inclusion of $Y$ into $D_{\a,\b}$ leads to an increase of at most $\a\cdot|Y^U|+\b\cdot|Y^V|$ edges.
\end{Theorem}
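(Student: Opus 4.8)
The plan is to argue entirely inside the fixed orientation $\vec G$ from \defref{ab-dense-subgraph}, exploiting that $D_{\a,\b}$ is closed under taking predecessors. First I would record three structural facts. (i) If $x\in D_{\a,\b}$ and $(y,x)\in\vec E$, then $y\in D_{\a,\b}$: prepending $(y,x)$ to a path from $x$ to a node of $T$ (or to the trivial path, if $x\in T$ itself) shows $y$ reaches $T$. Consequently every in-edge of a node of $D_{\a,\b}$ already lies in $G(D_{\a,\b})$, so each $x\in D_{\a,\b}$ has the same indegree $\vec d_x$ in the induced orientation as in $\vec G$. (ii) For $x\in D_{\a,\b}$: if $x\in T$ then $\vec d_x\ge\a+1$ (resp. $\b+1$) when $x\in U$ (resp. $x\in V$); if $x\notin T$ then also $x\notin S$, since otherwise $S$ would reach $T$, so $\vec d_x=\a$ (resp. $\b$); in all cases $\vec d_x\ge\a$ (resp. $\b$). (iii) Dually, every $y\notin D_{\a,\b}$ has $y\notin T$, hence $\vec d_y\le\a$ (resp. $\b$), and every edge of $G$ between $D_{\a,\b}$ and its complement must be oriented out of $D_{\a,\b}$, since an edge oriented into $D_{\a,\b}$ from outside would contradict (i).

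For \emph{inside dense}, fix a non-empty $X\subseteq D_{\a,\b}$ and count the edges of $G(D_{\a,\b})$ incident to $X$: classifying them by orientation relative to $X$, this count equals $\sum_{x\in X}\vec d_x$ (the edges with head in $X$, using (i)) plus the number $e_{\mathrm{out}}$ of edges oriented from $X$ to $D_{\a,\b}\setminus X$. Fact (ii) gives $\sum_{x\in X}\vec d_x\ge\a|X^U|+\b|X^V|$, so it only remains to exhibit one extra edge. If $X\cap T\neq\emptyset$, the strict part of (ii) supplies it. Otherwise I would pick any $x\in X$, note $x\notin T$ so $x$ reaches some $t\in T$ along a path of length $\ge 1$ all of whose nodes lie in $D_{\a,\b}$, and observe that this path must leave $X$ (as $t\notin X$), which gives $e_{\mathrm{out}}\ge 1$. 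Either way at least $\a|X^U|+\b|X^V|+1$ edges are deleted.

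For \emph{outside sparse}, fix $Y\subseteq(U\cup V)\setminus D_{\a,\b}$. The edges created by adding $Y$ to $D_{\a,\b}$ are exactly the edges of $G$ with at least one endpoint in $Y$ and the other endpoint in $D_{\a,\b}\cup Y$; by (iii), each such edge — whether lying inside $Y$ or between $Y$ and $D_{\a,\b}$ — has its head in $Y$. Hence their number is at most $\sum_{y\in Y}\vec d_y\le\a|Y^U|+\b|Y^V|$, again by (iii).

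I expect the only delicate step to be the strict inequality in \emph{inside dense}: the plain indegree bound yields only ``$\ge$'', and recovering the extra edge for an $X$ disjoint from $T$ is exactly where the reachability definition of $D_{\a,\b}$ is essential; everything else is routine bookkeeping with indegrees in $\vec G$.
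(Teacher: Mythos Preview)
The paper does not prove \theref{The:edges} itself; it is quoted verbatim from \cite{ddbipartite} and merely invoked later (e.g., in the $\Omega(|E|)$ space lower bound), so there is no in-paper argument to compare against.

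That said, your argument is correct and is exactly the kind of orientation-based proof one would expect. The three structural facts are sound: (i) predecessor-closure of $D_{\a,\b}$ follows immediately from the reachability definition; (ii) for $x\in D_{\a,\b}\setminus T$ you correctly rule out $x\in S$ via the no-$S$-to-$T$-path hypothesis, pinning $\vec d_x$ to exactly $\a$ (or $\b$); (iii) the cut orientation and the indegree cap outside $D_{\a,\b}$ are immediate. Your edge-counting in the \emph{inside dense} part is clean---every edge of $G(D_{\a,\b})$ incident to $X$ is either an in-edge of some $x\in X$ (all of which lie in $G(D_{\a,\b})$ by (i)) or an out-edge from $X$ to $D_{\a,\b}\setminus X$---and the recovery of the strict ``$+1$'' via either a $T$-vertex in $X$ or a path escaping $X$ toward $T$ is exactly the right dichotomy. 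The \emph{outside sparse} bound is the easy direction and your head-in-$Y$ observation handles both intra-$Y$ and $D_{\a,\b}$-to-$Y$ edges at once; the inequality $\le\sum_{y\in Y}\vec d_y$ (rather than equality) correctly accounts for possible in-edges of $Y$ coming from outside $D_{\a,\b}\cup Y$. Nothing is missing.
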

}

\begin{figure}[t]
  \captionsetup[subfigure]{justification=centering}
  \begin{subfigure}{0.95\linewidth}
    \centering
    \includegraphics[width=1.0\linewidth]{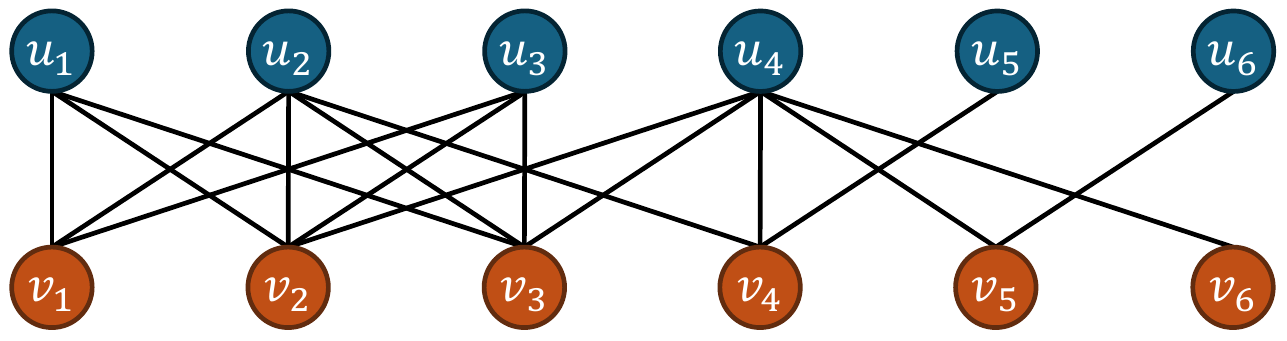}
    \subcaption{A bipartite graph $G=(U,V,E)$.} \label{example-graph-1}
  \end{subfigure}
  \begin{subfigure}{0.95\linewidth}
    \centering
    \includegraphics[width=1.0\linewidth]{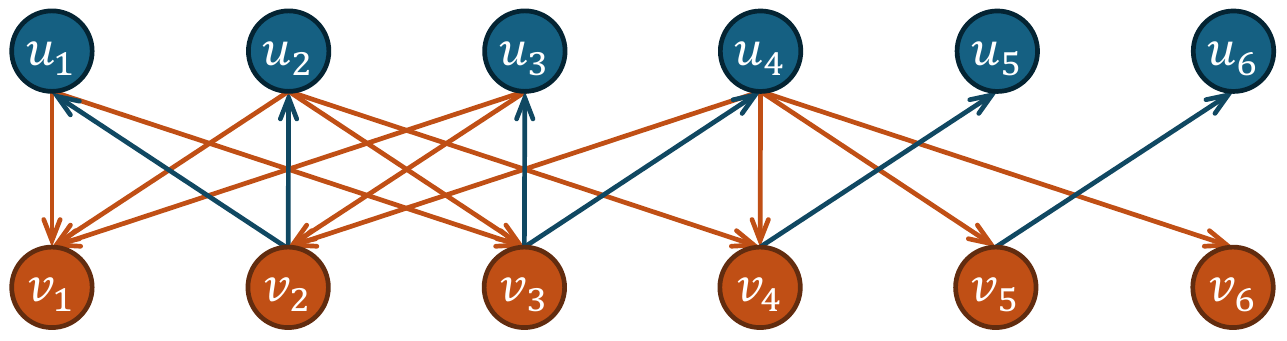}
    \subcaption{An orientation $\vec{G}=(U,V,\vec{E})$.} \label{example-graph-2}
  \end{subfigure}
  \caption{An example graph $G=(U,V,E)$ and its orientation.} \label{example-graph}
\end{figure}

{\color{\mycolora}The $(\alpha,\beta)$-dense subgraph is characterized as being {\bf dense inside} and {\bf sparse outside}~\cite{ddbipartite}. Intuitively, the parameters $\alpha$ and $\beta$ serve as thresholds that control the density of the two partitions: increasing $\alpha$ directly raises the density requirement for the upper partition $U$, and increasing $\beta$ does the same for the lower partition $V$. As a result, each $(\alpha,\beta)$ pair corresponds to a subgraph $D_{\alpha,\beta}$ with a distinct density, enabling users to flexibly adjust the density of the resulting subgraph to their specific needs and separately control the density requirements for the two partitions.
}


\begin{Example}
    As illustrated in \figref{example-graph-2}, $\vec{G}$ is an orientation of the bipartite $G$ depicted in \figref{example-graph-1}. Let $\a=1$ and $\b=2$. According to \defref{ab-dense-subgraph}, we have $S=\{v_5, v_6\}$ and $T=\{v_1\}$. Since no path exists from any node in $S$ to any node in $T$, we can conclude that $D_{1,2}$ contains $T$ and all nodes that can reach $T$, specifically $\{u_1, u_2, u_3, u_4, v_1, v_2, v_3\}$. Using the same approach, we have $D_{1,1} = \{u_1, u_2, u_3, u_4, v_1, v_2, v_3, v_4\}$ and $D_{1,3} = \emptyset$. {\color{\mycolora}Compared with $D_{1,1}$, $D_{1,2}$ excludes the relatively sparse node $v_4$ due to the higher $\beta$ value, resulting in a denser subgraph. Further increasing $\beta$ to 3 yields no subgraph satisfying the density requirement, thus, $D_{1,3} = \emptyset$.}
\end{Example}

In practical applications (such as e-commerce recommendation and fraud detection), it is often necessary to frequently query $D_{\alpha,\beta}$ for different combinations of $(\alpha,\beta)$ parameters. This necessitates the design of efficient query processing algorithms. Moreover, real-world bipartite graphs are typically dynamic, with frequent edge insertions and deletions. In such scenarios, efficiently computing $D_{\alpha,\beta}$ while keeping up with graph updates is essential for real-time responsiveness. Motivated by these, we formulate the problem studied in this paper as follows.


\stitle{Problem definition:} Given a bipartite graph $G$, we define a \emph{$D_{\a,\b}$-query} as the computation of $D_{\a,\b}$ for given two non-negative integers $\a$ and $\b$. Our problem focuses on efficiently processing $D_{\a,\b}$-queries in both static and dynamic graphs. 


\stitle{Challenges.} A straightforward approach to answering a $D_{\a,\b}$-query is to directly invoke the state-of-the-art algorithm \kw{DSS++} proposed in \cite{ddbipartite}, which we refer to as the \kw{Online} algorithm in this paper. However, this algorithm incurs a worst-case time complexity of $O(|E|^{1.5})$ since it requires performing a maximum flow computation \cite{ddbipartite}, making it impractical for large-scale graphs. To address this inefficiency, a promising approach is to design specialized index structures for efficient query processing, which presents two key challenges: \emph{(1) Query-efficient and space-efficient index design:} To achieve optimal query time, a naive indexing approach is to precompute and store all non-empty $D_{\a,\b}$ subgraphs. However, this approach incurs prohibitive space complexity of $O(|V|^3)$, as there can be $O(|V|^2)$ valid $(\a,\b)$ pairs, each requiring $O(|V|)$ storage space. The challenge lies in how to leverage the inherent relationships among $D_{\a,\b}$ subgraphs to compactly organize node information in the index, such that it supports both minimal space usage and optimal query performance. \emph{(2) Efficient index maintenance for dynamic graphs:} Real-world bipartite graphs frequently evolve through edge insertions and deletions. In such dynamic scenarios, the index must be efficiently maintained after each update to guarantee real-time query responsiveness. A naive approach that recomputes the entire index from scratch following each update is computationally prohibitive. While there are dense subgraph maintenance methods in unipartite graphs \cite{dd}, they cannot be directly applied to bipartite graphs due to the two-dimensional nature of $(\a,\b)$-dense subgraphs. Unlike unipartite settings, bipartite graphs require maintaining significantly more subgraphs due to the dual-parameter definition and must carefully handle the asymmetry between the upper and lower node partitions. These unique challenges make index maintenance in bipartite graphs substantially more complicated.

\section{A NOVEL INDEX: \textsf{BD-Index}}
In this section, we propose a novel index called \kw{BD-Index} {\color{\mycolorb}(short for {\bf B}ipartite {\bf D}ense Subgraph Index)}, which achieves optimal query processing time of $O(|D_{\a,\b}|)$. Furthermore, \kw{BD-Index} requires only $O(|E|)$ space, linear to the graph size, making it both time-efficient and space-efficient for query processing.

\subsection{Structure of \textsf{BD-Index}}

As discussed previously, we can compute all non-empty $D_{\alpha,\beta}$ subgraphs by executing the \kw{DSS++} algorithm~\cite{ddbipartite} and storing each $D_{\alpha,\beta}$ individually to construct a basic index structure. Although this naive approach achieves optimal query processing time $O(|D_{\alpha,\beta}|)$, its space complexity becomes impractical for large-scale graphs. Specifically, each $D_{\alpha,\beta}$ requires $O(|V|)$ space, and there exist $O(|V|^2)$ distinct $(\alpha,\beta)$ pairs that produce non-empty $D_{\alpha,\beta}$ subgraphs~\cite{ddbipartite}, leading to an overall storage requirement of $O(|V|^3)$. To address this space limitation, we exploit the hierarchical property of $D_{\alpha,\beta}$ subgraphs, as formalized in the following theorem.

\begin{Theorem} \label{hierarchy}
    \cite{ddbipartite} \mdbf{(Hierarchical property)} Given a graph $G$, for $\a^+\ge\a$ and $\b^+\ge\b$, we have $D_{\a^+,\b^+}\subseteq D_{\a,\b}$.
\end{Theorem}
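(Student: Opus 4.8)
The plan is to avoid reasoning about orientations directly---where the orientation certifying $D_{\a,\b}$ and the one certifying $D_{\a^+,\b^+}$ may be different and are awkward to compare---and instead argue purely from the ``dense-inside, sparse-outside'' characterization of \theref{The:edges}. Suppose, for contradiction, that $D_{\a^+,\b^+}\not\subseteq D_{\a,\b}$, and set $X:=D_{\a^+,\b^+}\setminus D_{\a,\b}$, which is then non-empty.

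First I would apply the \emph{inside dense} property of \theref{The:edges} to the subgraph $D_{\a^+,\b^+}$ with the non-empty subset $X\subseteq D_{\a^+,\b^+}$: deleting $X$ from $D_{\a^+,\b^+}$ removes strictly more than $\a^+\cdot|X^U|+\b^+\cdot|X^V|$ edges, i.e.\ $|E(D_{\a^+,\b^+})|-|E(D_{\a^+,\b^+}\setminus X)| > \a^+\cdot|X^U|+\b^+\cdot|X^V|$. Next I would apply the \emph{outside sparse} property of \theref{The:edges} to the subgraph $D_{\a,\b}$ with $Y:=X\subseteq (U\cup V)\setminus D_{\a,\b}$: including $X$ into $D_{\a,\b}$ increases the edge count by at most $\a\cdot|X^U|+\b\cdot|X^V|$, i.e.\ $|E(D_{\a,\b}\cup X)|-|E(D_{\a,\b})| \le \a\cdot|X^U|+\b\cdot|X^V|$. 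The bridge between the two bounds is the containment $D_{\a^+,\b^+}\subseteq D_{\a,\b}\cup X$, which holds by the very definition of $X$. Consequently every edge counted on the left-hand side of the first inequality (an edge with an endpoint in $X$ and both endpoints in $D_{\a^+,\b^+}$) also has both endpoints in $D_{\a,\b}\cup X$, hence is counted on the left-hand side of the second inequality; therefore the number of edges deleted when removing $X$ from $D_{\a^+,\b^+}$ is at most the number of edges added when inserting $X$ into $D_{\a,\b}$. Chaining the three relations yields $\a^+\cdot|X^U|+\b^+\cdot|X^V| < \a\cdot|X^U|+\b\cdot|X^V|$, contradicting $\a^+\ge\a$, $\b^+\ge\b$ and $|X^U|+|X^V|\ge 1$. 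Thus $X=\emptyset$ and $D_{\a^+,\b^+}\subseteq D_{\a,\b}$.

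The only genuinely delicate step is the edge-counting comparison in the middle; I would make it rigorous by writing both quantities as cardinalities of explicitly described edge sets ($\{e\in E : e\cap X\neq\emptyset,\ e\subseteq D_{\a^+,\b^+}\}$ versus $\{e\in E : e\cap X\neq\emptyset,\ e\subseteq D_{\a,\b}\cup X\}$) and invoking the containment, rather than arguing informally about ``edges added/removed''. Everything else is a direct instantiation of \theref{The:edges}. I note that an alternative route working straight from \defref{ab-dense-subgraph}---exhibiting a single orientation that simultaneously certifies the whole nested family of dense subgraphs---is also possible, but it requires a flow-type exchange argument and is strictly more involved, so I would keep the characterization-based proof above as the primary one.
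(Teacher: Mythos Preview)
Your argument is correct. The contradiction proof via \theref{The:edges} is clean: the only step requiring care is the edge-set containment, and you have identified it and described exactly how to make it rigorous (write both differences as $\{e\in E : e\cap X\neq\emptyset,\ e\subseteq D_{\a^+,\b^+}\}$ and $\{e\in E : e\cap X\neq\emptyset,\ e\subseteq D_{\a,\b}\cup X\}$, then use $D_{\a^+,\b^+}\subseteq D_{\a,\b}\cup X$).

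As for comparison with the paper: there is nothing to compare against. The paper does not prove \theref{hierarchy}; it simply quotes the statement from \cite{ddbipartite} and uses it as a black box. Your proof therefore goes beyond what the present paper offers, supplying a self-contained derivation that relies only on the dense-inside/sparse-outside characterization (\theref{The:edges}), which the paper likewise imports from \cite{ddbipartite}. The alternative orientation-based route you mention at the end is closer in spirit to how the original reference likely argues (via a single certifying orientation or a flow argument), but your characterization-based proof is shorter and entirely adequate here.
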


According to the hierarchical property theorem, if a node $x$ belongs to $D_{\a^+,\b^+}$, it must necessarily belong to $D_{\a,\b}$. In the straightforward index, $x$ would be redundantly stored in both $D_{\a^+,\b^+}$ and $D_{\a,\b}$. To optimize storage, we can store $x$ only in $D_{\alpha^+,\beta^+}$. When processing queries for $D_{\alpha,\beta}$, we can directly incorporate $D_{\alpha^+,\beta^+}$ into the result set, thereby eliminating storage redundancy while maintaining query correctness. Based on the above rationale, we define $\a$-rank and $\b$-rank.

\begin{Definition}
    \mdbf{($\a$-rank and $\b$-rank)} Given a graph $G$ and a value $\a$, the $\a$-rank of a node $x\in (U\cup V)$, denoted by $r_\a(x)$, is the maximum integer $k$ such that $x\in D_{\a,k}$ if $x\in D_{\a,0}$, and -1 otherwise. Similarly, for a value $\b$, the $\b$-rank of $x$, denoted by $r_\b(x)$, is the maximum integer $k$ such that $x \in D_{k,\b}$ if $x\in D_{0,\b}$, and -1 otherwise.
\end{Definition}

With $\a$-rank and $\b$-rank defined, $D_{\a,\b}$ can be computed using the following theorem.

\begin{Theorem} \label{rank-theorem}
    Given a graph $G$ and two non-negative integers $\a$ and $\b$, we have: $D_{\a,\b} = \{x|r_\a(x) \geq \b\} = \{x|r_\b(x) \geq \a\}$.
\end{Theorem}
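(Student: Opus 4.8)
The plan is to prove the two claimed set equalities separately, and to obtain the second one from the first by the symmetry between the two partitions (i.e., by swapping the roles of $\a$ and $\b$, and of $U$ and $V$, in \defref{ab-dense-subgraph}). So the real work is showing $D_{\a,\b} = \{x \mid r_\a(x) \ge \b\}$, which I would do by mutual inclusion, leaning entirely on the hierarchical property (\theref{hierarchy}) and the definition of $\a$-rank.

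For the inclusion $D_{\a,\b} \subseteq \{x \mid r_\a(x) \ge \b\}$: take $x \in D_{\a,\b}$. Applying \theref{hierarchy} with $\b \ge 0$ gives $D_{\a,\b} \subseteq D_{\a,0}$, so $x \in D_{\a,0}$; hence $r_\a(x)$ falls into the first case of its definition and equals $\max\{k \mid x \in D_{\a,k}\}$. This set of integers is non-empty (it contains $0$, and also $\b$ since $x\in D_{\a,\b}$) and bounded above (for $k$ large enough $D_{\a,k}$ is empty, as only finitely many $(\a,\b)$ pairs yield non-empty dense subgraphs), so the maximum is well defined and is $\ge \b$, i.e. $r_\a(x)\ge\b$. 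For the reverse inclusion: suppose $r_\a(x) \ge \b$. Since $r_\a(x) \ge \b \ge 0 > -1$, we are again in the first case of the definition, so $x \in D_{\a, r_\a(x)}$ with $r_\a(x)$ the attained maximum. Now apply \theref{hierarchy} with the pair $(\a, r_\a(x))$ dominating $(\a,\b)$ (same first coordinate, $r_\a(x)\ge\b$) to get $D_{\a,r_\a(x)} \subseteq D_{\a,\b}$, whence $x \in D_{\a,\b}$. This closes the first equality; the second follows verbatim with $\a$ and $\b$ interchanged and $r_\b$ in place of $r_\a$.

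I do not anticipate a substantive obstacle: the argument is essentially bookkeeping around the definition. The only points that need care are (i) justifying that $\max\{k \mid x\in D_{\a,k}\}$ exists whenever $x\in D_{\a,0}$, which uses boundedness of the set of feasible $\b$-values; and (ii) correctly handling the $r_\a(x) = -1$ convention and the boundary case $\b = 0$, so that "$r_\a(x)\ge\b$" is equivalent to "$x\in D_{\a,0}$ and the max is at least $\b$" rather than accidentally including nodes with $r_\a(x)=-1$. Both are dispatched by the observation that $\b$ is a non-negative integer, so $r_\a(x)\ge\b$ already forces $r_\a(x)\ne-1$.
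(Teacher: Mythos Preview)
Your proposal is correct and follows essentially the same approach as the paper: both arguments reduce the set equality to the equivalence $r_\a(x)\ge\b \Leftrightarrow x\in D_{\a,\b}$, using the definition of $\a$-rank together with the hierarchical property (\theref{hierarchy}), and then invoke symmetry for the second equality. Your version is simply more explicit about the edge cases (well-definedness of the maximum, the $-1$ convention, and $\b=0$), which the paper leaves implicit.
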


\begin{proof}
    By definition, the $\a$-rank $r_\a(x)$ implies $x \in D_{\a,r_\a(x)}$ and $x \notin D_{\a,r_\a(x)+1}$. Combining this with \theref{hierarchy}, if $r_\a(x) \geq \b$, then $x \in D_{\a,\b}$. Conversely, if $r_\a(x) < \b$, it follows that $x \notin D_{\a,\b}$. This establishes the equivalence $r_\a(x) \geq \b \Leftrightarrow x \in D_{\a,\b}$, which yields $D_{\a,\b} = \{x|r_\a(x) \geq \b\}$. Through symmetric reasoning, we further derive $D_{\a,\b} = \{x|r_\b(x) \geq \a\}$.
\end{proof}

According to \theref{rank-theorem}, for a fixed $\a$, we can construct a node list by sorting all nodes in ascending order of their $r_\a$. Then, given arbitrary $\b$, $D_{\a,\b}$ consists of all nodes from the first node in the list with $r_\a \geq \b$ to the end of the node list. Symmetrically, when fixing $\beta$ and sorting by $r_\beta$, we obtain an analogous property for $D_{\alpha,\beta}$. Based on this idea, we propose \kw{BD-Index} as follows.

\begin{Definition}
    \mdbf{(\textsf{BD-Index})} Let $p$ be the maximum integer such that $D_{p,p} \neq \emptyset$. The \kw{BD-Index} comprises two symmetric components:
    
    (1) $\mathbb{I}_{BD}^U$: For each $\a = 0, \ldots, p$, the index stores a node list that contains all nodes with $r_\a \geq \a$, sorted in ascending order of $r_\a$. For each $\b = \a, \ldots, \max_{x \in (U \cup V)} r_\a(x)$, $\mathbb{I}_{BD}^U[\a][\b]$ points to the first node in the node list with $r_\a \geq \b$. 
    
    (2) $\mathbb{I}_{BD}^V$: for each $\b = 0, \ldots, p$, the index stores a node list that contains all nodes with $r_\b > \b$, sorted in ascending order of $r_\b$. For each $\a = \b + 1, \ldots, \max_{x \in (U \cup V)} r_\b(x)$, $\mathbb{I}_{BD}^V[\b][\a]$ points to the first node in the node list with $r_\b \geq \a$.
\end{Definition}

\begin{figure*}[t]
  \captionsetup[subfigure]{justification=centering}
  \begin{subfigure}{0.45\linewidth}
    \centering
    \includegraphics[width=1.0\linewidth]{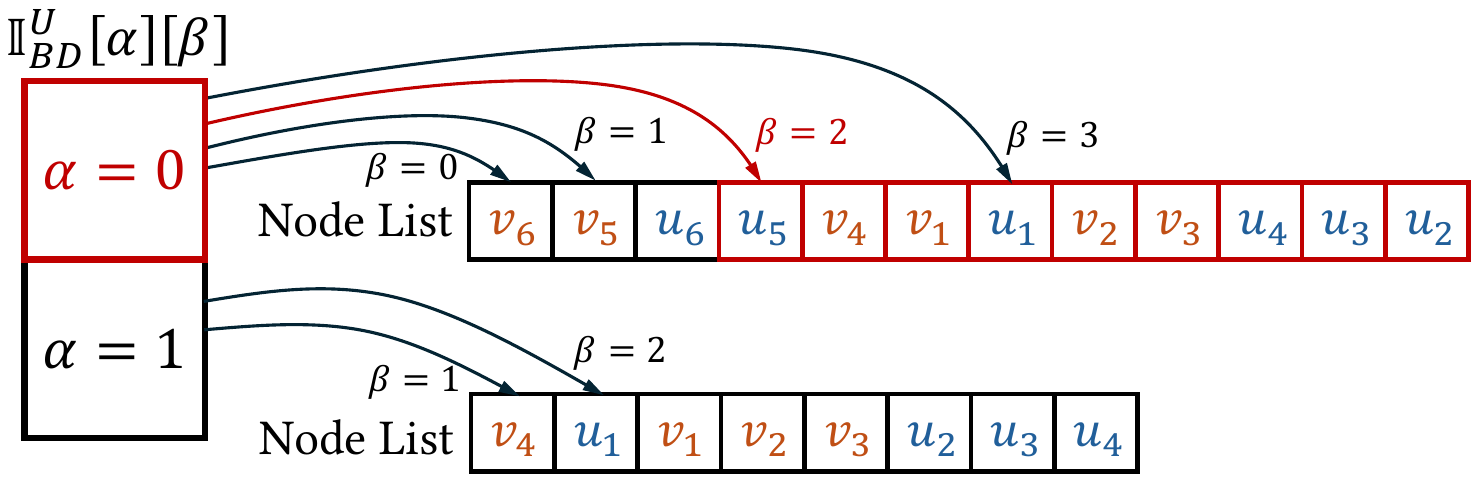}
    \subcaption{$\mathbb{I}_{BD}^U$.} \label{bd-index-example-1}
  \end{subfigure}
  \hspace*{0.3cm}
  \begin{subfigure}{0.45\linewidth}
    \centering
    \includegraphics[width=1.0\linewidth]{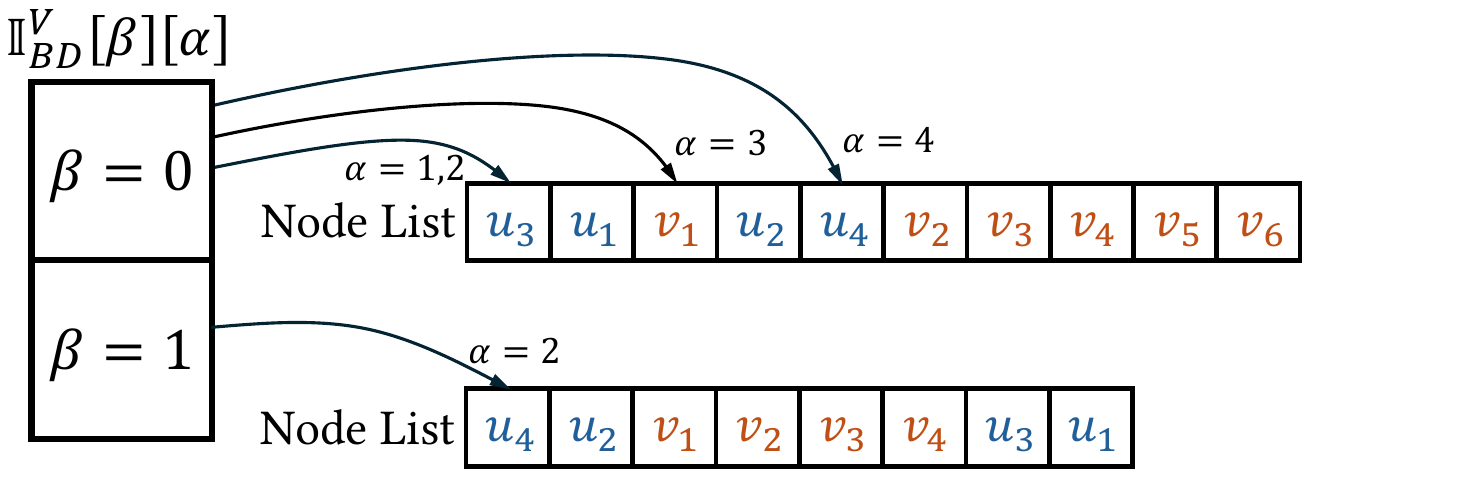}
    \subcaption{$\mathbb{I}_{BD}^V$.} \label{bd-index-example-2}
  \end{subfigure}
  \caption{Example of \kw{BD-Index} and querying $D_{0,2}$.} \label{bd-index-example}
\end{figure*}

\begin{Example}
    The \kw{BD-Index} of the graph $G$ in \figref{example-graph-1} is shown in \figref{bd-index-example}. In this \kw{BD-Index}, we have $p = 1$, resulting in both $\mathbb{I}_{BD}^U$ and $\mathbb{I}_{BD}^V$ containing $(p+1) = 2$ node lists. Taking the case of $\a = 1$ as an example, we have the following rank values: $r_\a(u_5) = r_\a(u_6) = -1$, $r_\a(v_5) = r_\a(v_6) = 0$, $r_\a(v_4) = 1$, $r_\a(u_1) = r_\a(u_2) = r_\a(u_3) = r_\a(u_4) = r_\a(v_1) = r_\a(v_2) = r_\a(v_3) = 2$. The node list for $\a = 1$ includes all nodes with $r_\a \geq \a$, as illustrated in \subfigref{bd-index-example-1}. Since $\max_{x \in (U \cup V)} r_\a(x) = 2$, $\mathbb{I}_{BD}^U[\a]$ contains two pointers to the node list: $\mathbb{I}_{BD}^U[\a][1]$ and $\mathbb{I}_{BD}^U[\a][2]$, which point to $v_4$ and $u_1$, respectively.
\end{Example}

\subsection{Query processing}
Here, we present how \kw{BD-Index} supports optimal-time query processing. The query processing algorithm, \kw{Query-BD-Index}, is detailed in \algref{query-bd-index}. The index component $\mathbb{I}_{BD}^U$ processes queries where $\a \leq \b$ (lines 1-3), while $\mathbb{I}_{BD}^V$ handles queries where $\a > \b$ (lines 4-6). For $\a \leq \b$, the algorithm first checks whether the queried $\a$ and $\b$ fall within the valid range to determine if $D_{\a,\b}$ is empty (line 2). If $D_{\a,\b}$ is non-empty, the algorithm retrieves all nodes from $\mathbb{I}_{BD}^U[\a][\b]$ to the end of the node list and adds them into $D_{\a,\b}$ (line 3). For $\a > \b$, the algorithm computes $D_{\a,\b}$ analogously (lines 5-6). 

\begin{algorithm}[t]
    \caption{\invoke{Query-BD-Index}$(\mathbb{I}_{BD},\a,\b)$} \label{query-bd-index}
    \small
    \KwIn{The \kw{BD-Index} $\mathbb{I}_{BD}$, and the query integers $\a$ and $\b$.}
    \KwOut{$D_{\a,\b}$.}
    \If{$\a\le\b$}{
        \lIf{$\a\ge\mathbb{I}_{BD}^U.size$ or $\b\ge\mathbb{I}_{BD}^U[\a].size$}{
            $D_{\a,\b}\gets\emptyset$
        }
        \lElse{
            \scalebox{0.97}{$D_{\a,\b}\gets$ nodes from $\mathbb{I}_{BD}^U[\a][\b]$ to the end of the node list}
        }
    }
    \Else{
        \lIf{$\b\ge\mathbb{I}_{BD}^V.size$ or $\a\ge\mathbb{I}_{BD}^V[\b].size$}{
            $D_{\a,\b}\gets\emptyset$
        }
        \lElse{
            \scalebox{0.97}{$D_{\a,\b}\gets$ nodes from $\mathbb{I}_{BD}^V[\b][\a]$ to the end of the node list}
        }
    }
    \Return{$D_{\a,\b}$}\;
\end{algorithm}

\begin{Example}
    Based on \kw{BD-Index} illustrated in \figref{bd-index-example}, the \kw{Query-BD-Index} processes the $D_{0,2}$ query as follows. Since $\a = 0 \leq \b = 2$, the algorithm employs the index component $\mathbb{I}_{BD}^U$. First, it verifies that $\a < \mathbb{I}_{BD}^U.size = 2$ and $\b < \mathbb{I}_{BD}^U[\a].size = 4$, confirming that $D_{\a,\b} \neq \emptyset$. The algorithm then retrieves all nodes from $\mathbb{I}_{BD}^U[\a][\b] = u_5$ to the end of the node list, which includes $\{u_5, v_4, v_1, u_1, v_2, v_3, u_4, u_3, u_2\}$, as the query result $D_{\a,\b}$.
\end{Example}

Next, we prove the correctness of \kw{Query-BD-Index} and establish its optimal query time complexity.

\begin{Theorem}
    \kw{Query-BD-Index} can correctly retrieve $D_{\a,\b}$ within optimal query time $O(|D_{\a,\b}|)$.
\end{Theorem}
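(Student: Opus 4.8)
The plan is to establish two things separately: correctness (the set returned equals $D_{\a,\b}$) and the time bound ($O(|D_{\a,\b}|)$). For correctness, I would split on the two symmetric cases handled by the algorithm. Consider first the case $\a \le \b$, which uses $\mathbb{I}_{BD}^U$. I would argue that $D_{\a,\b}$ is non-empty if and only if both $\a < \mathbb{I}_{BD}^U.size$ and $\b < \mathbb{I}_{BD}^U[\a].size$: since $p$ is the largest integer with $D_{p,p}\neq\emptyset$ and $\mathbb{I}_{BD}^U$ has one node list per $\a = 0,\dots,p$, the condition $\a \ge \mathbb{I}_{BD}^U.size = p+1$ forces $D_{\a,\a}=\emptyset$, hence $D_{\a,\b}=\emptyset$ by \theref{hierarchy} (as $\b\ge\a$); and for fixed valid $\a$, $\mathbb{I}_{BD}^U[\a].size = \max_{x} r_\a(x) + 1$, so $\b \ge \mathbb{I}_{BD}^U[\a].size$ means no node has $r_\a(x) \ge \b$, which by \theref{rank-theorem} gives $D_{\a,\b}=\{x \mid r_\a(x)\ge\b\}=\emptyset$. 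Conversely, when both bounds hold, $\mathbb{I}_{BD}^U[\a][\b]$ is a valid pointer into the list. Then, because the node list for $\a$ stores exactly the nodes with $r_\a \ge \a$ sorted in ascending order of $r_\a$, and $\mathbb{I}_{BD}^U[\a][\b]$ points to the first node with $r_\a \ge \b$, the suffix of the list starting at that pointer is precisely $\{x \mid r_\a(x) \ge \b\}$ — here I need $\b \ge \a$ so that every node with $r_\a \ge \b$ is actually present in the list (the list only contains nodes with $r_\a \ge \a$). By \theref{rank-theorem} this suffix equals $D_{\a,\b}$.

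For the case $\a > \b$, the argument is symmetric but uses $\mathbb{I}_{BD}^V$, whose node list for $\b$ stores nodes with $r_\b > \b$ (equivalently $r_\b \ge \b+1$), and the query range for the pointer is $\a = \b+1,\dots,\max_x r_\b(x)$. Since $\a > \b$ means $\a \ge \b+1$, every node with $r_\b \ge \a$ lies in this list, so the suffix starting at $\mathbb{I}_{BD}^V[\b][\a]$ equals $\{x \mid r_\b(x) \ge \a\}$, which by the second equality in \theref{rank-theorem} is $D_{\a,\b}$. The emptiness checks are justified analogously: $\b \ge \mathbb{I}_{BD}^V.size = p+1$ implies $D_{\b+1,\b+1}$ would need... actually more carefully, $\b > p$ implies even $D_{\b,\b}=\emptyset$ hence $D_{\a,\b}=\emptyset$ since $\a>\b$; and $\a \ge \mathbb{I}_{BD}^V[\b].size$ means no node has $r_\b \ge \a$.

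For the time complexity, the observation is that lines 2 and 5 perform $O(1)$ comparisons (sizes and pointers are stored explicitly), and the retrieval in line 3 (resp. line 6) simply walks the node list from the pointer to the end, spending $O(1)$ per emitted node. Hence the total running time is $O(1 + |D_{\a,\b}|)$. When $D_{\a,\b}\neq\emptyset$ this is $O(|D_{\a,\b}|)$; when $D_{\a,\b}=\emptyset$ the algorithm returns in $O(1)$ time, which is still optimal since any algorithm must at least produce its (empty) output. I would note that this matches the trivial lower bound $\Omega(|D_{\a,\b}|)$ for outputting the result, so the time is optimal.

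The main obstacle — and the point deserving the most care in the writeup — is the boundary reasoning that ties together the \emph{range} of valid pointers with the \emph{membership condition} of each node list, especially the asymmetry between $\mathbb{I}_{BD}^U$ (nodes with $r_\a \ge \a$, used for $\a \le \b$) and $\mathbb{I}_{BD}^V$ (nodes with $r_\b > \b$, used for $\a > \b$). One must verify that in each case the queried parameter is large enough that the list genuinely contains \emph{all} nodes belonging to $D_{\a,\b}$ and not a proper subset; this is exactly where the case split $\a \le \b$ versus $\a > \b$ is essential, and it is the only non-mechanical part of the argument. Everything else reduces to \theref{rank-theorem} together with the fact that the lists are sorted by rank so that ``first node with rank $\ge$ threshold'' correctly delimits the desired suffix.
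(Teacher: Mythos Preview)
Your proposal is correct and follows essentially the same approach as the paper's proof: split on $\a \le \b$ versus $\a > \b$, invoke \theref{rank-theorem} to identify the sorted-list suffix with $D_{\a,\b}$, and observe that the traversal costs $O(|D_{\a,\b}|)$ while the size checks are $O(1)$. Your writeup is considerably more careful than the paper's about the emptiness checks and about why the case split guarantees that the relevant node list actually contains every member of $D_{\a,\b}$, but the underlying argument is the same.
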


\begin{proof}
    When $\a \leq \b$, according to the definition of \kw{BD-Index}, the nodes from $\mathbb{I}_{BD}^U[\a][\b]$ to the end of the node list include all nodes with $r_\a \geq \b$. By \theref{rank-theorem}, $D_{\a,\b}$ can be correctly retrieved. Using a similar proof technique, we can establish the correctness for the case when $\a > \b$. 

    Regarding the query complexity, lines 1-2 and lines 4-5 of the algorithm execute in constant time. In lines 3 and 6, the algorithm perform a  traversal of $D_{\a,\b}$, requiring $O(|D_{\a,\b}|)$ time. Therefore, the overall query complexity is $O(|D_{\a,\b}|)$.
\end{proof}

The optimal query time $O(|D_{\alpha,\beta}|)$ of \kw{BD-Index} guarantees efficient processing with minimal overhead, enabling result retrieval independent of the graph size.

\subsection{Space complexity of \textsf{BD-Index}}

In this subsection, we analyze the space complexity of \kw{BD-Index}. We begin by introducing the following lemma.

\begin{Lemma} \label{degree}
    Given a bipartite graph $G$ and $D_{\a,\b}$, for any node $u \in D_{\a,\b}^U$, we have $d_u(G) > \a$, and for any node $v \in D_{\a,\b}^V$, we have $d_v(G) > \b$.
\end{Lemma}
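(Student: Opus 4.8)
The plan is to obtain the lemma as an almost immediate corollary of the ``inside dense'' part of \theref{The:edges}, specialized to singleton vertex sets. The key realization is that bounding the degree of a single node $u$ from below is exactly the statement that deleting the one-element set $X=\{u\}$ from $D_{\a,\b}$ removes more than $\a$ edges.

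First I would handle a node $u\in D_{\a,\b}^U$ (the claim is vacuous if $D_{\a,\b}^U=\emptyset$). Apply the inside-dense property of \theref{The:edges} with $X=\{u\}$, which is non-empty and contained in $D_{\a,\b}$. Here $X^U=\{u\}$ and $X^V=\emptyset$, so the theorem guarantees that removing $u$ from $D_{\a,\b}$ deletes strictly more than $\a\cdot 1+\b\cdot 0=\a$ edges from $G(D_{\a,\b})$. Next I would observe that the number of edges deleted by this removal is precisely $d_u(G(D_{\a,\b}))$, the degree of $u$ inside the induced subgraph, since every deleted edge is incident to $u$; and because $G(D_{\a,\b})$ is a subgraph of $G$, we have $d_u(G(D_{\a,\b}))\le d_u(G)$. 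Chaining these gives $\a<d_u(G(D_{\a,\b}))\le d_u(G)$, i.e.\ $d_u(G)>\a$. The statement for $v\in D_{\a,\b}^V$ then follows by the mirror-image argument: apply the inside-dense property with $X=\{v\}$, where $X^U=\emptyset$ and $X^V=\{v\}$, so that removing $v$ deletes more than $\b$ edges, whence $d_v(G)\ge d_v(G(D_{\a,\b}))>\b$.

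There is essentially no obstacle here; the only point needing a little care is the correct reading of ``deletion of edges'' in \theref{The:edges}, namely that it counts edges of the induced subgraph $G(D_{\a,\b})$ incident to the removed set, together with the trivial monotonicity $d_u(G(D_{\a,\b}))\le d_u(G)$.

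If a proof independent of \theref{The:edges} were preferred, one could instead argue directly from \defref{ab-dense-subgraph} using the orientation $\vec G$: any $u\in D_{\a,\b}^U$ is not in $S$ (otherwise a path from $u$ to some node of $T$ would be a forbidden $s\rightsquigarrow t$ path), so $\vec d_u\ge \a$; if moreover $u\notin T$, then $\vec d_u=\a$ and $u$ reaches some $t\in T$ along a shortest (hence simple) path, and reversing that path yields another orientation in which $u$'s indegree is $\a+1$, so $d_u(G)\ge \a+1>\a$. This works but is more delicate, so I would present the \theref{The:edges}-based argument above.
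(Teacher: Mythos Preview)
Your argument via \theref{The:edges} is correct. The paper itself does not spell out a proof; it merely asserts that the lemma ``can be directly derived by definition,'' which aligns with your alternative orientation-based sketch rather than with your main argument. Your main route is arguably cleaner once \theref{The:edges} is in hand, since it reduces both parts of the lemma to a single application of the inside-dense property with a singleton $X$; the definitional route is more self-contained but requires the small case analysis you outline. One minor simplification in your alternative: the path-reversal step is unnecessary. Once you know $u\notin T$ and $u$ reaches some $t\in T$, the first edge of that path is already an outgoing edge of $u$, so $d_u(G)\ge \vec d_u+1=\a+1$ directly, without modifying the orientation.
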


The correctness of this lemma can be directly derived by definition. Below, we prove that \kw{BD-Index} achieves $O(|E|)$ space complexity.

{\color{\mycolorc}
\begin{Theorem}
    Given a graph $G$, the space complexity of its \kw{BD-Index} is $\Theta(|E|)$.
\end{Theorem}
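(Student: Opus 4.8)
The plan is to prove the matching bounds $O(|E|)$ and $\Omega(|E|)$ separately. The upper bound is the substantive part; within each of the two symmetric components of \kw{BD-Index} it splits into bounding the total length of the node lists and the total number of pointers, and I claim both are at most $\sum_{\a=0}^{p}|D_{\a,\a}|$ for $\mathbb{I}_{BD}^U$ (and at most $\sum_{\b=0}^{p}|D_{\b+1,\b}|$ for $\mathbb{I}_{BD}^V$), each of which is $O(|E|)$.

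\emph{Upper bound.} In $\mathbb{I}_{BD}^U$ a node $x$ is in the list for parameter $\a$ exactly when $r_\a(x)\ge\a$, i.e.\ when $x\in D_{\a,\a}$ (\theref{rank-theorem}); since $\{D_{\a,\a}\}_\a$ is nested decreasing (\theref{hierarchy}) this holds for $\a=0,1,\dots,\a_x$ where $\a_x:=\max\{\a:x\in D_{\a,\a}\}$, so $x$ appears in exactly $\a_x+1$ lists. By \lemref{degree}, $x\in D_{\a_x,\a_x}$ forces $d_x>\a_x$, hence $\a_x+1\le d_x$, so the lists of $\mathbb{I}_{BD}^U$ have total length $\sum_{\a}|D_{\a,\a}|=\sum_x(\a_x+1)\le\sum_x d_x=2|E|$. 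The pointer column for $\a$ stores one pointer for each $\b\in\{\a,\dots,M_\a\}$ with $M_\a:=\max_x r_\a(x)$, i.e.\ $M_\a-\a+1$ pointers, so it suffices to show $M_\a+1\le|D_{\a,\a}|$. For this, first note that any non-empty $D_{\a,\b}$ has both $D_{\a,\b}^U\ne\emptyset$ and $D_{\a,\b}^V\ne\emptyset$ — otherwise $G(D_{\a,\b})$ has no edges, contradicting the ``inside dense'' clause of \theref{The:edges} with $X=D_{\a,\b}$, which demands deleting $>\a|X^U|+\b|X^V|\ge 0$ edges. Now $D_{\a,M_\a}$ is non-empty (by definition of $M_\a$) and so contains some $v\in V$; applying ``inside dense'' with $X=\{v\}$, removing $v$ deletes $>\a\cdot 0+M_\a\cdot 1=M_\a$ edges, so $v$ has $>M_\a$ neighbours inside $D_{\a,M_\a}$, giving $|D_{\a,M_\a}^U|\ge M_\a+1$; since $D_{\a,M_\a}\subseteq D_{\a,\a}$ (\theref{hierarchy}) we get $|D_{\a,\a}|\ge M_\a+1$. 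Swapping the roles of $U,V$ and of $\a,\b$ gives the same bounds for $\mathbb{I}_{BD}^V$ (whose lists are the $D_{\b+1,\b}$), so $|\mathbb{I}_{BD}|=O(|E|)$.

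\emph{Lower bound.} It is enough to lower-bound the combined length of all node lists, which equals $\sum_{k\ge0}|D^{(k)}|$ along the staircase $D^{(0)}=D_{0,0}\supseteq D^{(1)}=D_{1,0}\supseteq D^{(2)}=D_{1,1}\supseteq D^{(3)}=D_{2,1}\supseteq\cdots$ that increments the two parameters alternately (its subgraphs are exactly those stored as lists in $\mathbb{I}_{BD}^U$ and $\mathbb{I}_{BD}^V$). Telescoping the edge counts gives $\sum_{k\ge0}\bigl(|E(D^{(k)})|-|E(D^{(k+1)})|\bigr)=|E(D_{0,0})|=|E|$. On the other hand, each rung passes from some $D_{\a,\b}$ (with $\a,\b\le\lceil k/2\rceil$) to $D_{\a+1,\b}$ or $D_{\a,\b+1}$, and applying the ``outside sparse'' clause of \theref{The:edges} to the smaller of the two, with $Y$ the set of nodes dropped at that rung, bounds the edge drop at rung $k$ by at most $(\lceil k/2\rceil+1)$ times the number of nodes dropped there. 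Letting $k_x$ be the rung at which $x$ leaves the staircase, this yields $|E|\le\sum_x c_x$ with $c_x\le k_x/2+1$, hence $\sum_k|D^{(k)}|=\sum_x(k_x+1)\ge 2|E|-|D_{0,0}|$; combining with the trivial $\sum_k|D^{(k)}|\ge|D^{(0)}|=|D_{0,0}|$ gives $\sum_k|D^{(k)}|\ge|E|$. Together with the upper bound, $|\mathbb{I}_{BD}|=\Theta(|E|)$.

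I expect the lower bound to be the main obstacle: one must pick the alternating staircase so it exactly captures the lists that are actually stored, carry the asymmetric coefficients coming from ``outside sparse'' at each rung, and finally notice that averaging the estimates $\sum_k|D^{(k)}|\ge 2|E|-|D_{0,0}|$ and $\sum_k|D^{(k)}|\ge|D_{0,0}|$ cancels the $|D_{0,0}|$ term (the trade-off that makes it work: a graph with few active vertices must be dense, forcing a tall staircase). The upper bound, by contrast, is routine once one spots the single-node ``inside dense'' trick that bounds $M_\a$ by $|D_{\a,\a}|$.
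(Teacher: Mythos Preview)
Your proof is correct. The upper-bound argument is essentially the same as the paper's (both bound the list lengths by $\sum_{\a}|D_{\a,\a}|\le 2|E|$ via \lemref{degree}, and both bound $M_\a$ via a single-node application of ``inside dense''/degree; the paper phrases the latter through \lemref{degree} on $D_{\a,r_\a^{\max}}^V$, you phrase it through \theref{The:edges} on a single $v$, with the same conclusion $|D_{\a,M_\a}^U|\ge M_\a+1$).

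The lower bound is where you diverge. The paper works only with the diagonal chain $D_{0,0}\supseteq D_{1,1}\supseteq\cdots\supseteq D_{p,p}$: setting $E_k^\Delta=|E(D_{k,k})|-|E(D_{k+1,k+1})|$ and applying ``outside sparse'' with $Y=D_{k,k}\setminus D_{k+1,k+1}$ gives $E_k^\Delta\le(k+1)|Y|$, and the double-counting identity $\sum_{k}|D_{k,k}|=\sum_{t}(t+1)\,|D_{t,t}\setminus D_{t+1,t+1}|$ immediately yields $\sum_k|D_{k,k}|\ge\sum_t E_t^\Delta=|E|$. This shows $\mathbb{I}_{BD}^U$ \emph{alone} already has $\ge|E|$ list entries, in one line and without any averaging trick. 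Your alternating staircase instead pools the lists of $\mathbb{I}_{BD}^U$ and $\mathbb{I}_{BD}^V$, producing the weaker intermediate estimate $\sum_k|D^{(k)}|\ge 2|E|-|D_{0,0}|$ that then needs to be averaged against the trivial bound to cancel $|D_{0,0}|$. Both routes are valid; the paper's is shorter and yields a slightly stronger statement, while yours shows that the same ``outside sparse'' telescoping works along any nested parameter path, which is a nice observation in its own right. (One cosmetic slip: your intermediate coefficient should be $\lfloor k/2\rfloor+1$ rather than $\lceil k/2\rceil+1$---as you can check from the concrete parameters at each rung---and it is the former that justifies $c_x\le k_x/2+1$; the final inequality you use is correct.)
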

}

\begin{proof}
    We first prove that the space complexity of $\mathbb{I}_{BD}^U$ is $O(|E|)$. To begin, we show that the $(p+1)$ node lists in $\mathbb{I}_{BD}^U$ occupy $O(|E|)$ space. For a fixed $\a$, by definition, the corresponding node list contains the nodes in $D_{\a,\a}$. The total number of nodes across all node lists in $\mathbb{I}_{BD}^U$ is $\sum_{\a=0}^{p} |D_{\a,\a}| \leq \sum_{\a=0}^{p} |\{x | d_x(G) > \a\}| \leq \sum_{x \in U \cup V} d_x(G) = 2|E|$, where the first inequality follows from \lemref{degree}. Therefore, the space occupied by the $(p+1)$ node lists in $\mathbb{I}_{BD}^U$ is $O(|E|)$.  

    Next, we prove that the space occupied by all pointers $\mathbb{I}_{BD}^U[\cdot][\cdot]$ is also $O(|E|)$. Let $r_\a^{\max} = \max_{x \in U \cup V} r_\a(x)$. Fixing a specific $\a$, \lemref{degree} implies that the nodes in $D_{\a,r_\a^{\max}}^V$ have degrees greater than $r_\a^{\max}$, which leads to $|D_{\a,r_\a^{\max}}^U| > r_\a^{\max}$. Thus, by \lemref{degree}, we obtain $r_\a^{\max} < |D_{\a,r_\a^{\max}}^U| \le |\{u \in U | d_u > \a\}|$. By the definition of $\mathbb{I}_{BD}^U$, the number of pointers in $\mathbb{I}_{BD}^U$ is $\sum_{\a=0}^{p} (r_\a^{\max} - \a + 1) \leq \sum_{\a=0}^{p} (r_\a^{\max} + 1) \leq \sum_{\a=0}^{p} |\{u \in U | d_u > \a\}| \leq \sum_{u \in U} d_u = |E|$. Thus, the space occupied by the pointers in $\mathbb{I}_{BD}^U$ is also $O(|E|)$. Consequently, the total space complexity of $\mathbb{I}_{BD}^U$ is $O(|E|)$.  

    Using a similar approach, we can prove that $\mathbb{I}_{BD}^V$ also occupies $O(|E|)$ space. Therefore, the total space complexity of $\mathbb{I}_{BD}$ is $O(|E|)$.

    {\color{\mycolorc}
    Next, we prove that the space complexity of \kw{BD-Index} is $\Omega(|E|)$. It suffices to show that the number of nodes contained in $\mathbb{I}_{BD}^U$ is $\Omega(|E|)$, i.e., to prove that $\sum_{k=0}^p |D_{k,k}| = \Omega(|E|)$. First, define $E_k^\Delta = |E(D_{k,k})| - |E(D_{k+1,k+1})|$. According to \theref{The:edges}, we have $E_k^\Delta \le (k+1)|D_{k,k} \setminus D_{k+1,k+1}|$ by letting $Y = D_{k,k} \setminus D_{k+1,k+1}$. Then, we obtain $\sum_{k=0}^p |D_{k,k}| = \sum_{k=0}^p \sum_{t=k}^p |D_{t,t} \setminus D_{t+1,t+1}| = \sum_{t=0}^p (t+1)|D_{t,t} \setminus D_{t+1,t+1}| \ge \sum_{t=0}^p E_t^\Delta = |E|$, which implies $\sum_{k=0}^p |D_{k,k}| = \Omega(|E|)$. Combining the above results, we conclude that the space complexity of \kw{BD-Index} is $\Theta(|E|)$.
    }
\end{proof}

{\color{\mycolorc}The linear space complexity $\Theta(|E|)$ of \kw{BD-Index} indicates that its space usage grows asymptotically proportional to the number of edges, ensuring highly stable and predictable memory consumption. Our experiments further show that the actual space usage is almost equal to $8|E|$ bytes in practice. This property enables \kw{BD-Index} to efficiently handle large bipartite graphs while maintaining fast query performance.
}

\subsection{Index construction}

In this subsection, we introduce the construction algorithm for \kw{BD-Index}. The construction relies on three existing algorithms for computing $(\a,\b)$-dense subgraphs: \kw{DSS++}, \kw{Divide-a}, and \kw{Divide-b} \cite{ddbipartite}. The \kw{DSS++} algorithm computes a single $D_{\a,\b}$ subgraph. Given a bipartite graph $G$ and parameters $\a$ and $\b$, it constructs a network flow model and performs a minimum cut computation to obtain $D_{\a,\b}$, with the worst-case time complexity of $O(|E|^{1.5})$. The \kw{Divide-a} algorithm computes all non-empty $D_{\a,\b}$ for a fixed $\a$ across all possible $\b$ values. It utilizes \kw{DSS++} along with a divide-and-conquer strategy for pruning, achieving a time complexity of $O(|E|^{1.5} \log|U\cup V|)$. Its symmetric counterpart, \kw{Divide-b}, computes all non-empty $D_{\a,\b}$ subgraphs for a fixed $\b$ across all $\a$ values.

With the three algorithms \kw{DSS++}, \kw{Divide-a}, and \kw{Divide-b}, we propose the \kw{Build-BD-Index} algorithm for constructing \kw{BD-Index}, as shown in \algref{build-bd-index}. The algorithm proceeds as follows. First, it computes $p$ as the maximum integer satisfying $D_{p,p} \neq \emptyset$ (line 1), which can be achieved by performing a binary search with invocations to \kw{DSS++} \cite{ddbipartite}. Then, for each $\a$, the algorithm first invokes \kw{Divide-a} to identify all non-empty $D_{\a,\b}$ subgraphs across all $\b$ values (line 3). Based on these results, it determines the $r_\a$ value for each node (lines 4-5), sorts the nodes according to $r_\a$, and constructs the corresponding node list (line 6). The algorithm subsequently sets the pointers $\mathbb{I}_{BD}^U[\a][\cdot]$ in $\mathbb{I}_{BD}^U$ using this node list (lines 7-8). Once $\mathbb{I}_{BD}^U[\a][\cdot]$ is constructed for all $\a$, the algorithm performs a symmetric procedure for each $\b$ (lines 9-15). Finally, it returns $\mathbb{I}_{BD} = \mathbb{I}_{BD}^U \cup \mathbb{I}_{BD}^V$ (line 16).

The correctness of \kw{Build-BD-Index} follows directly from the correctness of its constituent algorithms \kw{Divide-a} and \kw{Divide-b}. For the time complexity of the \kw{Build-BD-Index} algorithm, the most time-consuming part is the invocation of the \kw{Divide-a} and \kw{Divide-b} algorithms. Since each invocation takes $O(|E|^{1.5} \log|U \cup V|)$ time and there are $O(p)$ invocations in total, the overall time complexity is $O(p \cdot |E|^{1.5}\cdot \log|U \cup V|)$. {\color{\mycolorc}As established in \cite{ddbipartite, pseudoarboricity}, the value of $p$ is typically a small constant in real-world graphs. While its worst-case value is $p \le \sqrt{|E|}/2$, this bound is only achieved in complete bipartite graphs, which are uncommon in practice. A similar parameter-dependent complexity also appears in bipartite core decomposition, where the time complexity is $O(\delta \cdot |E|)$ and $\delta$ is the largest integer such that the $(\delta,\delta)$-core is non-empty~\cite{abcore2}. Since $p$ tends to be small in real-world graphs, efficient index construction on large-scale bipartite networks is ensured.}

\begin{algorithm}[t]
    \caption{\invoke{Build-BD-Index}$(G)$} \label{build-bd-index}
    \small
    \KwIn{A bipartite graph $G=(U,V,E)$.}
    \KwOut{\kw{BD-Index}.}
    $p\gets$ the maximum integral such that $D_{p,p}\neq\emptyset$\;
    \For{$\a=0,1,2,\ldots,p$}{
        Invoke algorithm \kw{Divide-a} in \cite{ddbipartite} to compute the non-empty $D_{\a,\b}$ for all $\b$\;
        \ForEach{$x\in U\cup V$}{$r_\a(x)\gets$ the maximum $\b$ such that $x\in D_{\a,\b}$\;}
        Create a node list containing all nodes with $r_{\a} \geq \a$, and sort them in ascending order of $r_{\a}$\;
        \For{$\b=\a,\a+1,\ldots,\max_{x\in (U\cup V)}r_{\a}(x)$}{
            $\mathbb{I}_{BD}^U[\a][\b]\gets$ the first node in the node list with $r_{\a} \geq \b$\;
        }
    }
    \For{$\b=0,1,2,\ldots,p$}{
        Invoke algorithm \kw{Divide-b} in \cite{ddbipartite} to compute the non-empty $D_{\a,\b}$ for all $\a$\;
        \ForEach{$x\in U\cup V$}{$r_\b(x)\gets$ the maximum $\a$ such that $x\in D_{\a,\b}$\;}
        Create a node list containing all nodes with $r_{\b} > \b$, and sort them in ascending order of $r_{\b}$\;
        \For{$\a=\b+1,\b+2,\ldots,\max_{x\in (U\cup V)}r_{\b}(x)$}{
            $\mathbb{I}_{BD}^V[\b][\a]\gets$ the first node in the node list with $r_{\b} \geq \a$\;
        }
    }
    \Return{$\mathbb{I}_{BD}=\mathbb{I}_{BD}^U\cup\mathbb{I}_{BD}^V$}\;
\end{algorithm}

{\color{\mycolora}\stitle{Discussion: index construction and practical efficiency.}
While building the \kw{BD-Index} incurs construction cost, this overhead is quickly offset in practical scenarios for several reasons:\\
(1) \textbf{Fast construction and cost amortization.}
Our experiments Exp-1 and Exp-4 indicate that the index construction time approximates processing 1,000 online queries. However, once built, the index can efficiently support arbitrary $(\alpha,\beta)$ queries. The number of possible queries is much more than the initial cost required to build the index (e.g., for the \kw{HE} dataset in our experiments, there are 549,464 non-empty $D_{\alpha,\beta}$ subgraphs), demonstrating that the index covers a query space far exceeding the number of queries needed to amortize its construction. In addition, in practical applications, query workloads can be very frequent ($\approx$100,000 queries per second \cite{DBLP:conf/cikm/ZhangTWW24}), and the total number of queries soon surpasses this initial cost ($\approx$1,000 online queries), making the construction overhead negligible in the long run.\\
(2) \textbf{Significant query efficiency.}
The online algorithm exhibits slow query response times (up to 60 seconds per query, as shown in Exp-1), whereas our index achieves a maximum response time of only 0.018 seconds. In many real-world applications, the required response time must be returned in real time (e.g., within 0.5 seconds) \cite{amazon,DBLP:conf/cikm/ZhangTWW24,DBLP:conf/sigmod/KersbergenSS22}, highlighting that the online approach is far too slow to be practical. In contrast, our index not only satisfies these stringent latency requirements but also achieves optimal $O(|R|)$ query complexity, allowing it to directly output the nodes contained in the identified community.\\
(3) \textbf{Efficient maintenance for dynamic graphs.}
Our proposed maintenance algorithms in subsequent sections ensure that \kw{BD-Index} is incrementally updatable as the graph evolves through edge insertions and deletions. This design transforms the construction cost into a one-time investment, since ongoing maintenance can be performed efficiently without the need for complete reconstruction.}

{\color{\mycolorb}
\stitle{Remark: extending \kw{BD-Index} to the $(\alpha,\beta)$-core community query problem.} In addition to supporting $(\alpha,\beta)$-dense subgraphs, \kw{BD-Index} can also be adapted to $(\alpha,\beta)$-core community query problem \cite{abcore1,abcore2,abcore3} by defining analogous rank notions and reusing the index framework. Specifically, the $\alpha$-rank of a node $x$ is defined as the largest integer $k$ such that $x$ belongs to the $(\alpha,k)$-core (with $\beta$-rank defined symmetrically). Using these ranks, two symmetric components similar to \kw{BD-Index} can be constructed: for each $\alpha$ (resp. $\beta$) value, we construct node lists sorted by $\alpha$-rank (resp. $\beta$-rank) and pointers $\mathbb{I}_{core}^U[\alpha][\beta]$ and $\mathbb{I}_{core}^V[\beta][\alpha]$ to locate the first qualifying node in each list. This adaptation is feasible because $(\alpha,\beta)$-core also exhibits a hierarchy structure similar to $(\alpha,\beta)$-dense subgraphs \cite{abcore2}, making the indexing approach naturally applicable.
}

\section{SPACE-EFFICIENT INDEX MAINTENANCE}
This section aims to develop space-efficient maintenance algorithms for \kw{BD-Index} under dynamic graph updates, preserving its advantage of linear space complexity $O(|E|)$. A straightforward method is to reconstruct \kw{BD-Index} from scratch after each edge insertion or deletion. However, such recomputation incurs $O(p \cdot |E|^{1.5} \cdot \log|U \cup V|)$ time, which is prohibitive for real-world graphs with frequent edge updates. To address this bottleneck, we first formalize the update theorems, followed by proposing two incremental maintenance algorithms: \textsf{BD-Insert-S} and \textsf{BD-Delete-S}.  

\subsection{Update theorems for $\a$-rank and $\b$-rank}
Since $\mathbb{I}_{BD}^U$ and $\mathbb{I}_{BD}^V$ are two symmetric structures, we focus on maintaining $\mathbb{I}_{BD}^U$ for brevity, and the maintenance of $\mathbb{I}_{BD}^V$ can be implemented symmetrically. In essence, maintaining $\mathbb{I}_{BD}^U$ requires updating the $r_\a$ values of nodes, with emphasis on how $r_\a$ evolve after edge insertions or deletions. To formalize this, we present the following update theorems for $r_\a$ that addresses edge insertion and deletion. \submitversion{Due to space limitations, some proofs are provided in the full version of our paper, which can be found at \url{https://anonymous.4open.science/r/bd-index-F7E2}.} 

\begin{Theorem} \label{insertion-update-theorem}
    \mdbf{(Insertion Update Theorem)}\comment{ Given a graph $G$ and an alpha value $\a$, for the insertion of an edge $(u,v)$, the update of $r_{\alpha}$ follows these rules:
    \begin{enumerate}
        \item If $d_u < \alpha$, then the $r_{\alpha}$ values of all nodes remain unchanged.  
        \item Otherwise, let $x \in N(u)$ be the node whose $r_{\alpha}$ value is the $(\alpha+1)$-th largest among $N(u)\cup\{v\}$, and let $\beta = \min\{r_{\alpha}(x), r_{\alpha}(v)\}$.  
        \begin{itemize}
            \item All nodes with $r_{\alpha} = \beta$ may have their $r_{\alpha}$ updated to $\beta + 1$.  
            \item If $r_{\alpha}(u) < \beta + 1$, then $r_{\alpha}(u)$ may also be updated to $\beta + 1$.  
            \item The $r_{\alpha}$ values of all other nodes remain unchanged.  
        \end{itemize}
    \end{enumerate}}
    Given a graph $G$ and an integer $\a$, for an insertion edge $(u,v)$, let $r^N$ be the $r_\a$ value of the $(\a+1)$-th highest-ranked node in $N(u)\cup\{v\}$ (specifically, if $|N(u)| < \a$, then set $r^N = -1$), and let $\b = \min\{r^N, r_{\alpha}(v)\}$. Then, for all nodes $x \in V$:  
    \begin{enumerate}
       \item If $r_\a(x) = \b$, then $r_\a(x)$ may be updated to $\b+1$.  
        \item If $r_\a(x) \neq \b$, then $r_\a(x)$ remains unchanged.  
    \end{enumerate}
\end{Theorem}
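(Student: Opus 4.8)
The plan is to recast the statement, which is about the function $r_\alpha$, as a statement about how the nested family $D_{\alpha,0}\supseteq D_{\alpha,1}\supseteq\cdots$ (computed in $G$) changes when the edge $(u,v)$ is inserted; write $G'$ for the updated graph and $D'_{\alpha,k}$ for its dense subgraphs. By \theref{rank-theorem}, the two claims are equivalent to showing that $D_{\alpha,k}$ and $D'_{\alpha,k}$ contain the same vertices of $V$ for every $k\le\beta$ and every $k\ge\beta+2$ (so the only possible change on the $V$-side is that some level-$\beta$ vertices enter $D'_{\alpha,\beta+1}$). I would first set up two tools. \emph{(i) A recursive characterization of ranks.} For $v\in V$, $r_\alpha(v)\ge k$ iff $v$ has at least $k+1$ neighbors $w$ with $r_\alpha(w)\ge k$; for $u\in U$, $r_\alpha(u)\ge k$ iff $u$ has at least $\alpha+1$ neighbors $x$ with $r_\alpha(x)\ge k$, so $r_\alpha(u)$ is exactly the $(\alpha+1)$-th largest value among $\{r_\alpha(x):x\in N(u)\}$ -- which is precisely the value $r^N$ computed on the enlarged neighborhood $N(u)\cup\{v\}$. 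The ``if'' directions follow from the outside-sparse part of \theref{The:edges} applied to a singleton; the ``only if'' directions follow from a short reachability argument on a valid orientation witnessing $D_{\alpha,k}$ (every in-neighbor of a vertex of $D_{\alpha,k}$ again lies in $D_{\alpha,k}$, and an under-loaded vertex cannot reach $T$, which forces a vertex of $D^V_{\alpha,k}$ to have indegree $\ge k$ and hence $\ge k+1$ neighbors in $D^U_{\alpha,k}$). \emph{(ii) Monotonicity.} $D_{\alpha,k}\subseteq D'_{\alpha,k}$ for every $k$: by \theref{The:edges}, $D_{\alpha,k}$ is the largest vertex set that is ``inside-dense'' in $G$ at level $(\alpha,k)$; inside-denseness is preserved when passing to the supergraph $G'$; and, by supermodularity of $|E(\cdot)|$, every inside-dense set of $G'$ is contained in $D'_{\alpha,k}$. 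In particular no rank decreases.

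The structural heart of the argument is that level $k$ can change only if the new edge lies inside $D'_{\alpha,k}$: if $u\notin D'_{\alpha,k}$ or $v\notin D'_{\alpha,k}$, then $G'[D'_{\alpha,k}]=G[D'_{\alpha,k}]$, so $D'_{\alpha,k}$ is already inside-dense in $G$ and thus $D'_{\alpha,k}\subseteq D_{\alpha,k}$, giving equality via (ii). Moreover, when $D'_{\alpha,k}\neq D_{\alpha,k}$, setting $X_k:=D'_{\alpha,k}\setminus D_{\alpha,k}$ and comparing the outside-sparse bound for $D_{\alpha,k}$ (in $G$) with the inside-dense bound for $D'_{\alpha,k}$ (in $G'$), both applied to $X_k$ -- and using that the single extra edge of $G'$ lies inside $D'_{\alpha,k}$ but not inside $D_{\alpha,k}$ -- forces the two bounds to be tight up to this one edge, yielding $\sum_{y\in X_k}|N_G(y)\cap D'_{\alpha,k}| = \alpha|X_k^U|+k|X_k^V|+|E_G(X_k)|$, while at the same time one of $u,v$ must lie in $X_k$. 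This identity, combined with the recursive characterization applied in $G'$ to each member of $X_k$ (splitting off $u$ and $v$, since only those two vertices gained a neighbor), is the combinatorial engine for the case analysis.

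It now follows that any change is confined to levels $k\le\min\{r'_\alpha(u),r'_\alpha(v)\}$, and I would case on which endpoint of $(u,v)$ is the ``new'' one in $X_k$ and at which level, aiming to establish: (a) no vertex of $V$ enters $D'_{\alpha,k}$ for $k\ge\beta+2$; and (b) no vertex of $V$ enters $D'_{\alpha,k}$ for $k\le\beta$. For (a) the dangerous situation is $v\in X_k$ with $k\ge\beta+2$. If here $r_\alpha(v)>\beta$, then $\beta=r^N<r_\alpha(v)$, so $v$ is precisely the vertex whose insertion into $N(u)$ raised the $(\alpha+1)$-th largest rank above $r_\alpha(u)$; but $r_\alpha(u)\ge k$ (which must hold when $u$ is the ``old'' endpoint) would give $u$ already $\alpha+1$ old neighbors of rank $\ge k$, so adding the lower-ranked $v$ cannot drop the $(\alpha+1)$-th largest below $k$, whence $r^N\ge k>\beta=r^N$, a contradiction; and if $r_\alpha(v)=\beta$, the $V$-recursion in $G'$ together with the tight identity should push $v$ back to level $r_\alpha(v)+1\le\beta+1<k$. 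For (b), if $v\notin X_k$ then $v\in D_{\alpha,k}$ (as $k\le\beta\le r_\alpha(v)$), so $u\in X_k$ and $r_\alpha(u)<k\le\beta\le r^N$; because again $v$ is the vertex that pushed $r^N$ above $r_\alpha(u)$, this forces $\beta=r^N$ and $u$ to have exactly $\alpha$ old neighbors of rank $\ge k$, and feeding this into the tight identity eliminates every vertex of $V$ from $X_k$. Together with monotonicity, (a) and (b) give both conclusions of the theorem.

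The main obstacle is this last case analysis, and it is obstinate for a structural reason: on the $U$-side there is no ``$+1$'' cap -- $r_\alpha(u)$ may itself jump a lot -- so one cannot reason purely by the local recursion, which would seem to let a high-rank $u$ drag arbitrary $V$-neighbors up with it; this really uses that the ranks are the \emph{maximal} fixed point of the recursion, equivalently that one must argue through the global inside-dense/outside-sparse inequalities and the tight identity rather than the recursion alone. These encode that the single unit of ``density'' created by the new edge is absorbed entirely by the level-$\beta$ layer on the $V$-side, so that no vertex of $V$ can rise above $\beta+1$ and none below level $\beta$ can rise at all. Carefully bookkeeping which of $u,v$ lies in $X_k$, and cleanly separating the two regimes $\beta=r_\alpha(v)$ and $\beta=r^N<r_\alpha(v)$, is where the delicacy lies.
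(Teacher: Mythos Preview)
Your route is genuinely different from the paper's.  The paper never touches the inside-dense/outside-sparse inequalities; it works entirely with orientations.  For each $\beta'\in[-1,\beta]\cup[\beta+2,\infty)$ it picks an orientation of $G$ that satisfies \defref{ab-dense-subgraph} for the pair $(\alpha,\beta')$ (in one case an egalitarian orientation, \defref{egalitarian-orientation}), inserts the new edge with a carefully chosen direction, and---when $r_\alpha(u)\le r_\alpha(v)$---reverses a single further edge $(v_N,u)$ incident to $u$, where $v_N$ is the in-neighbor of $u$ with smallest $\alpha$-rank.  It then checks in one line that the resulting orientation of $G'$ still satisfies \defref{ab-dense-subgraph} for $(\alpha,\beta')$ with the same $T$ and the same reachability, whence $D'_{\alpha,\beta'}=D_{\alpha,\beta'}$.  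Three short cases ($|N(u)|<\alpha$; $|N(u)|\ge\alpha$ with $r_\alpha(u)>r_\alpha(v)$; $|N(u)|\ge\alpha$ with $r_\alpha(u)\le r_\alpha(v)$) finish the proof.  Your framework---monotonicity, the recursive rank characterization (which is the paper's \theref{u-update-theorem}), and the tight identity for $X_k$---is correct and insightful, but it trades one local orientation tweak for a global combinatorial case analysis.

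The concrete gap is in your case (a).  Your derivation of $r^N\ge k>\beta$ explicitly assumes $r_\alpha(u)\ge k$, i.e.\ $u\notin X_k$; you call $u$ the ``old'' endpoint but never argue that this must be so.  When \emph{both} $u$ and $v$ lie in $X_k$, your tight identity together with the per-vertex recursion only yields $|E_G(X_k)|\ge|X_k|-2$ and ``every vertex of $X_k\setminus\{u,v\}$ has a $G$-neighbour in $X_k$'', and neither statement forces $X_k^V=\emptyset$ by itself.  The sub-case $r_\alpha(v)=\beta$ is also left at ``should push $v$ back'', with no mechanism given.  To close this you would need an extra inductive layer (e.g.\ peel $X_k$ from the top level downward, or show that $D'_{\alpha,k}$ inside-dense in $G'$ implies $D'_{\alpha,k}\setminus\{v\}$ inside-dense in $G$ at level $(\alpha,k-1)$ and iterate), which is doable but substantially heavier than the paper's argument.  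The reason the orientation proof is so short is exactly the phenomenon you identify at the end: the ``single unit of density'' created by the new edge is literally one arrow, and pointing it (plus at most one reversal) absorbs it into level $\beta{+}1$ in a single stroke, without any counting.
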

\fullversion{
\begin{proof}
    We divide the proof into three cases.

    \textit{Case 1:} When $|N(u)| < \a$, both $r_N$ and $\b$ equal $-1$. Let $\vec{E}$ be an egalitarian orientation (defined in \defref{egalitarian-orientation}, which we will introduce later) of the given graph $G$ with respect to $\alpha$. We directly insert the edge $(u, v)$ into $\vec{E}$ with direction $(v, u)$. According to the definition of egalitarian orientation, the resulting $\vec{E}$ remains egalitarian. Since the reachability of all nodes remains unchanged, the $\alpha$-rank of each node also remains unchanged, which proves the case.

    \textit{Case 2:} When $|N(u)| \ge \a$ and $r_\a(u) > r_\a(v)$, we have $r^N = r_\a(u)$ and $\b = r_\a(v)$ by \theref{u-update-theorem} (we will prove later). For any $\b' \in [-1, \b]\cup [\b+2, +\infty)$, let $\vec{E}$ be an orientation that satisfies the condition in \defref{ab-dense-subgraph} with alpha value as $\a$ and beta value as $\b'$. We insert the edge $(u,v)$ into $\vec{E}$ with direction towards $v$. It is easy to verify that the updated $\vec{E}$ still satisfies the condition in \defref{ab-dense-subgraph}, so $D_{\a,\b'}$ remains unchanged after insertion. Therefore, only $D_{\a,\b+1}$ may change after insertion. Since $\alpha$-rank values do not decrease, it follows that only the nodes with $r_\a = \b$ may have their $\alpha$-rank increased to $\b+1$, which proves the case.

    \textit{Case 3:} When $|N(u)| \ge \a$ and $r_\a(u) \le r_\a(v)$, we have $r^N \ge r_\a(u)$ and $\b = r^N$. Let $\vec{E}$ be an egalitarian orientation of the given graph $G$ with respect to $\alpha$. We insert the edge $(u,v)$ into $\vec{E}$ with direction toward $u$, so $u$ now has $(\a+1)$ in-neighbors. The value $r^N$ represents the lowest $\alpha$-rank among these $(\a+1)$ neighbors; let $v_N$ be the corresponding neighbor. We then reverse the edge $(v_N, u)$ in $\vec{E}$ to become $(u, v_N)$. For any $\b' \in [-1, \b] \cup [\b+2, +\infty)$, the updated $\vec{E}$ still satisfies the condition in \defref{ab-dense-subgraph} with alpha value as $\a$ and beta value as $\b'$. Therefore, only the nodes in $V$ with $r_\a = \b$ may have their $\alpha$-rank increased to $\b+1$, which completes the proof.
\end{proof}
}

\begin{Theorem} \label{deletion-update-theorem}
    \mdbf{(Deletion Update Theorem)}\comment{ Given a graph $G$ and an alpha value $\a$, for the deletion of an edge $(u,v)$, let $\b = \min\{r_\a(u), r_\a(v)\}$. The update of $r_{\alpha}$ follows these rules:  
    \begin{enumerate}
        \item All nodes with $r_{\alpha} = \beta$ may have their $r_{\alpha}$ updated to $\beta - 1$.  
        \item For $u$, after the deletion, we have:  
        \begin{itemize}
            \item If $d_u' \leq \a$, then $r_{\alpha}'(u) = -1$.  
            \item Otherwise, $r_{\alpha}'(u)$ is updated to the $r_\a'$ value of the $(\a+1)$-th highest-ranked neighbor in $N'(u)$.  
        \end{itemize}
        \item The $r_{\alpha}$ values of all other nodes remain unchanged.
    \end{enumerate}}
    Given a graph $G$ and an integer $\a$, for an deletion edge $(u,v)$, let $\b = \min\{r_\a(u), r_\a(v)\}$. Then, for all nodes $x \in V$:  
    \begin{enumerate}
        \item If $r_\a(x) = \b$, then $r_\a(x)$ may be updated to $\b - 1$.  
        \item If $r_\a(x) \neq \b$, then $r_\a(x)$ remains unchanged.  
    \end{enumerate}
\end{Theorem}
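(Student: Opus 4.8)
The plan is to prove this by the same orientation-based reasoning used for the Insertion Update Theorem, exploiting the symmetry between deletion and insertion. The key idea is that deleting an edge $(u,v)$ is the reverse of inserting it: if we know that inserting an edge into the graph $G' = G \setminus \{(u,v)\}$ yields $G$ and affects only the $r_\a$-values at one ``level'' $\b'$, then deleting $(u,v)$ from $G$ can only move $r_\a$-values at the corresponding level back down. Concretely, I would first set $\b = \min\{r_\a(u), r_\a(v)\}$ and apply the Insertion Update Theorem in reverse: viewing $G$ as obtained from $G' = G \setminus \{(u,v)\}$ by inserting $(u,v)$, the insertion theorem tells us that the $r_\a$-values in $G$ and $G'$ differ by at most $1$, and only at a single level. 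The main work is to identify that this level is exactly $\b$ as defined here, and to verify that $r_\a$ can only \emph{decrease} under deletion (since removing an edge cannot increase the density of any subgraph, hence cannot increase any node's $r_\a$ by the monotonicity inherent in Definition~\ref{ab-dense-subgraph} and Theorem~\ref{hierarchy}).

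For the forward (direct) argument, I would mirror the case analysis of the insertion proof using orientations. Let $\vec{E}$ be an egalitarian orientation of $G$ with respect to $\a$ (per Definition~\ref{egalitarian-orientation}, introduced later). Observe that the edge $(u,v)$ in $\vec{E}$ is oriented either toward $u$ or toward $v$; in either case, removing it decreases exactly one indegree, so $D_{\a,\b'}$ can change only for the single value $\b'$ corresponding to the node whose reachability-status flips. The quantity $\b = \min\{r_\a(u), r_\a(v)\}$ captures precisely this threshold: a node $x \in V$ with $r_\a(x) = \b$ sits at the boundary of $D_{\a,\b}$ versus $D_{\a,\b+1}$ in $G$, and deleting $(u,v)$ may push it out of $D_{\a,\b}$, lowering its rank to $\b-1$; nodes with $r_\a(x) \neq \b$ are either strictly inside a denser core (protected, since removing one edge cannot destroy the ``inside dense'' property of Theorem~\ref{The:edges} for subgraphs strictly contained in $D_{\a,\b+1}$) or already outside $D_{\a,\b}$ (unaffected, since $D_{\a,\b'}$ for $\b' > \b$ is unchanged and sparsity is preserved). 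For each $\b' \in [-1,\b-1] \cup [\b+1, +\infty)$, I would exhibit an orientation of $G \setminus \{(u,v)\}$ satisfying the $S$-to-$T$ unreachability condition of Definition~\ref{ab-dense-subgraph} with parameters $(\a, \b')$ — obtained by deleting the appropriate oriented copy of $(u,v)$ and, if necessary, performing a single edge reversal along a path to restore the indegree constraints — thereby certifying that $D_{\a,\b'}$ is unchanged.

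I expect the main obstacle to be the bookkeeping in the case where the deleted edge's orientation forces an indegree to drop below the threshold at a node other than a boundary node, requiring a chain of edge reversals (a path augmentation in the orientation) to repair the orientation; one must argue that such a repair path always exists and that it does not itself change any $D_{\a,\b'}$ for $\b'\neq \b$. This is exactly the kind of path-reversal argument underlying flow/orientation duality, so it should go through, but it requires care to ensure the reversal terminates at a node whose indegree was strictly above threshold (which is where egalitarianness of the orientation is used). A secondary, purely technical point is handling the degenerate cases $r_\a(u) = -1$ or $r_\a(v) = -1$ (so $\b = -1$), where the claim reduces to showing no node's $r_\a$ changes — immediate, since a node with rank $-1$ contributes nothing to any $D_{\a,\b'}$ with $\b' \geq 0$, and the deleted edge is incident to such a node.
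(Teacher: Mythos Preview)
Your direct orientation argument (second paragraph) is essentially the paper's proof: take an egalitarian orientation of $G$ for $\a$, case on the direction of $(u,v)$, delete it, and for each $\b'\in[-1,\b-1]\cup[\b+1,+\infty)$ check that the resulting orientation still certifies $D_{\a,\b'}$. One simplification worth knowing: your anticipated obstacle of needing ``a chain of edge reversals'' does not materialize---a \emph{single} edge reversal suffices. When the edge is oriented toward $u$ (so deleting it drops $\vec{d}_u$ to $\a-1$), the paper picks an out-neighbor $v_2$ of $u$ with $r_\a(v_2)=r_\a(u)$ (existence from \theref{u-update-theorem}) and reverses $(u,v_2)$; this restores $\vec{d}_u=\a$ and perturbs only $\vec{d}_{v_2}$ by one, which is enough for all $\b'\neq\b$. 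Your first-paragraph ``inverse of insertion'' route is a legitimate alternative, but as you yourself flag, matching the insertion theorem's $\b=\min\{r^N,r_\a(v)\}$ (computed in $G'$) with the deletion theorem's $\b=\min\{r_\a(u),r_\a(v)\}$ (computed in $G$) requires exactly the kind of rank analysis that the direct argument gives you for free, so the paper's direct route is cleaner.
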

\fullversion{
\begin{proof}
    Let $\vec{E}$ be an egalitarian orientation of the given graph $G$ and alpha $\a$. We divide the proof into three cases.

    \textit{Case 1:} When $|N(u)| \le \a$, it follows from \theref{degree} that $r_\a(u) = -1$. According to the definition of egalitarian orientation, all neighbors of $u$ are directed toward $u$ in $\vec{E}$, so the directed edge $(v,u)$ can be directly removed. It is clear that the orientation remains egalitarian, and all nodes’ $r_\a$ values remain unchanged, which completes the proof.

    \textit{Case 2:} When $|N(u)| > \a$ and the edge $(u,v)$ is directed toward $v$ in $\vec{E}$. According to \lemref{egalitarian-property} (we will prove later), we have $r_\a(u) \ge r_\a(v)$, and thus $\b = r_\a(v)$. We directly remove the edge $(u,v)$ from $\vec{E}$. For any beta value $\b' \in [-1, \b-1] \cup [\b+1, +\infty)$, it is easy to verify that $\vec{E}$ still satisfies the condition in \defref{ab-dense-subgraph} with alpha value $\a$ and beta value $\b'$. This implies that only the subgraph $D_{\a,\b}$ is affected by the edge deletion. Since deleting an edge cannot increase any node’s $\alpha$-rank, it follows that only the nodes with $r_\a = \b$ may have their rank reduced to $\b - 1$, which proves the case.

    \textit{Case 3:} When $|N(u)| > \a$ and the edge $(u,v)$ is directed toward $u$ in $\vec{E}$. According to \lemref{egalitarian-property}, we have $r_\a(u) \le r_\a(v)$, and thus $\b = r_\a(u)$. From the proof of \theref{u-update-theorem}, there exists a neighbor $v_2 \in N(u)$ such that $(u, v_2)$ is in $\vec{E}$ and $r_\a(v_2) = r_\a(u)$. We first remove the edge $(v, u)$ from $\vec{E}$, and then reverse the edge $(u, v_2)$ to become $(v_2, u)$. For any beta value $\b' \in [-1, \b-1] \cup [\b+1, +\infty)$, it is easy to verify that the updated $\vec{E}$ still satisfies the condition in \defref{ab-dense-subgraph} with alpha value $\a$ and beta value $\b'$. This shows that only the nodes in $V$ with $r_\a = \b$ may have their rank decreased to $\b - 1$, which completes the proof.
\end{proof}
}

The above two update theorems indicate that when an edge is inserted or deleted, there exists a value $\b$ such that in $V$, only nodes with $r_\a = \b$ require updates, and their $r_\a$ values change by at most $1$ (note that this property does not hold for nodes in $U$). In the next theorem, we describe how to compute the $r_\a$ values of nodes in $U$ given the $r_\a$ values of nodes in $V$. Based on this theorem, we can first update the $r_\a$ values of nodes in $V$ and then use these updated values to update the $r_\a$ values of nodes in $U$, forming the foundation of our maintenance algorithms.

\begin{Theorem} \label{u-update-theorem}
    Given a graph $G$ and a value $\a$, for any $u \in U$, we have:  
    \begin{enumerate}
        \item If $|N(u)| \leq \a$, then $r_\a(u) = -1$.  
        \item If $|N(u)| > \a$, then $r_\a(u)$ is equal to the $r_\a$ value of the $(\a+1)$-th highest-ranked node in $N(u)$.  
    \end{enumerate}
\end{Theorem}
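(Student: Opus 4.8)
The plan is to derive both cases directly from the inside-dense / outside-sparse characterisation of $D_{\a,\b}$ in \theref{The:edges}, combined with \theref{hierarchy}, \theref{rank-theorem}, and \lemref{degree}; no orientation or network-flow argument is required, which keeps the proof short.

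Case 1 is immediate. If $|N(u)|\le\a$ then $d_u(G)\le\a$, so by \lemref{degree} the node $u$ lies in no $D_{\a,\b}$ with $\b\ge 0$; in particular $u\notin D_{\a,0}$, and the definition of $\a$-rank gives $r_\a(u)=-1$.

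For Case 2, list the neighbours of $u$ as $v_1,\dots,v_m$ (with $m=|N(u)|>\a$) in non-increasing order of $\a$-rank, and put $r^\ast=r_\a(v_{\a+1})$. I would establish $r_\a(u)=r^\ast$ via two inequalities. \emph{Upper bound} $r_\a(u)\le r^\ast$: assuming $r_\a(u)\ge r^\ast+1$ gives $u\in D_{\a,r^\ast+1}$ by \theref{hierarchy}; applying the inside-dense part of \theref{The:edges} with $X=\{u\}$, the removal of $u$ from $D_{\a,r^\ast+1}$ destroys more than $\a\cdot 1+(r^\ast+1)\cdot 0=\a$ edges, and since $u\in U$ these are precisely the edges joining $u$ to $N(u)\cap D_{\a,r^\ast+1}$, so $u$ has at least $\a+1$ neighbours of $\a$-rank $\ge r^\ast+1$ (via \theref{rank-theorem}) --- contradicting the choice of $r^\ast$, which allows at most $\a$ such neighbours. \emph{Lower bound} $r_\a(u)\ge r^\ast$: trivial when $r^\ast=-1$; when $r^\ast\ge 0$, each of $v_1,\dots,v_{\a+1}$ lies in $D_{\a,r^\ast}$ by \theref{hierarchy}, so $D_{\a,r^\ast}\ne\emptyset$ and $u$ has at least $\a+1$ neighbours in it, whence if $u\notin D_{\a,r^\ast}$ the outside-sparse part of \theref{The:edges} with $Y=\{u\}$ would bound the edge increase from adding $u$ by $\a\cdot1+r^\ast\cdot0=\a$, although adding $u$ in fact creates all $\ge\a+1$ edges from $u$ to its neighbours in $D_{\a,r^\ast}$ --- a contradiction, so $u\in D_{\a,r^\ast}$ and $r_\a(u)\ge r^\ast$. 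The two bounds together give $r_\a(u)=r^\ast$.

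The argument is largely mechanical once \theref{The:edges} is in hand, so I do not expect a genuinely hard step; the points needing care are the bookkeeping that identifies the edges created or destroyed with the neighbour count $|N(u)\cap D_{\a,\cdot}|$ (where bipartiteness and $u\in U$ are essential), and the boundary cases --- ensuring $r^\ast+1\ge 0$ so $D_{\a,r^\ast+1}$ is well-defined, treating $r^\ast=-1$ separately in the lower bound, and using $v_{\a+1}\in D_{\a,r^\ast}$ to confirm $D_{\a,r^\ast}\ne\emptyset$ before invoking \theref{The:edges}.
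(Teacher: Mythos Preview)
Your proof is correct and takes a genuinely different route from the paper. The paper's argument works via the egalitarian orientation (\defref{egalitarian-orientation}) and its structural property \lemref{egalitarian-property}: in such an orientation exactly $\a$ neighbours point into $u$ (each with rank $\ge r_\a(u)$), and among the neighbours $u$ points to there is one with rank exactly $r_\a(u)$; combining these shows the $(\a+1)$-th largest neighbour rank equals $r_\a(u)$. Your argument bypasses orientations entirely and instead applies the inside-dense/outside-sparse characterisation of \theref{The:edges} to the singleton $\{u\}$, yielding matching upper and lower bounds on $r_\a(u)$ by simple edge-counting.

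What each approach buys: your route is more elementary and self-contained, relying only on \theref{The:edges}, \theref{hierarchy}, \theref{rank-theorem}, and \lemref{degree} --- all of which appear before \theref{u-update-theorem} in the paper --- whereas the paper's proof forward-references \defref{egalitarian-orientation} and \lemref{egalitarian-property} from Section~5. On the other hand, the paper's orientation-based argument exposes additional structure (which in-neighbours have high rank, and that some out-neighbour attains rank exactly $r_\a(u)$) that is reused in the correctness proofs of \kw{BD-Insert-T} and \kw{BD-Delete-T}; your proof establishes the value of $r_\a(u)$ cleanly but does not yield that finer information.
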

\fullversion{
\begin{proof}
    When $|N(u)| \le \a \Rightarrow d_u \le \a$, it follows from \lemref{degree} that $u \notin D_{\a,0}$, and thus $r_\a(u) = -1$. Otherwise, if $|N(u)| > \a$, let $\vec{E}$ be an egalitarian orientation of the given graph $G$ with respect to $\alpha$. According to the definition of egalitarian orientation, the indegree of $u$ is exactly $\a$, so there are $\a$ neighbors in $N(u)$ with directed edges pointing to $u$. By \lemref{egalitarian-property}, these neighbors must have $\alpha$-rank values greater than or equal to $r_\a(u)$.  

    In addition, since $r_\a(u)$ is the $\alpha$-rank of $u$, $u$ must be able to reach a node $v_1 \in V$ whose indegree is greater than $r_\a(u)$. Let $(u, v_2)$ be the first edge on the path $u\rightsquigarrow v_1$. It follows that $r_\a(v_2) = r_\a(u)$, and by contradiction, we can show that $u$ cannot reach any node with an $\alpha$-rank greater than $r_\a(u)$.  

    In summary, $N(u)$ contains $\a$ neighbors pointing to $u$, all of whose $\alpha$-ranks are greater than or equal to $r_\a(u)$. On the other hand, among the neighbors pointed to by $u$, the one with the highest $\alpha$-rank is $v_2$, whose rank is exactly $r_\a(u)$. Therefore, the $(\a+1)$-th largest $\alpha$-rank among the neighbors of $u$ is $r_\a(u)$, which completes the proof. 
\end{proof}
}

\fullversion{\begin{Example} \label{update-example}
    For the insertion case, consider the scenario where $\a=0$ and we insert $(u_6, v_3)$ into $G$ in \figref{example-graph-1}. By \theref{insertion-update-theorem}, we first examine $N(u_6) \cup \{v_3\} = \{v_3, v_5\}$. From the \kw{BD-Index} in \figref{bd-index-example}, we have $r_\a(v_3) = 3$ and $r_\a(v_5) = 1$, thus obtaining $r^N = r_\a(v_3) = 3$. This results in $\b = 3$, indicating that nodes in $V$ with $r_\a = 3$, namely $v_2$ and $v_3$, may have their $r_\a$ values updated to $4$, while the $r_\a$ values of all other nodes in $V$ remain unchanged. In fact, after inserting $(u_6, v_3)$, only $r_\a(v_3)$ increases to $4$. Next, we update the $r_\a$ values of nodes in $U$ according to \theref{u-update-theorem}. As a result, the $r_\a$ values of $\{u_1, u_2, u_3, u_4, u_6\}$ are updated to $4$. Thus, we obtain the updated $r_\a$ values for all nodes.  
    
    For the deletion case, suppose that $\a=0$ and we delete $(u_5, v_4)$ from $G$. By \theref{deletion-update-theorem}, we compute $\b = \min\{2, 2\} = 2$. This means that nodes in $V$ with $r_\a = 2$, namely $v_1$ and $v_4$, may have their $r_\a$ values updated to $1$. In fact, after this deletion, only $v_4$ undergoes an update, with $r_\a(v_4)$ decreasing to $1$, while the $r_\a$ values of all other nodes in $V$ remain unchanged. Next, according to \theref{u-update-theorem}, the $r_\a$ value of $u_5$ is updated to $-1$.
\end{Example}}

\comment{\stitle{Discussion.} According to \theref{insertion-update-theorem} and \theref{deletion-update-theorem}, after a single edge insertion or deletion, the $r_\a$ values of nodes in $V$ change by at most $1$. However, as shown in \expref{update-example}, the $r_\a$ values of nodes in $U$ do not necessarily follow the same pattern. For example, $r_\a(u_6)$ changes from $1$ to $4$, and $r_\a(u_5)$ changes from $2$ to $-1$. This significant change of $r_\a$ is essentially different from rank updates in traditional unipartite graphs \cite{dd}, where the rank of any node can only change by at most $1$. As a result, maintaining rank updates in bipartite graphs exhibits inherent asymmetry and complexity, making it a significant challenge.}

\begin{algorithm}[t]
    \caption{\invoke{BD-Insert-S}$(G,\mathbb{I}_{BD},u,v)$} \label{bd-insert-s}
    \small
    \KwIn{A bipartite graph $G$ and its \kw{BD-Index} $\mathbb{I}_{BD}$, the edge $(u,v)$ to be inserted.}
    \KwOut{The updated $G$ and $\mathbb{I}_{BD}$.}
    \For{$\a=0,1,\ldots,p$}{
        Compute $\b$ according to \theref{insertion-update-theorem}\;
        Invoke \kw{DSS++} algorithm in \cite{ddbipartite} to compute the $D_{\a,\b+1}$ of $G\cup\{(u,v)\}$\;
        \ForEach{$x\in V\cap D_{\a,\b+1}$ and $r_\a(x)=\b$}{
            $r_\a(x)\gets\b+1$, update $\mathbb{I}_{BD}^U$ accordingly\;
        }
        \ForEach{$x\in U$}{
            Compute $r_\a(x)$ according to \theref{u-update-theorem}, update $\mathbb{I}_{BD}^U$ accordingly\;
        }
    }
    Update $\mathbb{I}_{BD}^V$ similarly\;
    $G\gets G\cup\{(u,v)\}$\;
    \Return{$G$, $\mathbb{I}_{BD}$}\;
\end{algorithm}

\subsection{The incremental maintenance algorithms}
Building on \theref{insertion-update-theorem}, \theref{deletion-update-theorem}, and \theref{u-update-theorem}, we present the incremental maintenance algorithms, \textsf{BD-Insert-S} and \textsf{BD-Insert-D}, to handle edge insertion and deletion, as depicted in \algref{bd-insert-s} and \algref{bd-delete-s}.

\stitle{The \textsf{BD-Insert-S} algorithm for edge insertion.} In \algref{bd-insert-s}, \kw{BD-Insert-S} processes each $\a$ iteration independently (line 1). First, it determines the value $\b$ according to \theref{insertion-update-theorem} (line 2). Since nodes with $r_\a = \b$ may increment their $r_\a$ to $\b+1$, the algorithm computes $D_{\a,\b+1}$ to identify these affected nodes (line 3). It then iterates over all $x \in V$: if $x \in D_{\a,\b+1}$ and $r_\a(x) = \b$, it updates $r_\a(x)$ to $\b+1$ and maintains $\mathbb{I}_{BD}^U$ accordingly (lines 4-5). Once all nodes in $V$ are processed, the algorithm applies \theref{u-update-theorem} to update the nodes in $U$ (lines 6-7). After processing all $\a$ values, the algorithm completes updating $\mathbb{I}_{BD}^U$ and symmetrically maintains $\mathbb{I}_{BD}^V$ (line 8). Finally, the updated $G$ and \kw{BD-Index} are returned (line 10).

\stitle{The \textsf{BD-Delete-S} algorithm for edge deletion.} The workflow of \textsf{BD-Delete-S} mirrors \textsf{BD-Insert-S}. The algorithm first iterates over each $\a$ (line~1) and derives $\b$ via \theref{deletion-update-theorem} (line 2). To identify nodes with $r_\a = \b$ that require updating to $\b - 1$, it computes $D_{\a,\b}$ after the edge deletion (line 3). For each node $x \in V$, if $r_\a(x) = \b$ and $x \notin D_{\a,\b}$, the algorithm reduces $r_\a(x)$ to $\b - 1$ and updates $\mathbb{I}_{BD}^U$ (lines 4-5). Subsequently, it updates the $r_\a$ values of nodes in $U$ following \theref{u-update-theorem} (lines 6-7). After processing all $\a$, the algorithm finalizes updates to $\mathbb{I}_{BD}^U$ and symmetrically maintains $\mathbb{I}_{BD}^V$ (line~8). Finally, the updated $G$ and $\mathbb{I}_{BD}$ are returned (line 10).

The correctness of \kw{BD-Insert-S} and \kw{BD-Delete-S} is directly established by \theref{insertion-update-theorem} and \theref{deletion-update-theorem}. Next, we analyze their time and space complexities.

\begin{Theorem}
    The \kw{BD-Insert-S} and \kw{BD-Delete-S} algorithms have a time complexity of $O(p \cdot |E|^{1.5})$ and a space complexity of $O(|E|)$.
\end{Theorem}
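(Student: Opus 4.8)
The plan is to bound the cost of a single pass of the outer \textbf{for} loop over $\a$ (lines 1--7 of \algref{bd-insert-s}, and analogously for \kw{BD-Delete-S}) by $O(|E|^{1.5})$, and then multiply by the $O(p)$ iterations required to rebuild $\mathbb{I}_{BD}^U$ plus the symmetric $O(p)$ iterations for $\mathbb{I}_{BD}^V$. Correctness is already guaranteed by \theref{insertion-update-theorem}, \theref{deletion-update-theorem}, and \theref{u-update-theorem}, so the argument is purely about running time and working space.

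Fix $\a$. Computing $\b$ (line 2) is a selection of the $(\a+1)$-th largest $r_\a$ value among $N(u)\cup\{v\}$, costing $O(d_u)=O(|E|)$ once the current $r_\a$ values are stored at the nodes. The single call to \kw{DSS++} on $G\cup\{(u,v)\}$ (resp.\ $G\setminus\{(u,v)\}$) with parameters $(\a,\b+1)$ (resp.\ $(\a,\b)$) runs in $O((|E|+1)^{1.5})=O(|E|^{1.5})$ by the stated complexity of \kw{DSS++}. The loop over $x\in V$ (lines 4--5) visits each node of $V$ at most once, and by the update theorems only the nodes with $r_\a=\b$ change, each by exactly one unit; the loop over $x\in U$ (lines 6--7) sets $r_\a(u)$ to the $(\a+1)$-th largest $r_\a$ among $N(u)$ via \theref{u-update-theorem}, costing $O(d_u)$ per node and hence $O(\sum_{u\in U} d_u)=O(|E|)$ in total.

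The one point that needs care is the phrase ``update $\mathbb{I}_{BD}^U$ accordingly'': although the $r_\a$ value of a $U$-node may jump by more than $1$ (unlike the unipartite case), once the new $r_\a$ values are known we simply \emph{rebuild} the node list for the current $\a$ and reset all pointers $\mathbb{I}_{BD}^U[\a][\cdot]$ from scratch. That list has $|D_{\a,\a}|$ nodes whose $r_\a$ values lie in $\{-1,\a,\a+1,\dots,r_\a^{\max}\}$ with $r_\a^{\max}<|\{u\in U:d_u>\a\}|\le|E|$ (the same estimate used when bounding the number of pointers in the space-complexity proof of \kw{BD-Index}), so a bucket/counting sort produces the sorted list together with all pointers in $O(|D_{\a,\a}|+r_\a^{\max})=O(|E|)$ time. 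Hence one iteration costs $O(|E|^{1.5}+|E|)=O(|E|^{1.5})$, and over the $O(p)$ iterations for $\mathbb{I}_{BD}^U$ and the symmetric $O(p)$ iterations for $\mathbb{I}_{BD}^V$ the total time is $O(p\cdot|E|^{1.5})$.

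For the space bound, no step allocates more than $O(|E|)$ additional memory: $G$ and $\mathbb{I}_{BD}$ occupy $\Theta(|E|)$ by the earlier theorem; each \kw{DSS++} invocation builds a flow network of size $O(|E|)$ that is discarded afterwards; and the bucket-sort scratch arrays are $O(|E|)$ and reused across iterations. Thus the space complexity is $O(|E|)$. I expect the only real obstacle to be making the index-bookkeeping step rigorous, i.e.\ showing that arbitrarily large rank changes of $U$-nodes can still be absorbed within the linear budget; the rebuild-by-bucket-sort argument above settles it, provided one records that the range of $r_\a$ is $O(|E|)$ — exactly the inequality already proved for \kw{BD-Index}.
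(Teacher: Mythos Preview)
Your proof is correct and follows the same approach as the paper: both identify the \kw{DSS++} invocation as the dominant $O(|E|^{1.5})$ cost per iteration and multiply by $O(p)$ iterations, with $O(|E|)$ space coming from \kw{DSS++}'s working memory. Your version is actually more thorough than the paper's, which simply asserts that \kw{DSS++} dominates without explicitly bounding lines 4--7 or the index-bookkeeping; your bucket-sort rebuild argument for ``update $\mathbb{I}_{BD}^U$ accordingly'' fills a gap the paper leaves implicit.
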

\fullversion{\begin{proof}
    In the \textsf{BD-Insert-S} algorithm, the dominant computational cost lies in invoking \textsf{DSS++}, which has a worst-case time complexity of $O(|E|^{1.5})$ per call \cite{ddbipartite}. As \textsf{BD-Insert-S} executes \textsf{DSS++} $O(p)$ times, the algorithm derives its overall time complexity of $O(p \cdot |E|^{1.5})$. Regarding space complexity, the most memory-intensive operation is \kw{DSS++}, which requires $O(|E|)$ space \cite{ddbipartite}. Therefore, the space complexity of \kw{BD-Insert-S} is $O(|E|)$. The \textsf{BD-Delete-S} algorithm operates analogously to \textsf{BD-Insert-S}; applying the same analysis, its time complexity is $O(p \cdot |E|^{1.5})$, and its space complexity is $O(|E|)$.
\end{proof}}

\begin{algorithm}[t]
    \caption{\invoke{BD-Delete-S}$(G,\mathbb{I}_{BD},u,v)$} \label{bd-delete-s}
    \small
    \KwIn{A bipartite graph $G$ and its \kw{BD-Index} $\mathbb{I}_{BD}$, the edge $(u,v)$ to be deleted.}
    \KwOut{The updated $G$ and $\mathbb{I}_{BD}$.}
    \For{$\a=0,1,\ldots,p$}{
        Compute $\b$ according to \theref{deletion-update-theorem}\;
        Invoke \kw{DSS++} algorithm in \cite{ddbipartite} to compute the $D_{\a,\b}$ of $G\setminus\{(u,v)\}$\;
        \ForEach{$x\in V\setminus D_{\a,\b}$ and $r_\a(x)=\b$}{
            $r_\a(x)\gets\b-1$, update $\mathbb{I}_{BD}^U$ accordingly\;
        }
        \ForEach{$x\in U$}{
            Compute $r_\a(x)$ according to \theref{u-update-theorem}, update $\mathbb{I}_{BD}^U$ accordingly\;
        }
    }
    Update $\mathbb{I}_{BD}^V$ similarly\;
    $G\gets G\setminus\{(u,v)\}$\;
    \Return{$G$, $\mathbb{I}_{BD}$}\;
\end{algorithm}

Compared to the baseline method of recomputing from scratch, \kw{BD-Insert-S} and \kw{BD-Delete-S} retain the space-efficient advantage by retaining $O(|E|)$ space. Meanwhile, they reduce the time complexity of updating a single edge from $O(p\cdot  |E|^{1.5}\cdot \log|U\cup V|)$ to $O(p\cdot |E|^{1.5})$. Unlike the baseline, which recomputes all layers, \kw{BD-Insert-S} and \kw{BD-Delete-S} update only a single node layer in $V$, resulting in superior performance. As shown by our experiments, they achieve approximately one order of magnitude of speedup compared to the baseline approach.

\section{TIME-EFFICIENT INDEX MAINTENANCE}
In this section, we focus on developing time-efficient algorithms for maintaining \kw{BD-Index}. Recall that if an orientation satisfies the structural constraints in \defref{ab-dense-subgraph}, the corresponding dense subgraph can be derived dirctly. This insight motivates our idea: by maintaining the orientation, we can efficiently compute the dense subgraph (i.e., node ranks), thereby enabling efficient \kw{BD-Index} maintenance. Building on this, we first define the concept of egalitarian orientation (\defref{egalitarian-orientation}), and then show how to efficiently compute the \kw{BD-Index} from egalitarian orientation (\algref{orientationtorank}), thereby transforming the task of maintaining the \kw{BD-Index} into maintaining the egalitarian orientation. We then present the algorithms \kw{BD-Insert-T} and \kw{BD-Delete-T}, which are designed to maintain the egalitarian orientations efficiently. 

\subsection{A novel concept: egalitarian orientation}
In this subsection, we introduce the concept of egalitarian orientation based on $\a$ for maintaining $\mathbb{I}_{BD}^U$. The maintenance of $\mathbb{I}_{BD}^V$ follows symmetrically by replacing $\alpha$ with $\beta$ and swapping the constraints between $U$ and $V$ in Definition \ref{egalitarian-orientation}.

\begin{Definition} \label{egalitarian-orientation}
    \mdbf{(Egalitarian orientation)} Given a bipartite graph $G$ and an alpha value $\a$, an orientation $\vec{E}$ is called an \emph{egalitarian orientation} if it satisfies the following conditions:  
    \begin{enumerate}
        \item For any node $u \in U$, if $d_u(G) > \a$, then $\vec{d}_u(\vec{E}) = \a$; otherwise, if $d_u(G) \le \a$, then $\vec{d}_u(\vec{E}) = d_u(G)$.  
        \item There exists no path $v_s \rightsquigarrow v_t$ in $\vec{E}$ such that $v_s, v_t \in V$ and $\vec{d}_{v_t}(\vec{E}) - \vec{d}_{v_s}(\vec{E}) \ge 2$.
    \end{enumerate}
\end{Definition}


For example, the orientation shown in \subfigref{example-graph-2} is an egalitarian orientation given $\a = 1$. Each node in $U$ has an indegree of exactly $\a = 1$, and no path $v_s \rightsquigarrow v_t$ exists among nodes in $V$ that violates the condition of Definition~\ref{egalitarian-orientation}. 

\begin{algorithm}[t]
    \caption{\invoke{OrientationToRank}$(\a,\vec{E})$} \label{orientationtorank}
    \small
    \KwIn{An alpha value $\a$ and an egalitarian orientation.}
    \KwOut{The $\a$-rank of all nodes.}
    $\vec{d}_{\max} \gets \max_{v \in V} \vec{d}_v(\vec{E})$, $vis\gets \emptyset$\;
    \ForEach{$k=\vec{d}_{\max}-1,\vec{d}_{\max}-2,\ldots,0$}{
        $T\gets\{v\in V\setminus vis\mid\vec{d}_v(\vec{E})=k+1\}$\;
        \ForAll{$x\in(U\cup V)\setminus vis$, $x\in T$ or $x$ can reach a node in $T$}{
            $r_\a(x)\gets k$, $vis\gets vis\cup \{x\}$\;
        }
    }
    \lForAll{$x\in (U\cup V)\setminus vis$}{$r_\a(x)\gets -1$}
    \Return{$r_\a$}\;
\end{algorithm}


\stitle{Intuition behind egalitarian orientation.} The term ``egalitarian'' in egalitarian orientation refers to the balancing of indegrees among nodes in $V$. Suppose there exists a path $v_s \rightsquigarrow v_t$ in an orientation $\vec{E}$ such that $v_s, v_t \in V$ and $\vec{d}_{v_t}(\vec{E}) - \vec{d}_{v_s}(\vec{E}) \ge 2$. By reversing this path (i.e., reversing the direction of all edges along the path), the indegree of $v_t$ decreases by one, the indegree of $v_s$ increases by one, and the indegrees of all other nodes remain unchanged. This reversing operation balances the indegrees of $v_t$ and $v_s$, leading to a more even distribution of indegrees among nodes in $V$. An egalitarian orientation guarantees that no such path exists, intuitively meaning the indegrees of nodes in $V$ are already as ``egalitarian'' as possible. 

Next, we introduce how to compute the rank of nodes based on the egalitarian orientation. The algorithm \kw{OrientationToRank} for computing $\alpha$-rank from an egalitarian orientation is shown in \algref{orientationtorank}. The algorithm first computes the maximum indegree $\vec{d}_{\max}$ among all nodes in $V$, and initializes the set $vis$ to record visited nodes (line 1). In each round of the ``foreach'' loop (line 2), the algorithm identifies the nodes whose $r_\alpha$ is equal to $k$. Specifically, it first collects nodes in $V \setminus vis$ with indegree equal to $k+1$ into a set $T$ (line 3). Then, for each node in $T$ and those that can reach nodes in $T$, their $r_\alpha$ values are set to $k$, and they are added to the visited set $vis$ (lines 4--5). Finally, all unvisited nodes are assigned $r_\alpha = -1$ (line 6), and the algorithm returns the $r_\alpha$ values for all nodes (line 7). Below is an example of running algorithm \kw{OrientationToRank}.

\begin{figure}[t]
    \centering
    \includegraphics[width=0.95\linewidth]{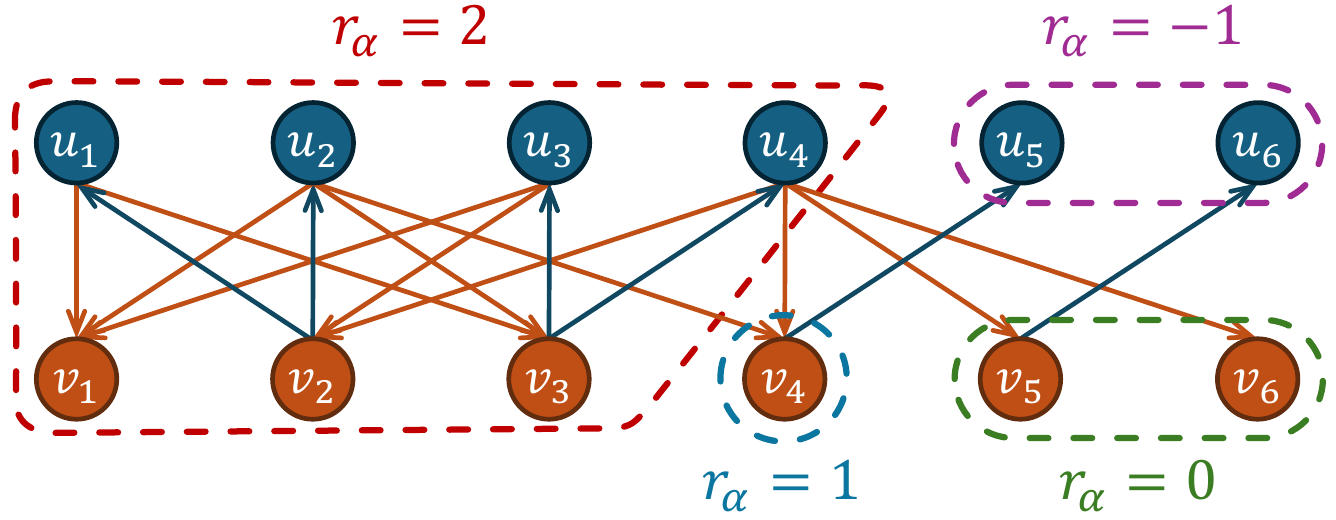}
    \vspace*{-0.3cm}
    \caption{Example of computing $r_\a$ from an egalitarian orientation.} \label{orientation-to-rank-example}
\end{figure}

\fullversion{\begin{Example}
    We set $\alpha = 1$ and use the egalitarian orientation $\vec{E}$ in \subfigref{example-graph-2} as the input to run the \kw{OrientationToRank} algorithm. The procedure is illustrated in \figref{orientation-to-rank-example}. The algorithm first determines that $\vec{d}_{\max} = \vec{d}_{v_1}(\vec{E}) = 3$. Then, it enters the loop with $k = 2$ and identifies all nodes in $V \setminus vis$ with indegree 3, i.e., $T = \{v_1\}$. Next, the algorithm computes the set of nodes that can reach $v_1$, which is $\{v_2, v_3, u_1, u_2, u_3, u_4\}$. As a result, all nodes in $\{v_1, v_2, v_3, u_1, u_2, u_3, u_4\}$ are assigned $r_\alpha = 2$ and added to the set $vis$. In the next iteration with $k = 1$, the algorithm considers the nodes outside of $vis$ and computes $T = \{v_4\}$. Since no nodes can reach $v_4$, the only node with $r_\alpha = 1$ is $v_4$. In the following iteration with $k = 0$, we have $T = \{v_5, v_6\}$. Again, there are no nodes that can reach $v_5$ or $v_6$, so these two nodes are the only ones with $r_\alpha = 0$. Finally, the nodes that have not been visited, i.e., $\{u_5, u_6\}$, are assigned $r_\alpha = -1$.
\end{Example}}

Next, we prove the correctness and analyze the time complexity of the \kw{OrientationToRank} algorithm.

\begin{Theorem} \label{orientation-to-rank}
    The \kw{OrientationToRank} algorithm can correctly return $\a$-rank within $O(|E|)$ time.
\end{Theorem}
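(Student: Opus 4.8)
The plan is to establish two things: (i) correctness, i.e., that the value $r_\a(x)$ returned by \kw{OrientationToRank} equals the true $\a$-rank of $x$; and (ii) the $O(|E|)$ running time. For correctness I would proceed by downward induction on $k$ from $\vec{d}_{\max}-1$ to $0$, showing that in the iteration for a given $k$, the set of nodes assigned rank exactly $k$ is precisely $D_{\a,k}\setminus D_{\a,k+1}$. The key observation is that, given an egalitarian orientation $\vec{E}$ with respect to $\a$, the orientation $\vec{E}$ simultaneously ``witnesses'' $D_{\a,\b}$ for \emph{every} $\b$: for a fixed $\b$, the set $S=\{u\in U\mid \vec d_u<\a\}\cup\{v\in V\mid \vec d_v<\b\}$ and $T=\{u\in U\mid \vec d_u>\a\}\cup\{v\in V\mid \vec d_v>\b\}$; condition~(1) of \defref{egalitarian-orientation} forces $\vec d_u=\a$ for every $u\in U$ with $d_u>\a$ (so no $u$ lies in the $U$-part of $S$ or $T$, the $U$-part of $S$ being only the low-degree nodes which cannot reach $T$ anyway), and condition~(2) forbids a path from a node of indegree $\le \b-1$ in $V$ to a node of indegree $\ge \b+1$ in $V$, which is exactly the ``no $s\rightsquigarrow t$'' requirement of \defref{ab-dense-subgraph}. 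Hence $D_{\a,\b}=T\cup\{x\mid x\text{ reaches }T\}=\{v\in V:\vec d_v>\b\}\cup\{x:x\text{ reaches some }v\in V\text{ with }\vec d_v>\b\}$ for all $\b$, using the same orientation.

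With this uniform-witness fact in hand, I would argue the loop invariant: before the iteration for level $k$, $vis$ contains exactly $D_{\a,k+1}$, and every node already in $vis$ has been assigned its correct rank ($\ge k+1$). In the iteration for $k$, the set $T$ defined in line~3 is $\{v\in V\setminus vis:\vec d_v=k+1\}$; since nodes of indegree $\ge k+2$ are already in $vis=D_{\a,k+1}$, the nodes of $V$ with indegree $>k$ that are \emph{not} yet visited are exactly those with indegree $=k+1$, so $T\cup(vis\cap V)=\{v\in V:\vec d_v>k\}$. The nodes reaching $\{v\in V:\vec d_v>k\}$ form $D_{\a,k}$ by the witness fact; restricting to $(U\cup V)\setminus vis$ and noting reachability into already-visited nodes of $D_{\a,k+1}$ also lands one in $D_{\a,k}$ (monotone), I would show the nodes newly assigned rank $k$ in lines~4--5 are exactly $D_{\a,k}\setminus D_{\a,k+1}$, which by \defref{ab-dense-subgraph} and the rank definition is precisely $\{x:r_\a(x)=k\}$. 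After the loop, the remaining unvisited nodes are those not in $D_{\a,0}$, correctly assigned $-1$ in line~6. A small technical point to handle carefully: the reachability computations in successive iterations must not re-explore $vis$, and I must confirm that a node reaching $T$ \emph{through} an already-visited node is still captured — this is fine because such a node reaches some $v\in V$ with $\vec d_v>k$ directly in the residual graph on $(U\cup V)\setminus vis$ only if... actually it may not, so the cleanest fix is to phrase line~4 as reachability in the \emph{whole} orientation but only \emph{assigning} ranks to unvisited nodes; I would make this reading explicit. This bookkeeping about what ``can reach a node in $T$'' means relative to $vis$ is the main obstacle.

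For the time bound, the sets $T$ across all iterations partition a subset of $V$, and each node of $U\cup V$ is added to $vis$ at most once, so it is explored (as a target of reachability) in only one iteration. The total work is therefore a single backward reachability computation amortized across all levels: summing the out-edges examined when processing each newly-visited node gives $O(|E|)$, plus $O(|V|)$ for initialization and the final line~6, plus $O(|E|)$ to compute all indegrees and bucket nodes by indegree for line~3. Hence the overall running time is $O(|E|)$ (absorbing $|V|=O(|E|)$ since isolated nodes are irrelevant, or stating $O(|U\cup V|+|E|)$ if one prefers), completing the proof.
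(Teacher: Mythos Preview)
Your approach is essentially the same as the paper's: both first establish that an egalitarian orientation simultaneously witnesses $D_{\a,\b}$ for every $\b$ (so that $D_{\a,\b}=\{v\in V:\vec d_v>\b\}\cup\{x:x\text{ reaches such a }v\}$), and then read off the ranks level by level. The paper argues per node rather than via an explicit loop invariant, but the substance is identical.

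The ``main obstacle'' you flag is not actually an obstacle, and you should resolve it rather than patch around it. The fact you are missing is that $vis=D_{\a,k+1}$ is \emph{predecessor-closed} in $\vec E$: if $y\in vis$ and there is an edge $x\to y$, then $x$ reaches whatever $y$ reaches, so $x\in D_{\a,k+1}=vis$. Consequently, every directed path starting at a node outside $vis$ stays entirely outside $vis$; a node $x\notin vis$ can never reach $T$ ``through'' $vis$. Hence the whole-graph and residual-graph readings of line~4 coincide on $(U\cup V)\setminus vis$, so your correctness argument and your amortized $O(|E|)$ time bound are already consistent and no redefinition of line~4 is needed. Once you insert this one-line observation, your proof is complete and matches the paper's.
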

\fullversion{\begin{proof}
    We first prove the correctness of the algorithm. We begin by showing that in the egalitarian orientation $\vec{E}$, the path $s \rightsquigarrow t$ described in \defref{ab-dense-subgraph} does not exist. Assume for contradiction that such a path $s \rightsquigarrow t$ exists. First, consider any node $u \in U$ with $d_u(G) \le \a$. In the egalitarian orientation, all its incident edges are directed toward $u$, and its indegree cannot exceed $\a$. Hence, $u$ cannot be either $s$ or $t$ in the path. Next, for any node $u \in U$ with $d_u(G) > \a$, its indegree is exactly $\a$, so again $u$ cannot be $s$ or $t$. Therefore, both $s$ and $t$ must be in $V$. Since $\vec{d}_s < \b$ and $\vec{d}_t > \b$, it follows that $\vec{d}_t - \vec{d}_s \ge 2$, which contradicts the definition of an egalitarian orientation. Thus, such a path $s \rightsquigarrow t$ cannot exist. 

    It follows that in an egalitarian orientation, $D_{\a,\b} = T \cup \{x \mid x \text{ can reach a node in } T \text{ in } \vec{E}\}$, where $T = \{v \in V \mid \vec{d}_v(\vec{E}) > \b\}$. Next, we prove that if a node $x \in D_{\a,\b} \setminus D_{\a,\b+1}$, then the algorithm \kw{OrientationToRank} correctly computes $r_\a(x) = \b$. Let $T_1 = \{v \in V \mid \vec{d}_v(\vec{E}) > \b+1\}$ and $T_2 = \{v \in V \mid \vec{d}_v(\vec{E}) > \b\}$. Since $x \in D_{\a,\b} \setminus D_{\a,\b+1}$, we know that $x \notin T_1$ and cannot reach any node in $T_1$. Therefore, in the ``foreach'' loop of the algorithm, $x$ will not be visited in any iteration where $k \ge \b + 1$. On the other hand, $x \in T_2$ or $x$ can reach a node in $T_2$, so in the iteration where $k = \b$, $x$ will be visited and assigned $r_\a(x) = \b$. According to the definition of $\alpha$-rank, this confirms that \kw{OrientationToRank} computes the correct $\alpha$-rank.

    Next, we analyze the time complexity. In each round of the ``foreach'' loop, computing the nodes that can reach $T$ can be done via a breadth-first search. Once a node is visited, it is added to the set $vis$, and no node is revisited. Therefore, the total cost of all breadth-first searches is bounded by $O(|E|)$, which shows that the overall time complexity of \kw{OrientationToRank} is $O(|E|)$.
\end{proof}}

By using the \kw{OrientationToRank} algorithm, once we have an egalitarian orientation, the $r_\a$ values of all nodes can be computed efficiently in linear time. Therefore, we transform the task of maintaining $r_\a$ and the \kw{BD-Index} into the task of maintaining egalitarian orientation. When an edge is inserted or deleted, we can first update the egalitarian orientation, from which we can then compute the updated $r_\a$ values and correspondingly update the \kw{BD-Index}. Compared to directly maintaining the \kw{BD-Index} (as done in \kw{BD-Insert-S} and \kw{BD-Delete-S}), maintaining the egalitarian orientation is significantly more efficient. Next, we present the algorithms \kw{BD-Insert-T} and \kw{BD-Delete-T}, which provide efficient methods for maintaining egalitarian orientations.

\subsection{The insertion algorithm \textsf{BD-Insert-T}}

According to \defref{egalitarian-orientation} and \theref{orientation-to-rank}, each egalitarian orientation corresponds to a unique $\alpha$ value. The index $\mathbb{I}_{BD}^U$ maintains $(p+1)$ distinct $\alpha$ values, each associated with a node list, thus requiring $(p+1)$ corresponding egalitarian orientations. Similarly, $\mathbb{I}_{BD}^V$ requires another $(p+1)$ orientations. Consequently, the total number of required egalitarian orientations is $(2p + 2)$. We represent the set of all such orientations as $\vec{\mathbb{E}}$.

\begin{algorithm}[t]
    \caption{\invoke{BD-Insert-T}$(G,\mathbb{I}_{BD},\vec{\mathbb{E}},u,v)$} \label{bd-insert-t}
    \small
    \KwIn{A bipartite graph $G$ and its \kw{BD-Index} $\mathbb{I}_{BD}$, the egalitarian orientations, the edge $(u,v)$ to be inserted.}
    \KwOut{The updated $G$, $\mathbb{I}_{BD}$, and egalitarian orientations.}
    \For{$\a=0,1,\ldots,p$}{
        Let $\vec{E}\in \vec{\mathbb{E}}$ be the egalitarian orientation for the current $\a$\;
        $\vec{E}\gets{\vec{E}} \cup (v,u)$\;
        \lIf{$d_u(G)<\a$}{
            continue to the next $\a$ value
        }
        Let $v_{\min}\gets\mathop{\arg\min}\limits_{v'\in V,~v'~\text{can reach}~u}\vec{d}_{v'}(\vec{E})$\;
        Reverse the path $v_{\min}\rightsquigarrow u$\;
        $r_\a(\cdot)\gets$ \kw{OrientationToRank}$(\a,\vec{E})$, and update $\mathbb{I}_{BD}^U$ accordingly\;
    }
    Update $\mathbb{I}_{BD}^V$ similarly\;
    $G\gets G\cup\{(u,v)\}$\;
    \Return{$G$, $\mathbb{I}_{BD}$, $\vec{\mathbb{E}}$}\;
\end{algorithm}

Given all egalitarian orientations $\vec{\mathbb{E}}$, we propose the \kw{BD-Insert-T} algorithm for edge insertion in \algref{bd-insert-t}. Similar to \kw{BD-Insert-S} (\algref{bd-insert-s}), \kw{BD-Insert-T} processes each $\a$ in $\mathbb{I}_{BD}^U$ individually (line 1). For each $\a$, it first retrieves the corresponding egalitarian orientation $\vec{E}$, and inserts the directed edge $(v, u)$ into $\vec{E}$ (lines 2-3). If $d_u(G)<\a$, the orientation $\vec{E}$ remains egalitarian with unchanged $r_\a$ values, and \kw{BD-Insert-T} directly proceeds to the next $\a$ (line 4). Otherwise, if $d_u(G)\ge\a$, $\vec{E}$ may violate the egalitarian conditions, requiring the following adjustments. First, let $v_{\min}$ be the node in $V$ with the minimum indegree among all nodes that can reach $u$ (line 5). By definition of reachability, there is a path $v_{\min}\rightsquigarrow u$, and the algorithm reverses this path in $\vec{E}$ by reversing all edge directions along this path (line 6). This reversal restores $\vec{E}$ to an egalitarian orientation. Thus, the algorithm updates the $r_\a$ values of all nodes using algorithm \kw{OrientationToRank} and correspondingly maintains $\mathbb{I}_{BD}^U$ (line 7). After processing all values of $\a$ (for $\mathbb{I}_{BD}^U$ ) and $\b$ (for $\mathbb{I}_{BD}^V$) using the above method, \kw{BD-Insert-T} returns the updated graph $G$, \kw{BD-Index}, and egalitarian orientations $\vec{\mathbb{E}}$ (line 10).

\fullversion{\begin{Example}
    Consider the orientation shown in \subfigref{example-graph-2}, which by definition forms an egalitarian orientation for $\a=1$. We analyze the execution of the \kw{BD-Insert-T} algorithm when inserting edge $(u_6,v_3)$. After processing the $\a=0$ case, the algorithm proceeds to $\a=1$. First, it inserts the directed edge $(v_3,u_6)$ into the orientation. Since $d_{u_6}(G)=1=\a$, \kw{BD-Insert-T} cannot directly continue to the next $\a$. It identifies all nodes in $V$ that can reach $u_6$, i.e., $v_2,v_3,v_5$. Among these nodes, $v_5$ has the lowest indegree, so we have $v_{\min}=v_5$. Then, the algorithm reverses the path $v_5\rightsquigarrow u_6$, which contains a single directed edge $(v_5,u_6)$, changing its direction to $(u_6,v_5)$. After this reversal, the orientation once again becomes egalitarian. Using algorithm \kw{OrientationToRank}, \kw{BD-Insert-T} then computes the updated $r_\a$ values for all nodes and determines that only nodes $u_6$ and $v_5$ have changed their $r_\a$ values, from $0$ to $1$. The index $\mathbb{I}_{BD}^U$ is maintained accordingly, completing the $\a=1$ case processing for this edge insertion.
\end{Example}}

\submitversion{Next, we prove the correctness and analyze the complexity of the \kw{BD-Insert-T} algorithm. }

\fullversion{Next, we propose the following lemma, which serves as the foundation for analyzing the correctness and complexity of the \kw{BD-Insert-T} algorithm.

\begin{Lemma} \label{egalitarian-property}
    Given a graph $G$, an alpha value $\a$, and an egalitarian orientation $\vec{E}$, we have the following properties: (1) For any node $v \in V$, it holds that $\vec{d}_v(\vec{E}) \in \{r_\a(v), r_\a(v) + 1\}$; (2) For any nodes $x, y \in (U \cup V)$, if $r_\a(x) > r_\a(y)$, then the directed edge $(x, y)$ in $\vec{E}$ must point to $y$.
\end{Lemma}
\begin{proof}
    We first prove property (1). According to the definition of egalitarian orientation and \theref{orientation-to-rank}, we know that the indegree of $v$ $\vec{d}_v(\vec{E}) \le r_\a(v) + 1$; otherwise, $v$ would have a higher $\alpha$-rank value. Moreover, we have that either $v$ has an indegree greater than $r_\a(v)$, or $v$ can reach a node in $V$ with indegree greater than $r_\a(v)$. If $v$ has indegree greater than $r_\a(v)$, the property holds. Otherwise, by condition (2) in the definition of egalitarian orientation, $v$ must have indegree greater than $r_\a(v) - 1$, and thus the property also holds.

    Next, we prove property (2). According to the construction of $D_{\a,\b}$ in the proof of \theref{orientation-to-rank}, let $T = \{v \in V \mid \vec{d}_v(\vec{E}) > \b\}$. Then, $D_{\a,\b}$ consists of $T$ together with all nodes that can reach $T$, which implies that all edges between $D_{\a,\b}$ and $(U \cup V) \setminus D_{\a,\b}$ are directed from $D_{\a,\b}$ to $(U \cup V) \setminus D_{\a,\b}$. Therefore, if $r_\a(x) > r_\a(y)$, node $x$ belongs to a denser subgraph than $y$, and the edge $(x, y)$ must be directed toward $y$, which proves the property.
\end{proof}
}

\begin{Theorem} \label{bd-insert-t-correctness}
    The \kw{BD-Insert-T} algorithm can correctly maintain \kw{BD-Index}.
\end{Theorem}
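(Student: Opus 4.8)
The plan is to split the correctness of \kw{BD-Insert-T} into three claims and establish them in order. \textbf{(C1)} After the loop body at lines~2--7 finishes for a fixed $\a$, the orientation $\vec{E}$ attached to that $\a$ is a valid egalitarian orientation of $G\cup\{(u,v)\}$ with respect to $\a$, in the sense of \defref{egalitarian-orientation}. \textbf{(C2)} Granting \textbf{(C1)}, the call \kw{OrientationToRank}$(\a,\vec{E})$ at line~7 returns the correct $\a$-rank of every node of $G\cup\{(u,v)\}$; this is exactly \theref{orientation-to-rank}, so nothing new is needed. \textbf{(C3)} Once the $\a$-ranks are correct for all $\a=0,\dots,p$ (and, by the symmetric pass at line~8, the $\b$-ranks for all $\b=0,\dots,p$), rebuilding the $\a$-indexed node list sorted by $r_\a$ together with the pointers $\mathbb{I}_{BD}^U[\a][\cdot]$ reproduces $\mathbb{I}_{BD}^U$ verbatim as prescribed by the \kw{BD-Index} definition, and likewise for $\mathbb{I}_{BD}^V$; this is immediate from the definition. (If the insertion makes $D_{p+1,p+1}$ non-empty, one extra layer must be appended to each of $\mathbb{I}_{BD}^U,\mathbb{I}_{BD}^V$ and an extra egalitarian orientation added to $\vec{\mathbb{E}}$; I would dispatch this as a short boundary addendum, since the substance is entirely in \textbf{(C1)}.) Thus the whole proof reduces to \textbf{(C1)}.

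For \textbf{(C1)} I would split on the test at line~4. Write $\vec{E}_0$ for the old egalitarian orientation used at line~2. If $d_u(G)<\a$, then in $\vec{E}_0$ node $u$ has indegree $d_u(G)$ and hence no out-edge; after inserting $(v,u)$, $u$ is still a sink with indegree $d_u(G)+1\le\a$ equal to its new degree, so condition~(1) of \defref{egalitarian-orientation} continues to hold, and because $u$ has no out-edge it creates no new path between two nodes of $V$ while no indegree in $V$ changes, so condition~(2) is inherited from $\vec{E}_0$. If $d_u(G)\ge\a$, then $\vec{d}_u(\vec{E}_0)=\a$, so inserting $(v,u)$ makes $\vec{d}_u=\a+1$, temporarily breaking condition~(1); reversing the simple path $v_{\min}\rightsquigarrow u$ restores $\vec{d}_u=\a$, raises $\vec{d}_{v_{\min}}$ by exactly one, and leaves \emph{every other} indegree unchanged (in the reversal of a simple path only the two endpoints change indegree, and the change at $u$ is exactly cancelled by the incoming new edge). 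Hence in the resulting $\vec{E}$ all $U$-indegrees agree with $\vec{E}_0$ --- so condition~(1) holds, in particular $\vec{d}_u=\a<d_u(G)+1$ --- and all $V$-indegrees agree with $\vec{E}_0$ except that $\vec{d}_{v_{\min}}$ is larger by one.

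The main obstacle is re-establishing condition~(2) in this second case, and it is precisely here that the choice of $v_{\min}$ as a node of \emph{minimum} indegree among all $V$-nodes that can reach $u$ is essential. I would argue by contradiction: suppose the new $\vec{E}$ contains a path $Q:v_s\rightsquigarrow v_t$ with $\vec{d}_{v_t}-\vec{d}_{v_s}\ge 2$. If $Q$ lies entirely in $\vec{E}_0$, then by the egalitarian property of $\vec{E}_0$ the corresponding gap in $\vec{E}_0$ is at most $1$; since the only $V$-indegree that changed is $v_{\min}$'s (and it only increased), we must have $v_t=v_{\min}$ with the $\vec{E}_0$-gap equal to $1$, so $v_s$ has indegree strictly below $\vec{d}_{v_{\min}}$ in $\vec{E}_0$, yet $v_s$ reaches $v_{\min}$, hence reaches $u$ after the insertion --- contradicting the minimality in the choice of $v_{\min}$. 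If instead $Q$ uses the newly added edge or one of the reversed edges, I would splice the prefix of $Q$ up to its first such edge with the matching suffix of the reversed path $u\rightsquigarrow v_{\min}$; using the bipartite alternation of the paths this again exhibits, in the orientation just after the insertion, a $V$-node reaching $u$ with indegree strictly smaller than $\vec{d}_{v_{\min}}$, the same contradiction. I expect this path-surgery step to be the delicate part; an alternative is to note that the manipulation performed by \kw{BD-Insert-T} coincides with the egalitarian-orientation construction already used in Case~3 of the proof of \theref{insertion-update-theorem} (together with \lemref{egalitarian-property} and \theref{u-update-theorem}), so one may instead invoke that analysis directly. Everything else --- the reductions \textbf{(C2)} and \textbf{(C3)}, and the symmetric argument for $\mathbb{I}_{BD}^V$ obtained by swapping $U\leftrightarrow V$ and $\a\leftrightarrow\b$ throughout --- is routine bookkeeping.
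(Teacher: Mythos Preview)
Your reduction to \textbf{(C1)}--\textbf{(C3)} and the split on $d_u(G)<\a$ coincide with the paper's proof; the paper likewise argues only that the orientation remains egalitarian and then defers to \theref{orientation-to-rank}. Where you diverge is in the verification of condition~(2) of \defref{egalitarian-orientation} in the case $d_u(G)\ge\a$. The paper does not track indegrees directly: it lets $(v_1,u)$ be the last edge of the reversal path and uses \lemref{egalitarian-property} to show that every node on $v_{\min}\rightsquigarrow v_1$ carries the \emph{same} $\a$-rank; egalitarianism then follows in two stages (reverse $v_{\min}\rightsquigarrow v_1$, then reverse $(v_1,u)$, each argued separately via rank constancy and the minimality of $v_{\min}$). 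Your direct path-surgery route is a legitimate and more elementary alternative that avoids the rank machinery altogether.

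However, your Case~B sketch does not close as written. The splicing you describe only shows that $v_s$ reaches $u$ in the post-insertion orientation $\vec E_1$, so minimality of $v_{\min}$ gives $\vec d_{v_s}\ge\vec d_{v_{\min}}$ --- the \emph{opposite} of the strict inequality you assert. What is missing is a bound on $\vec d_{v_t}$: you must also trace the suffix of $Q$ after its \emph{last} new-or-reversed edge, observe that this suffix lies in $\vec E_0$ and begins at a node of $P$, and hence that $v_{\min}$ reaches $v_t$ in $\vec E_0$; egalitarianism of $\vec E_0$ then gives $\vec d_{v_t}\le\vec d_{v_{\min}}+1$. Combining the two bounds yields a gap $\le 1$, contradicting the assumed gap $\ge 2$. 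Your stated fallback --- importing \lemref{egalitarian-property} and the analysis behind \theref{insertion-update-theorem} --- is precisely the paper's route and is sound.
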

\fullversion{
\begin{proof}
    Since the algorithm processes each node list (i.e., each alpha value) separately, we only need to prove that within the loop for a given alpha value (lines 2–7), the algorithm correctly updates the egalitarian orientation. Once the orientation is correctly maintained, the corresponding $r_\a$ values and the \kw{BD-Index} can be correctly updated based on \theref{orientation-to-rank}.  

    Therefore, let the current alpha value be $\a$. If $d_u(G) < \a$ (line 4), it is clear that the egalitarian orientation is updated correctly. Otherwise, if $d_u(G) > \a$, let $(v_1, u)$ be the last edge on the path $v_{\min} \rightsquigarrow u$. Since $v_{\min}$ can reach $v_1$ or $v_{\min} = v_1$, it follows from \theref{egalitarian-property} that $r_\a(v_{\min}) \ge r_\a(v_1)$. Additionally, since $\vec{d}_{v_{\min}} \le \vec{d}_{v_1}$, we also have $r_\a(v_{\min}) \le r_\a(v_1)$. Therefore, $r_\a(v_{\min}) = r_\a(v_1)$. Similarly, we can conclude that all nodes along the path $v_{\min} \rightsquigarrow v_1$ have the same $\alpha$-rank as $v_{\min}$. Now, suppose we first reverse the path $v_{\min} \rightsquigarrow v_1$ in the original orientation. It is easy to see that the resulting orientation remains egalitarian. Then, we reverse the edge $(v_1, u)$ (i.e., the full path $v_{\min} \rightsquigarrow u$ has been reversed). Since $v_{\min}$ is the node with the smallest indegree among those that can reach $u$, the resulting orientation is still egalitarian. Hence, \kw{BD-Insert-T} correctly maintains the egalitarian orientation, and by \theref{orientation-to-rank}, it also correctly updates the \kw{BD-Index}.
\end{proof}
}

\begin{Theorem}
    The time complexity and space complexity of the \kw{BD-Insert-T} algorithm are both $O(p\cdot|E|)$.
\end{Theorem}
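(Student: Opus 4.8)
The plan is to bound, for each fixed $\a$, the total work done in the body of the loop at lines 2--7 of \algref{bd-insert-t}, and then multiply by the $O(p)$ iterations. Within one iteration there are three potentially expensive operations: (i) computing $v_{\min}$, i.e., the node in $V$ of minimum indegree among those that can reach $u$ in the (edge-augmented) orientation $\vec{E}$; (ii) reversing the path $v_{\min}\rightsquigarrow u$; and (iii) the call to \kw{OrientationToRank}. By \theref{orientation-to-rank}, operation (iii) runs in $O(|E|)$ time. For operation (i), I would argue that the set of nodes in $V$ that can reach $u$ can be discovered by a single backward BFS/DFS from $u$ in $\vec{E}$, tracking the minimum indegree encountered; this traversal visits each edge at most once, hence $O(|E|)$ time, and it can simultaneously record a parent pointer tree so that the actual path $v_{\min}\rightsquigarrow u$ is extracted in time proportional to its length, which is $O(|E|)$. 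Operation (ii) then reverses at most $O(|E|)$ edges. So one iteration costs $O(|E|)$, and over $\a = 0,1,\ldots,p$ (plus the symmetric pass over $\b$ for $\mathbb{I}_{BD}^V$) the total insertion time is $O(p\cdot|E|)$. For the final $\mathbb{I}_{BD}^U$ maintenance step given the new $r_\a$ values: since \theref{insertion-update-theorem} guarantees that only the single layer $r_\a = \b$ in $V$ moves (to $\b+1$) and \theref{u-update-theorem} recomputes ranks in $U$ from those in $V$, the resulting changes to the node list and pointers for this $\a$ are absorbed into the $O(|E|)$ bound (re-sorting/relinking touches only $O(|D_{\a,\a}|) = O(|E|)$ elements); I would note this explicitly so the ``update $\mathbb{I}_{BD}^U$ accordingly'' line in \algref{bd-insert-t} is accounted for.

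For the space complexity, the dominant cost is storing the $(2p+2)$ egalitarian orientations in $\vec{\mathbb{E}}$, each of which is an orientation of $G$ and hence occupies $O(|E|)$ space, giving $O(p\cdot|E|)$ in total. The \kw{BD-Index} itself is $\Theta(|E|)$ by the earlier space theorem, and each call to \kw{OrientationToRank} uses only $O(|E|)$ auxiliary space (the $vis$ set and BFS queue), as does the backward traversal for $v_{\min}$. Adding these, the overall space is $O(p\cdot|E|)$, matching the claimed bound.

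The main obstacle I anticipate is the argument that step (i)--(ii)--the path selection and reversal--truly costs only $O(|E|)$ per $\a$ rather than more, and in particular that after the reversal $\vec{E}$ is again a valid egalitarian orientation (needed so that \kw{OrientationToRank} is applicable). The correctness of the reversal is already handled by \theref{bd-insert-t-correctness}, so here I only need the \emph{cost} side: the backward traversal from $u$ visits each vertex and edge of the reachable region at most once, the path $v_{\min}\rightsquigarrow u$ is a simple path (we may take it simple by the parent-pointer construction) and hence has length at most $|U|+|V| = O(|E|)$ assuming no isolated vertices, so reversing it is $O(|E|)$. One subtlety to flag is whether the path to reverse is guaranteed simple; I would obtain it from the parent-pointer forest of the traversal, which yields a simple path by construction, sidestepping any need to re-examine edges. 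Assembling these observations gives the stated $O(p\cdot|E|)$ time and space.
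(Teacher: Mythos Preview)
Your proposal is correct and follows essentially the same approach as the paper: bound each $\a$-iteration by $O(|E|)$ via a backward BFS from $u$ to find $v_{\min}$ plus the $O(|E|)$ call to \kw{OrientationToRank}, then multiply by $O(p)$; for space, the dominant term is the $(2p+2)$ stored orientations at $O(|E|)$ each. You add more bookkeeping detail (path reversal cost, simplicity via parent pointers, the per-$\a$ index-list update) than the paper's terse proof, but the structure and key observations are identical.
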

\fullversion{\begin{proof}
    The most computationally intensive steps of the algorithm are line 5 and line 7. Line 5 can be implemented by performing a breadth-first search from node $u$, requiring $O(|E|)$ time. Similarly, line 7 can be executed in $O(|E|)$ time according to \theref{orientation-to-rank}. Consequently, processing a single $\a$ value requires $O(|E|)$ time. Since the algorithm handles $(p+1)$ distinct values of $\a$, maintaining $\mathbb{I}_{BD}^U$ incurs $O(p \cdot|E|)$ time complexity. By symmetry, the maintenance of $\mathbb{I}_{BD}^V$ has identical complexity. Therefore, the total time complexity of \kw{BD-Insert-T} is $O(p\cdot|E|)$. For space complexity, the dominant cost comes from storing the input egalitarian orientations $\vec{\mathbb{E}}$, occupying $O(p \cdot |E|)$ space.
\end{proof}}

\subsection{The deletion algorithm \textsf{BD-Delete-T}}

\begin{algorithm}[t]
    \caption{\invoke{BD-Delete-T}$(G,\mathbb{I}_{BD},\vec{\mathbb{E}},u,v)$} \label{bd-delete-t}
    \small
    \KwIn{A bipartite graph $G$ and its \kw{BD-Index} $\mathbb{I}_{BD}$, the egalitarian orientations, the edge $(u,v)$ to be deleted.}
    \KwOut{The updated $G$, $\mathbb{I}_{BD}$, and egalitarian orientations.}
    \For{$\a=0,1,\ldots,p$}{
        Let $\vec{E}\in \vec{\mathbb{E}}$ be the egalitarian orientation for the current $\a$\;
        \If{$d_u(G)\le\a$}{
            $\vec{E}\gets\vec{E}\setminus(v,u)$\;
            Continue to the next $\a$ value\;
        }
        \eIf{$(v,u)\in\vec{E}$}{
            Let $v_{\max}\gets\mathop{\arg\max}\limits_{v'\in V,~u~\text{can reach}~v'}\vec{d}_{v'}(\vec{E})$\;
            Reverse the path $u\rightsquigarrow v_{\max}$\;
        }
        (\tcp*[h]{$(u,v)\in\vec{E}$})
        {
            Let $v_{\max}\gets\mathop{\arg\max}\limits_{v'\in \{v'\in V|v~\text{can reach}~v'\}\cup\{v\}}\vec{d}_{v'}(\vec{E})$\;
            \lIf{$v_{\max}\neq v$}{reverse the path $v\rightsquigarrow v_{\max}$}
        }
        $\vec{E}\gets\vec{E}\setminus(u,v)$ or $\vec{E}\gets\vec{E}\setminus(v,u)$\;
        $r_\a(\cdot)\gets$ \kw{OrientationToRank}$(\a,\vec{E})$, and update $\mathbb{I}_{BD}^U$ accordingly\;
    }
    Update $\mathbb{I}_{BD}^V$ similarly\;
    $G\gets G\setminus\{(u,v)\}$\;
    \Return{$G$, $\mathbb{I}_{BD}$, $\vec{\mathbb{E}}$}\;
\end{algorithm}

The pseudo-code of our \kw{BD-Delete-T} algorithm is outlined in \algref{bd-delete-t}. Similar to \kw{BD-Insert-T}, the \kw{BD-Delete-T} algorithm processes each $\a$ value individually (line 1). For each $\a$, it first retrieves the corresponding egalitarian orientation $\vec{E}$ (line 2). The algorithm then diverges into two cases based on whether $d_u(G)\le \a$. In the first case where $d_u(G)\le \a$, the algorithm simply removes the edge $(v,u)$ from $\vec{E}$ (by definition, edge $(v,u)$ must be directed toward $u$ in this case) while preserving the orientation's egalitarian property and all $r_\a$ values. In the second case where $d_u(G)>\a$, the algorithm examines the direction of edge $(u,v)$ in $\vec{E}$. If oriented toward $u$ (line 6), it identifies $v_{\max}$ as the maximum-indegree node reachable from $u$, and reverses the path $u\rightsquigarrow v_{\max}$ (lines 7-8). Conversely, if oriented toward $v$ (line 9), the algorithm determines $v_{\max}$ as either $v$ itself or the node with the highest indegree among all nodes in $V$ reachable from $v$ (line 10), and performs path reversal only in the latter case (line 11). Next, the algorithm removes either $(u,v)$ or $(v,u)$ from $\vec{E}$, depending on whether it points toward $u$ or $v$ (line 12). At this point, the resulting $\vec{E}$ is guaranteed to be an egalitarian orientation. Subsequently, the algorithm updates the $r_\a$ values of all nodes and updates the \kw{BD-Index} using algorithm \kw{OrientationToRank} (line 13). After processing all $\a$ values for $\mathbb{I}_{BD}^U$, the algorithm performs analogous updates for $\mathbb{I}_{BD}^V$ (line 14). Ultimately, the \kw{BD-Delete-T} algorithm returns $G$, $\mathbb{I}_{BD}$, and $\vec{\mathbb{E}}$ (line 16).

Next, we prove the correctness of \kw{BD-Delete-T} and analyze its complexity.

\begin{Theorem} \label{bd-delete-t-correctness}
    The \kw{BD-Delete-T} algorithm can correctly maintain \kw{BD-Index}.
\end{Theorem}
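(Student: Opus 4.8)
The plan is to follow the structure of the proof of \theref{bd-insert-t-correctness}. Because \kw{BD-Delete-T} handles each $\a$ value (and then, symmetrically, each $\b$ value) independently, and because \theref{orientation-to-rank} guarantees that \kw{OrientationToRank} recovers the correct ranks from \emph{any} egalitarian orientation, it suffices to fix $\a$ and establish the following invariant: after executing the loop body (lines~2--13) for this $\a$, the orientation $\vec{E}$ stored in $\vec{\mathbb{E}}$ is an egalitarian orientation of $G' = G\setminus\{(u,v)\}$. Granting this invariant, line~13 recomputes the correct updated $r_\a$ values by \theref{orientation-to-rank} and maintains $\mathbb{I}_{BD}^U$ accordingly, and the maintenance of $\mathbb{I}_{BD}^V$ is symmetric. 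Throughout, I will use \lemref{egalitarian-property} (which links indegrees, $\a$-ranks, and edge directions) together with its immediate consequence that $\a$-ranks are non-increasing along every directed path of an egalitarian orientation (if an edge of a directed path led from a lower to a higher $\a$-rank, \lemref{egalitarian-property}(2) would force that edge to point the other way, a contradiction).

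First I would dispose of the easy case $d_u(G)\le\a$: then $\vec{d}_u(\vec{E})=d_u(G)$ by condition~(1) of \defref{egalitarian-orientation}, so every edge at $u$ is directed into $u$; hence $(v,u)\in\vec{E}$, line~4 removes it, no $V$-indegree changes, $\vec{d}_u$ drops to $d_u(G')\le\a$ so condition~(1) still holds, and condition~(2) is preserved because deleting an edge can only destroy directed paths. In the main case $d_u(G)>\a$ we have $\vec{d}_u(\vec{E})=\a$, and since $d_u(G')=d_u(G)-1\ge\a$ the final orientation must also have $\vec{d}_u=\a$; I then analyze the two subcases exactly as in the algorithm. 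If $(v,u)\in\vec{E}$, deleting it outright would leave $\vec{d}_u=\a-1$, so the algorithm first reverses a path $u\rightsquigarrow v_{\max}$, which raises $\vec{d}_u$ to $\a+1$ and lowers $\vec{d}_{v_{\max}}$ by one, and only then deletes $(v,u)$ --- and the deleted edge is untouched by the reversal, since a simple path leaving $u$ cannot traverse an edge pointing into $u$. If $(u,v)\in\vec{E}$, deleting $(u,v)$ already keeps $\vec{d}_u=\a$ but lowers $\vec{d}_v$ by one, so the algorithm first reverses $v\rightsquigarrow v_{\max}$ (when $v_{\max}\neq v$) to transfer that decrement from $v$ onto $v_{\max}$. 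The upshot is uniform: in every case exactly one $V$-node, $v_{\max}$, loses one unit of indegree, while all other $V$-indegrees and all $U$-indegrees are unchanged (each internal node of a reversed path has exactly one incoming and one outgoing path-edge both before and after, so its indegree is unaffected). Thus condition~(1) of \defref{egalitarian-orientation} is immediate, and everything reduces to re-establishing condition~(2) for the resulting orientation.

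The crux, and the step I expect to be the main obstacle, is showing that lowering $\vec{d}_{v_{\max}}$ by one creates no path $v_s\rightsquigarrow v_t$ with $\vec{d}_{v_t}-\vec{d}_{v_s}\ge 2$ in the new orientation. I would argue this in two movements. First, mirroring the \kw{BD-Insert-T} proof, I would show that the reversed path is ``benign'': using the non-increasing-rank property, \lemref{egalitarian-property}(1), and the fact that $v_{\max}$ attains the largest indegree over the reachability set from which it is chosen, one shows that the interior vertices of the reversed path share $v_{\max}$'s $\a$-rank, so reversing the edges strictly among those vertices is, by itself, egalitarian-preserving; this isolates the only substantive change to the single decrement at $v_{\max}$ together with one redirected edge incident to $u$ (or $v$). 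Second, I would use the maximality of $v_{\max}$: no $V$-node reachable from $v_{\max}$ afterward has indegree exceeding $v_{\max}$'s original indegree, so no forward violation $v_{\max}\rightsquigarrow v_t$ can arise; and since condition~(2) held before the update, every $v_s$ that reaches $v_{\max}$ afterward (such a path cannot use a reversed edge, since reversed edges lead away from $v_{\max}$, hence it was already present and $\vec{d}_{v_s}\ge\vec{d}_{v_{\max}}-1$ in the old orientation), so no backward violation $v_s\rightsquigarrow v_{\max}$ can arise either; and as all other $V$-indegrees are unchanged, no third $V$-node can be involved. The genuinely delicate point --- where the careful bookkeeping lives --- is that the path reversal alters reachability, so I must check that the maximality of $v_{\max}$ still holds in the \emph{post-reversal} orientation: any $V$-node made newly reachable from $v_{\max}$ must already have had indegree at most $v_{\max}$'s original indegree, which follows because the reversal only reroutes edges among $v_{\max}$'s equal-rank interior vertices and the lone edge at $u$ or $v$. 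Once condition~(2) is secured, $\vec{E}$ is an egalitarian orientation of $G'$, and \theref{orientation-to-rank} completes the proof.
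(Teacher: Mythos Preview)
Your proposal is correct and follows essentially the same approach as the paper: reduce to showing the egalitarian orientation is preserved for each $\a$, dispatch the easy case $d_u(G)\le\a$, split the main case on the direction of the deleted edge, and use the equal-rank property of the interior of the reversed path (via \lemref{egalitarian-property}) to conclude. You are actually more explicit than the paper in verifying condition~(2) of \defref{egalitarian-orientation} (the paper's proof says ``it is easy to see''), though your claim that a path $v_s\rightsquigarrow v_{\max}$ in the new orientation ``cannot use a reversed edge'' is stated a bit loosely --- such a path can touch a reversed edge, but if it does, the prefix up to the first reversed edge already reaches an interior node of the original path in the \emph{old} orientation, and that interior node reaches $v_{\max}$ in the old orientation, so the conclusion $\vec d_{v_s}\ge \vec d_{v_{\max}}-1$ still follows.
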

\fullversion{
\begin{proof}
    Similar to the correctness proof of \kw{BD-Insert-T}, the correctness of \kw{BD-Delete-T} also only requires showing that the algorithm correctly maintains the egalitarian orientation for each alpha value $\a$. First, if $d_u(G) \le \a$, it is straightforward to verify the correctness based on the definition of egalitarian orientation. Otherwise, if $d_u(G) > \a$, we consider two cases:

    \textit{Case 1:} $(u,v) \in \vec{E}$ (line 6). In this case, according to the proof of \theref{u-update-theorem}, the neighbor $v_1 \in N(u)$ that is pointed to by $u$ and has the highest $r_\a$ satisfies $r_\a(v_1) = r_\a(u)$. Since $v_1$ can reach $v_{\max}$ or is equal to $v_{\max}$, we have $r_\a(v_1) \ge r_\a(v_{\max})$. On the other hand, since $v_{\max}$ has a higher indegree than $v_1$, it follows that $r_\a(v_{\max}) \ge r_\a(v_1)$. Therefore, $r_\a(v_1) = r_\a(v_{\max})$. Similarly, we can conclude that all nodes along the path $v_1 \rightsquigarrow v_{\max}$ have the same $\alpha$-rank as $v_{\max}$. Hence, by reversing the path $u \rightsquigarrow v_{\max}$ and deleting the edge $(u,v)$ or $(v,u)$ from $\vec{E}$, it is easy to see that the resulting orientation remains egalitarian.

    \textit{Case 2:} $(v,u) \in \vec{E}$ (line 6). This case can be further divided into two subcases: (1) If $v_{\max} = v$, then the algorithm simply removes the edge $(u,v)$ without reversing any path. This implies that $v$ cannot reach any node in $V$ with a higher indegree. According to the definition of egalitarian orientation, any node in $V$ that can reach $v$ must have indegree at least $\vec{d}_v - 1$, so decreasing $v$'s indegree by removing $(u,v)$ does not violate the egalitarian condition. (2) If $v_{\max} \neq v$, we can use a similar argument as in Case 1 to show that the resulting orientation remains egalitarian after the path reversal and edge deletion.

    Therefore, the algorithm correctly maintains the egalitarian orientation, and by \theref{orientation-to-rank}, it also correctly maintains the \kw{BD-Index}.
\end{proof}
}

\begin{Theorem}
    The time complexity and space complexity of the \kw{BD-Delete-T} algorithm are both $O(p\cdot|E|)$.
\end{Theorem}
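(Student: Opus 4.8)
The plan is to mirror the complexity analysis of \kw{BD-Insert-T}: bound the work of a single iteration of the outer loop over $\a$ in \algref{bd-delete-t} by $O(|E|)$, and then multiply by the $O(p)$ iterations required for $\mathbb{I}_{BD}^U$ and, symmetrically, for $\mathbb{I}_{BD}^V$. Fix an $\a$. Retrieving the corresponding egalitarian orientation $\vec{E}$ from $\vec{\mathbb{E}}$ costs $O(1)$ with an index array keyed by $\a$. The branch $d_u(G)\le\a$ (lines 3--5) merely deletes a single directed edge $(v,u)$, costing $O(1)$.

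For the main branch $d_u(G)>\a$, the dominant operations are the search for $v_{\max}$ (line 7 or line 10) and the path reversal (line 8 or line 11). The node $v_{\max}$ is located by a single breadth-first search in $\vec{E}$ — forward from $u$ when $(v,u)\in\vec{E}$, or forward from $v$ when $(u,v)\in\vec{E}$ — which costs $O(|E|)$; recording a witness path during this traversal lets the subsequent reversal of a simple path touch $O(|E|)$ edges, again $O(|E|)$. Removing the remaining copy of the edge in line 12 is $O(1)$. Line 13 invokes \kw{OrientationToRank}, which runs in $O(|E|)$ by \theref{orientation-to-rank} and returns the updated $r_\a$ values of all nodes; from these we rebuild the single node list for $\a$ together with its pointer block $\mathbb{I}_{BD}^U[\a][\cdot]$. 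The node list has $|D_{\a,\a}|$ entries and the pointer block has $O(|\{u\in U\mid d_u>\a\}|)$ entries (the same bounds used in the $\Theta(|E|)$ space analysis of \kw{BD-Index}, via \lemref{degree}), and since the $r_\a$ values are integers in a range of size $O(|E|)$, the list can be ordered by counting sort and the pointers filled in one linear scan, all in $O(|E|)$. Hence one iteration costs $O(|E|)$, giving $O(p\cdot|E|)$ for $\mathbb{I}_{BD}^U$; the symmetric processing of $\mathbb{I}_{BD}^V$ (line 14) adds another $O(p\cdot|E|)$, so the total time is $O(p\cdot|E|)$.

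For space, the persistent state is the graph $G$ ($O(|E|)$), the index $\mathbb{I}_{BD}$ ($O(|E|)$ by the space bound of \kw{BD-Index}), and the $(2p+2)$ egalitarian orientations in $\vec{\mathbb{E}}$, each of size $O(|E|)$ and hence dominating at $O(p\cdot|E|)$. Every transient structure used inside an iteration — the BFS frontier and visited flags, the recorded reversal path, and the arrays returned by \kw{OrientationToRank} — is bounded by $O(|E|)$. Therefore the overall space complexity is $O(p\cdot|E|)$.

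The main obstacle is the per-iteration accounting of line 13: one must argue that converting the freshly computed $r_\a$ values into the updated node list and pointer block of $\mathbb{I}_{BD}^U$ stays within $O(|E|)$, which rests on the degree bounds of \lemref{degree} and on replacing comparison sorting by counting sort over the bounded integer range of $r_\a$. The remaining pieces — one BFS and one path reversal per iteration — follow the pattern already established for \kw{BD-Insert-T}, and the correctness claims invoked here are exactly \theref{bd-delete-t-correctness} and \theref{orientation-to-rank}, which may be assumed.
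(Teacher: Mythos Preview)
Your proposal is correct and follows essentially the same approach as the paper: bound each iteration of the $\a$-loop by $O(|E|)$ via a single BFS for $v_{\max}$ plus the $O(|E|)$ cost of \kw{OrientationToRank}, then multiply by the $O(p)$ iterations for each of $\mathbb{I}_{BD}^U$ and $\mathbb{I}_{BD}^V$, with space dominated by the $(2p+2)$ stored egalitarian orientations. Your treatment is in fact more thorough than the paper's, which glosses over the per-iteration index-update cost that you handle explicitly with counting sort and the degree bounds from \lemref{degree}.
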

\fullversion{\begin{proof}
    Similar to \kw{BD-Insert-T}, \kw{BD-Delete-T} can identify $v_{\max}$ through a single breadth-first search in $O(|E|)$ time. The rank update in line 13 also requires $O(|E|)$ time according to \theref{orientation-to-rank}. Thus, maintaining $\mathbb{I}_{BD}^U$ for a single $\a$ takes $O(|E|)$ time. With $(p+1)$ distinct $\a$ values, the total time complexity for maintaining $\mathbb{I}_{BD}^U$ becomes $O(p \cdot |E|)$. The same complexity applies to maintaining $\mathbb{I}_{BD}^V$. Hence, the overall time complexity of \kw{BD-Delete-T} is $O(p \cdot |E|)$. Regarding space complexity, the dominant factor is storing the egalitarian orientations, which occupies $O(p \cdot |E|)$ space.
\end{proof}}

By leveraging the egalitarian orientation, \kw{BD-Insert-T} and \kw{BD-Delete-T} can update a node list in \kw{BD-Index} in $O(|E|)$ time, avoiding the computationally expensive \kw{DSS++} algorithm ($O(|E|^{1.5})$ time) required by \kw{BD-Insert-S} and \kw{BD-Delete-S}. This complexity reduction enables \kw{BD-Insert-T} and \kw{BD-Delete-T} to significantly outperform their counterparts (\kw{BD-Insert-S} and \kw{BD-Delete-S}) when processing dynamic graphs.

\comment{\stitle{Discussion.} The concept of egalitarian orientation also exists in unipartite graphs \cite{dd}, but it differs fundamentally from the egalitarian orientation we propose for bipartite graphs in the following aspects: (1) Unlike the uniform treatment of nodes in unipartite graphs, we explicitly distinguish between node sets $U$ and $V$. Specifically, we introduce a novel constraint that limits the indegree of nodes in $U$ to at most $\a$, which has no counterpart in unipartite graphs; (2) While egalitarian orientation in unipartite graphs is parameter-free, our definition incorporates parameters $\a$ and $\b$, which is specifically designed to maintain \kw{BD-Index}; (3) Both \kw{BD-Insert-T} (line 4) and \kw{BD-Delete-T} (lines 3–5) require special handling for low-degree nodes $u$ while ensuring nodes in $U$ never exceed the indegree threshold $\a$ during maintenance, representing unique challenges in bipartite graphs. These fundamental differences establish our bipartite egalitarian orientation as a novel structure that cannot be substituted by its unipartite counterpart.}

{\color{\mycolorb}\stitle{Discussion: novelty and applicability of \kw{BD-Index}.} Compared to prior work \cite{ddbipartite}, our contributions lie in proposing a novel index \kw{BD-Index} along with two maintenance strategies that enable efficient query processing and dynamic graph updates. These innovations significantly enhance the practicality of $(\alpha,\beta)$-dense subgraph search in real-world scenarios (as shown in our case study in Section 6.3). In addition to supporting $(\alpha,\beta)$-dense subgraphs, we also observe that the same indexing framework can be extended to $(\alpha,\beta)$-core decomposition; the details of this extension are discussed in the remark in Section~3.4.
}

\newcounter{exp}
\newcommand{\expnumber}{\refstepcounter{exp}\theexp}
\section{EXPERIMENTS}

\stitle{Algorithms.} {\color{\mycolorb}For $(\a,\b)$-dense subgraph queries, we implement three algorithms: the baseline online algorithm \kw{Online}, its optimized variant \kw{Online++}, and the index-based algorithm \kw{Query-BD-Index} (\algref{query-bd-index}). The \kw{Online} algorithm processes each query by invoking the state-of-the-art $(\a,\b)$-dense subgraph search algorithm \kw{DSS++} \cite{ddbipartite}. Building on this, \kw{Online++} accelerates query processing through two optimizations: (1) it reuses the orientation obtained from the previous query rather than reinitializing it before the max-flow computation; and (2) it caches the results of the most recent 10 queries. When processing a new query with parameters $(\alpha,\beta)$, \kw{Online++} checks whether any cached result $D_{\alpha^+,\beta^+}$ satisfies $\alpha^+ \ge \alpha$ and $\beta^+ \ge \beta$, or $D_{\alpha^-,\beta^-}$ satisfies $\alpha^- \le \alpha$ and $\beta^- \le \beta$. By exploiting the hierarchical property of $(\a,\b)$-dense subgraphs, the computation can then be restricted to the subgraph outside $D_{\alpha^+,\beta^+}$ or inside $D_{\alpha^-,\beta^-}$.}

For index construction, we implement the \kw{Build-BD-Index} algorithm (\algref{build-bd-index}). To maintain the \kw{BD-Index} dynamically, we implement the space-efficient algorithms \kw{BD-Insert-S} (\algref{bd-insert-s}) and \kw{BD-Delete-S} (\algref{bd-delete-s}), as well as the time-efficient algorithms \kw{BD-Insert-T} (\algref{bd-insert-t}) and \kw{BD-Delete-T} (\algref{bd-delete-t}). As a baseline comparison, we consider recomputing the \kw{BD-Index} from scratch using \kw{Build-BD-Index} after each update, denoted as \kw{Recomputing}. All algorithms are implemented in C++ with O3 optimization. Our experiments are conducted on a Linux system with a 2.2GHz AMD 3990X 64-Core CPU and 256GB of memory.

\stitle{Datasets.} As shown in \tabref{dataset}, we evaluate the proposed algorithms on 10 real-world datasets: \kw{Actor} (\kw{AC}), \kw{IMDB} (\kw{IM}), \kw{Hepph} (\kw{HE}), \kw{Amazon} (\kw{AM}), \kw{Flickr} (\kw{FL}), \kw{Epinions} (\kw{EP}), \kw{Patent} (\kw{PA}), \kw{Pokec} (\kw{PO}), \kw{Wiki} (\kw{WI}), and \kw{Livejournal} (\kw{LI}). All datasets are publicly available from the Koblenz Network Collection (\url{http://www.konect.cc/}).

\begin{table}[t]
\caption{Statistics of datasets.} \label{dataset}
\vspace*{-0.3cm}
\setlength{\tabcolsep}{8pt}
\small
1K=1,000, 1M=1,000,000\\
\begin{tabular}{c|c|cccc}
\hline
Dataset & Category    & $|U|$  & $|V|$  & $|E|$   & $p$   \\ \hline
\kw{AC} & affiliation & 127.8K & 383.6K & 1.5M    & 12    \\
\kw{IM} & affiliation & 303.6K & 896.3K & 3.8M    & 20    \\
\kw{HE} & citation    & 24.5K  & 28.1K  & 4.6M    & 371   \\
\kw{AM} & rating      & 2.1M   & 1.2M   & 5.7M    & 23    \\
\kw{FL} & affiliation & 396.0K & 103.6K & 8.5M    & 134   \\
\kw{EP} & rating      & 120.5K & 755.8K & 13.7M   & 120   \\
\kw{PA} & citation    & 2.1M   & 3.3M   & 16.5M   & 35    \\
\kw{PO} & social      & 1.6M   & 1.2M   & 22.3M   & 25    \\
\kw{WI} & authorship  & 953.5K & 5.9M   & 30.6M   & 156   \\
\kw{LI} & affiliation & 3.2M   & 7.5M   & 112.3M  & 104   \\ \hline
\end{tabular}
\end{table}

\subsection{Performance studies on static graphs}

\begin{figure}[t]
    \centering
    \includegraphics[width=0.95\linewidth]{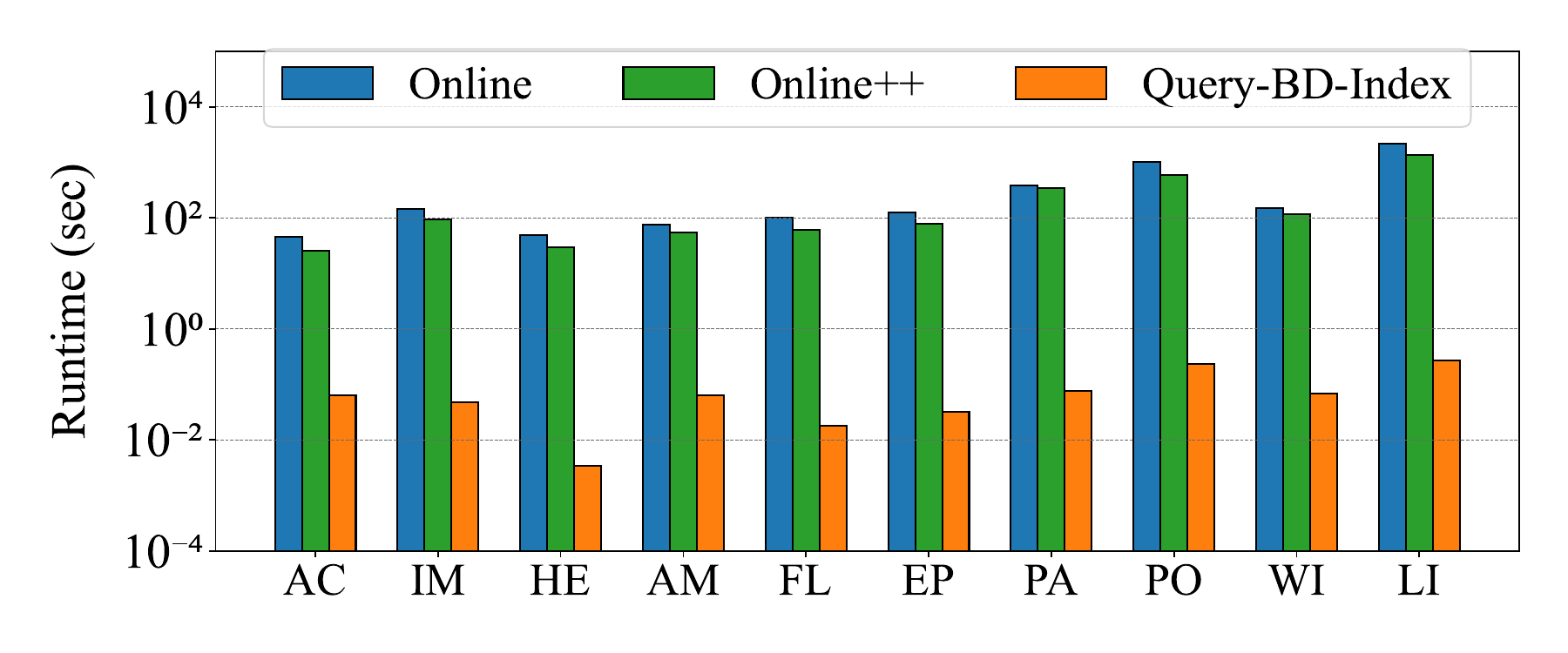}
    \vspace*{-0.5cm}
    \caption{\color{\mycolorb}Runtime of different query algorithms (total time of 100 random queries).} \label{query}
\end{figure}

{\color{\mycolorb}\stitle{Exp-\expnumber: Query processing time of different algorithms.} In this experiment, we evaluate the performance of $(\a,\b)$-dense subgraph query algorithms, including \kw{Online}, \kw{Online++}, and \kw{Query-BD-Index}. For each dataset, we execute 100 queries with $\a$ and $\b$ parameters uniformly sampled from $[0, p]$, measuring the total processing time. The results are shown in \figref{query}. \kw{Query-BD-Index} achieves a speedup of 3 to 4 orders of magnitude over \kw{Online} and \kw{Online++}. For example, on the \kw{HE} dataset, the running times of \kw{Online}, \kw{Online++}, and \kw{Query-BD-Index} are 49.84 seconds, 29.82 seconds, and 0.0034 seconds, respectively. On the large-scale dataset \kw{LI} with over 100 million edges, the maximum query time of \kw{Online}, \kw{Online++}, and \kw{Query-BD-Index} across all 100 queries are 60.27 seconds, 39.86 seconds, and only 0.018 seconds, respectively. This striking gap highlights that both online algorithms fail to satisfy the stringent real-time response requirements of practical applications (typically under 0.5 seconds \cite{amazon}), whereas our index-based method fully meets the requirement. Although \kw{Online++} reduces redundant computations and achieves moderate improvements over \kw{Online}, its performance gain is far from sufficient to meet real-time demands, further underscoring the necessity of our index-based approach.
}

\begin{figure}[t]
  \captionsetup[subfigure]{justification=centering}
  \begin{subfigure}{0.48\linewidth}
    \centering
    \includegraphics[width=1.0\linewidth]{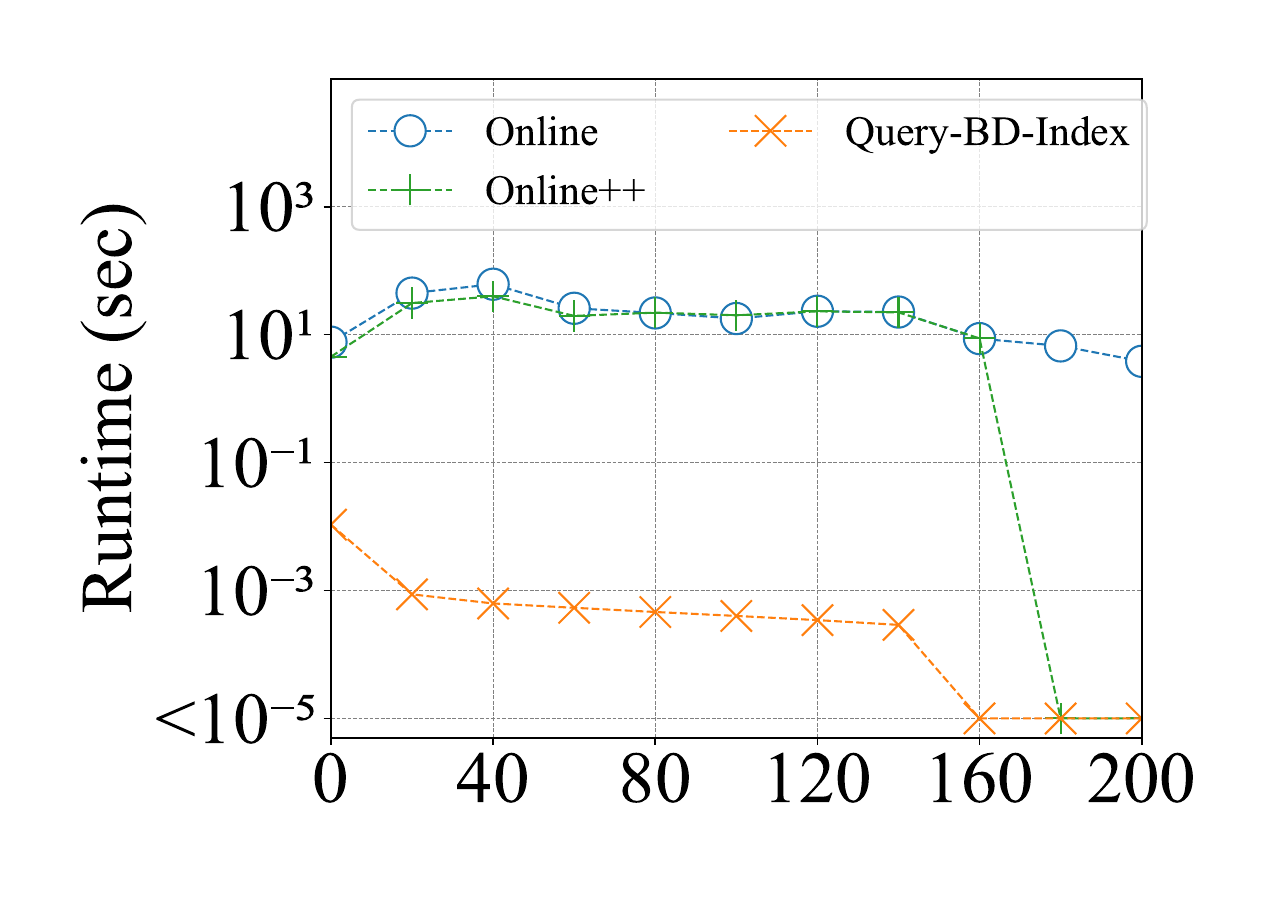}
    \vspace*{-0.8cm}
    \subcaption[font=scriptsize]{\color{\mycolorb}\textmd{$\a=100$, varying $\b$.}} \label{varyb}
  \end{subfigure}
  \begin{subfigure}{0.48\linewidth}
    \centering
    \includegraphics[width=1.0\linewidth]{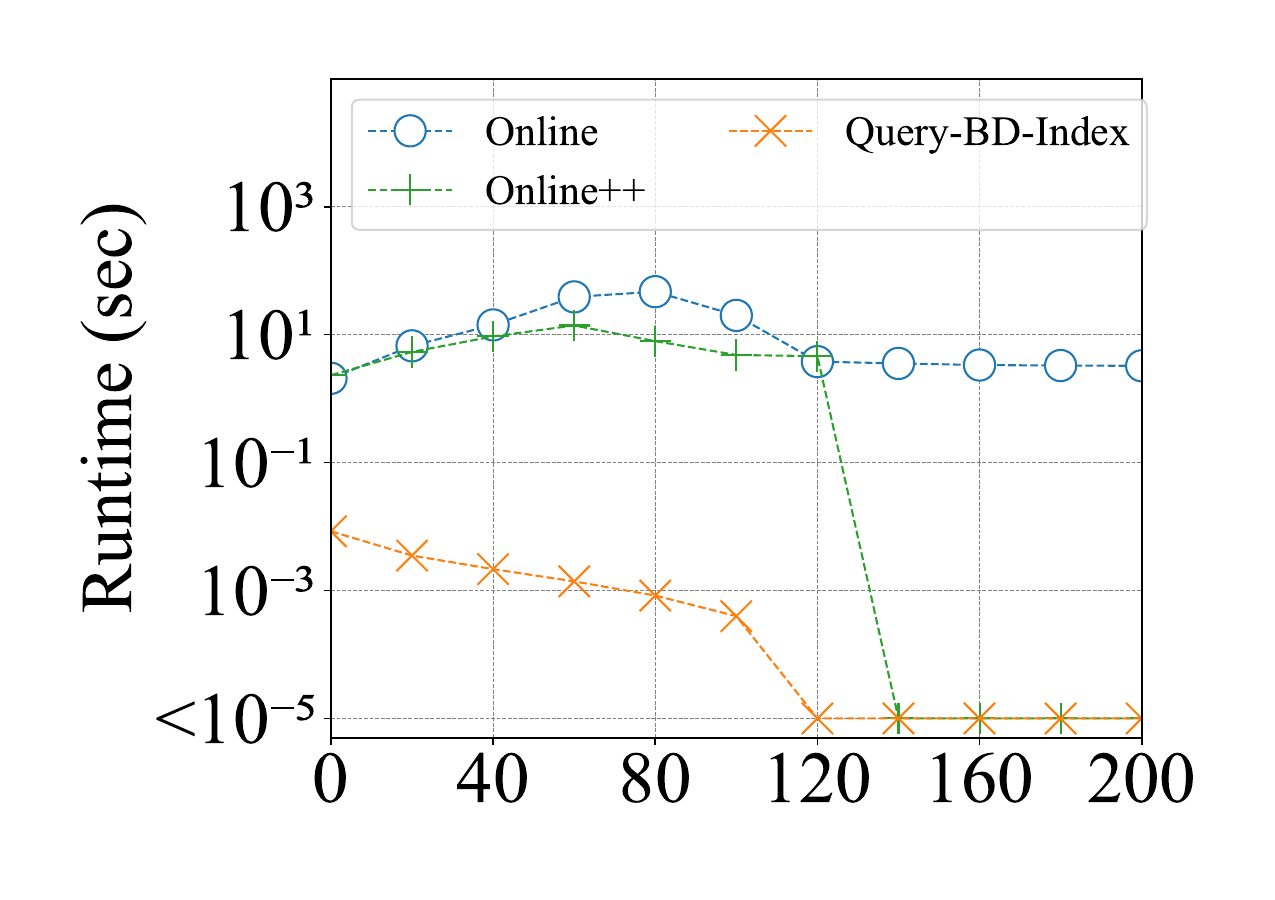}
    \vspace*{-0.8cm}
    \subcaption[font=scriptsize]{\color{\mycolorb}\textmd{$\b=100$, varying $\a$.}} \label{varya}
  \end{subfigure}
  \vspace*{-0.2cm}
  \caption{\color{\mycolorb}Running time with varying $\a$ and $\b$ on dataset \kw{LI}.} \label{queryab}
\end{figure}

{\color{\mycolorb}\stitle{Exp-\expnumber: Query processing time of different algorithms with varying $\a$ and $\b$.} In this experiment, we evaluate the performance of \kw{Online}, \kw{Online++}, and \kw{Query-BD-Index} across different $(\a, \b)$ values. The results on the \kw{LI} dataset are shown in \figref{queryab} (other datasets exhibit similar trends). As observed, the index-based algorithm \kw{Query-BD-Index} consistently outperforms the online algorithms across all parameter settings, achieving a speedup of 2 to 5 orders of magnitude. For \kw{Online++}, it is query-efficient only when $(\alpha,\beta)$ are large enough such that $D_{\alpha,\beta}=\emptyset$ (where its cache-based optimization is effective); when $(\alpha,\beta)$ are smaller, its performance is as slow as \kw{Online}. These results demonstrate that, compared to the online algorithms, \kw{Query-BD-Index} maintains high efficiency across all $(\alpha, \beta)$ combinations.
}

\begin{figure}[t]
    \centering
    \includegraphics[width=0.95\linewidth]{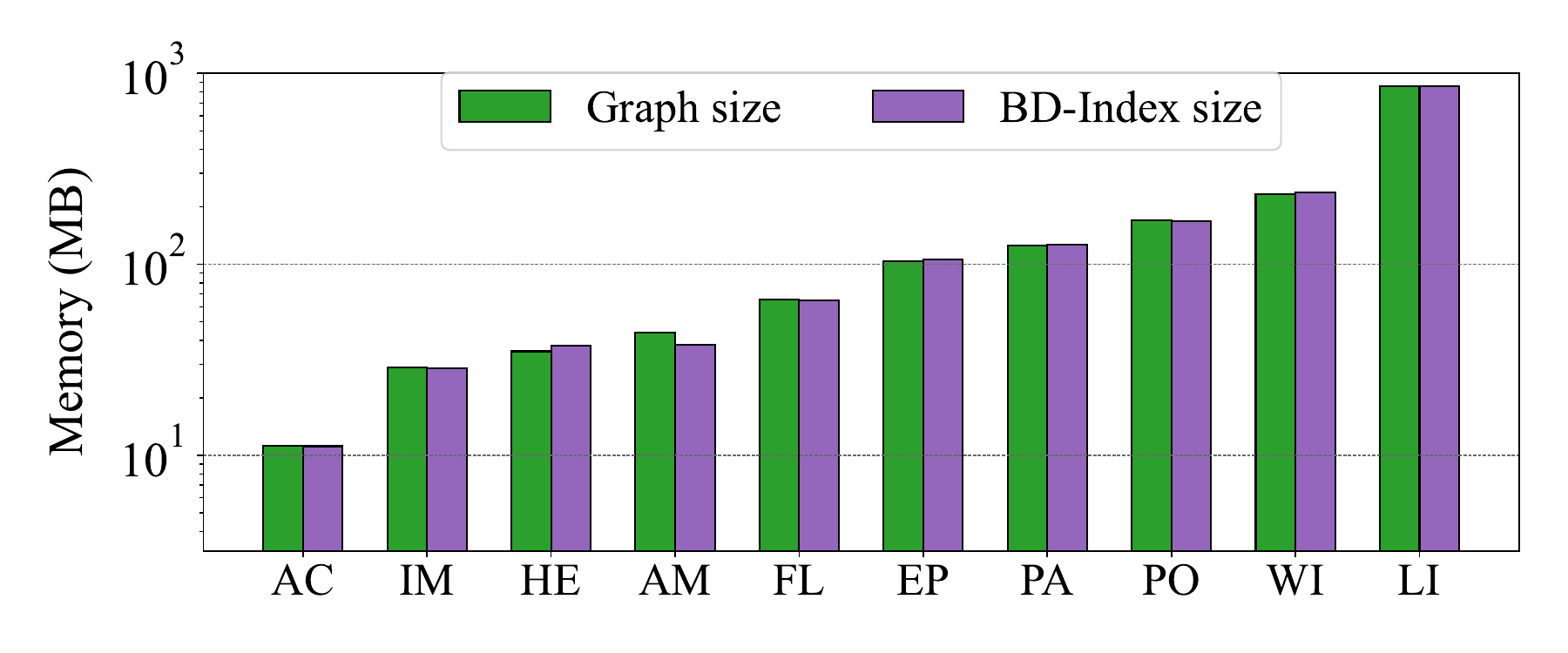}
    \vspace*{-0.5cm}
    \caption{Memory usage of graph and \kw{BD-Index}.} \label{memory}
\end{figure}

\stitle{Exp-\expnumber: Index space usage.} This experiment evaluates the memory consumption of \kw{BD-Index} compared to the graph size. Given that each edge requires storing two endpoints (4 bytes each), we compute the graph size as $8|E|$ bytes. \figref{memory} shows the comparative results. As seen, \kw{BD-Index} exhibits near-identical memory requirements to the graph size across all datasets. For example, on dataset \kw{LI}, the graph occupies 856.3MB while \kw{BD-Index} requires 856.8MB. These results align well with the theoretical $O(|E|)$ space complexity of \kw{BD-Index}, highlighting its highly space-efficient advantage.

\begin{table}[t]
\caption{The construction time of \kw{BD-Index}.} \label{construction}
\vspace*{-0.3cm}
\small
\begin{tabular}{cc|cc}
\hline
Dataset & Runtime (sec) & Dataset & Runtime (sec)  \\ \hline
\kw{AC} & 76.2    & \kw{EP} & 3,952.0  \\
\kw{IM} & 469.2   & \kw{PA} & 2,198.0  \\
\kw{HE} & 2,646.6 & \kw{PO} & 5,250.7  \\
\kw{AM} & 405.7   & \kw{WI} & 7,659.8  \\
\kw{FL} & 3,010.5 & \kw{LI} & 65,194.9 \\ \hline
\end{tabular}
\end{table}

\stitle{Exp-\expnumber: Index construction time.} \tabref{construction} presents the construction times of \kw{BD-Index} using the \kw{Build-BD-Index} algorithm acorss all datasets. As shown, \kw{BD-Index} can be efficiently constructed at various scales. For smaller datasets such as \kw{AC} and \kw{IM}, the index is built in under 500 seconds. For mid-sized graphs like \kw{HE}, \kw{AM}, and \kw{PA}, construction completes in a few thousand seconds. Notably, even on the largest dataset \kw{LI}, which contains over 112 million edges, the index is constructed in approximately 18 hours (65,194.9 seconds). These results demonstrate that \kw{BD-Index} can be constructed within reasonable time even for large-scale graphs, and the practical performance of \kw{Build-BD-Index} significantly outperforms its worst-case time complexity of $O(p \cdot |E|^{1.5} \cdot \log |U\cup V|)$.

\begin{figure}[t]
  \captionsetup[subfigure]{justification=centering}
  \begin{subfigure}{0.45\linewidth}
    \centering
    \includegraphics[width=1.0\linewidth]{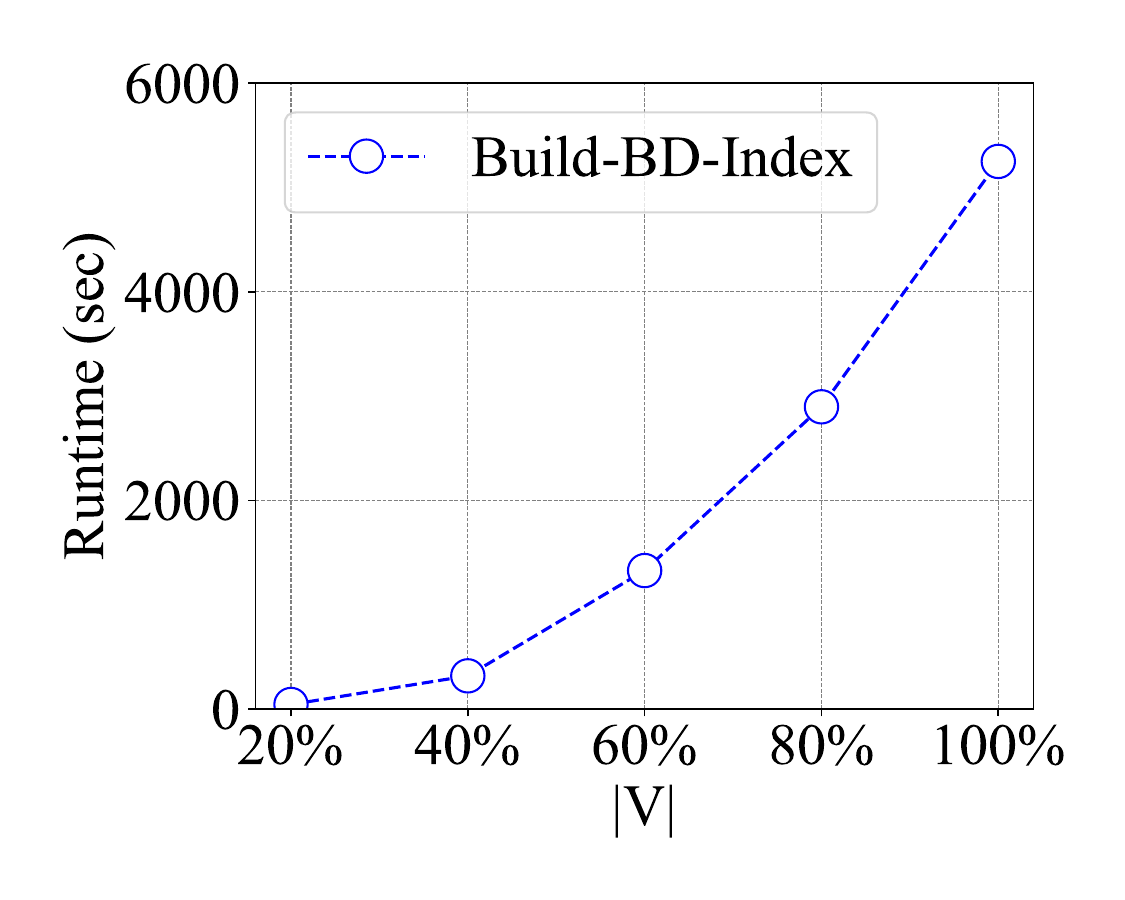}
    \vspace*{-0.8cm}
    \subcaption[font=scriptsize]{\textmd{Dataset \kw{PO}, varying $|V|$.}} \label{pokecV}
  \end{subfigure}
  \begin{subfigure}{0.45\linewidth}
    \centering
    \includegraphics[width=1.0\linewidth]{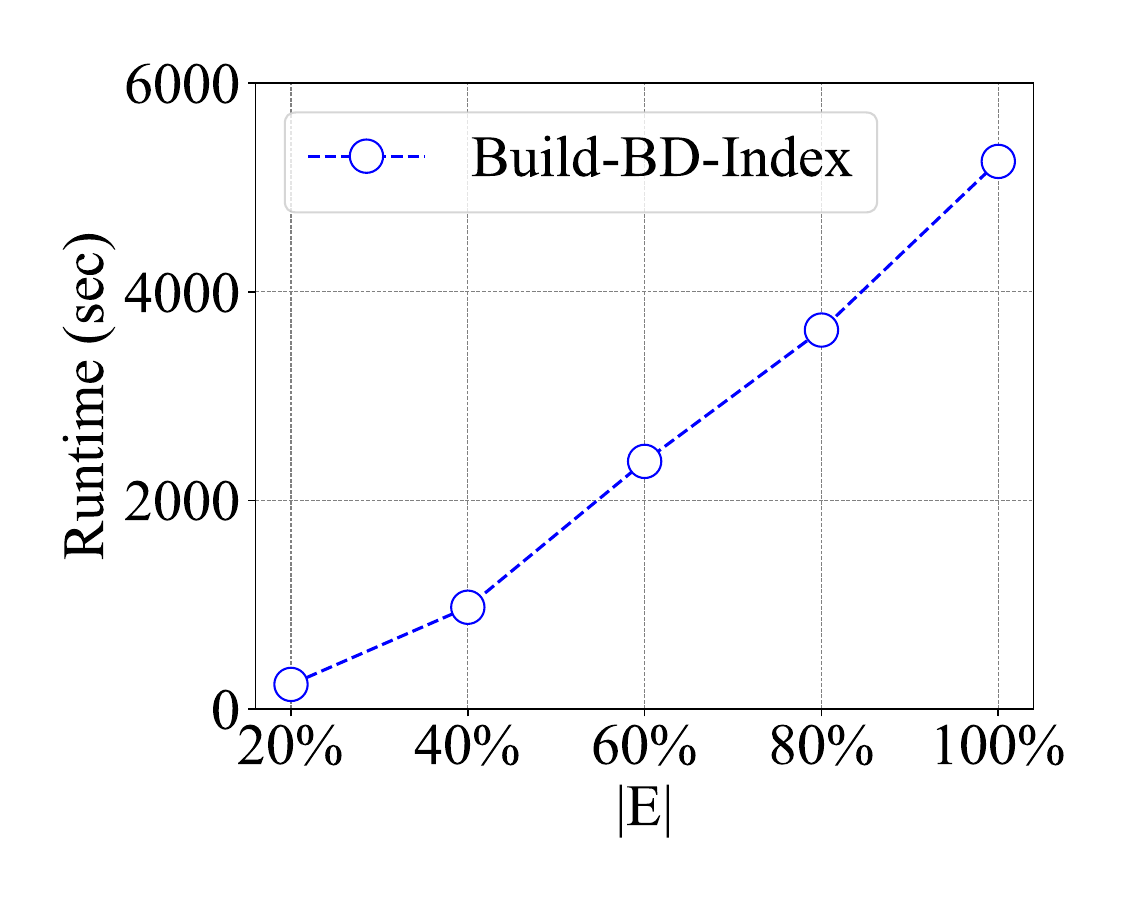}
    \vspace*{-0.8cm}
    \subcaption[font=scriptsize]{\textmd{Dataset \kw{PO}, varying $|E|$.}} \label{pokecE}
  \end{subfigure}
  \begin{subfigure}{0.45\linewidth}
    \centering
    \includegraphics[width=1.0\linewidth]{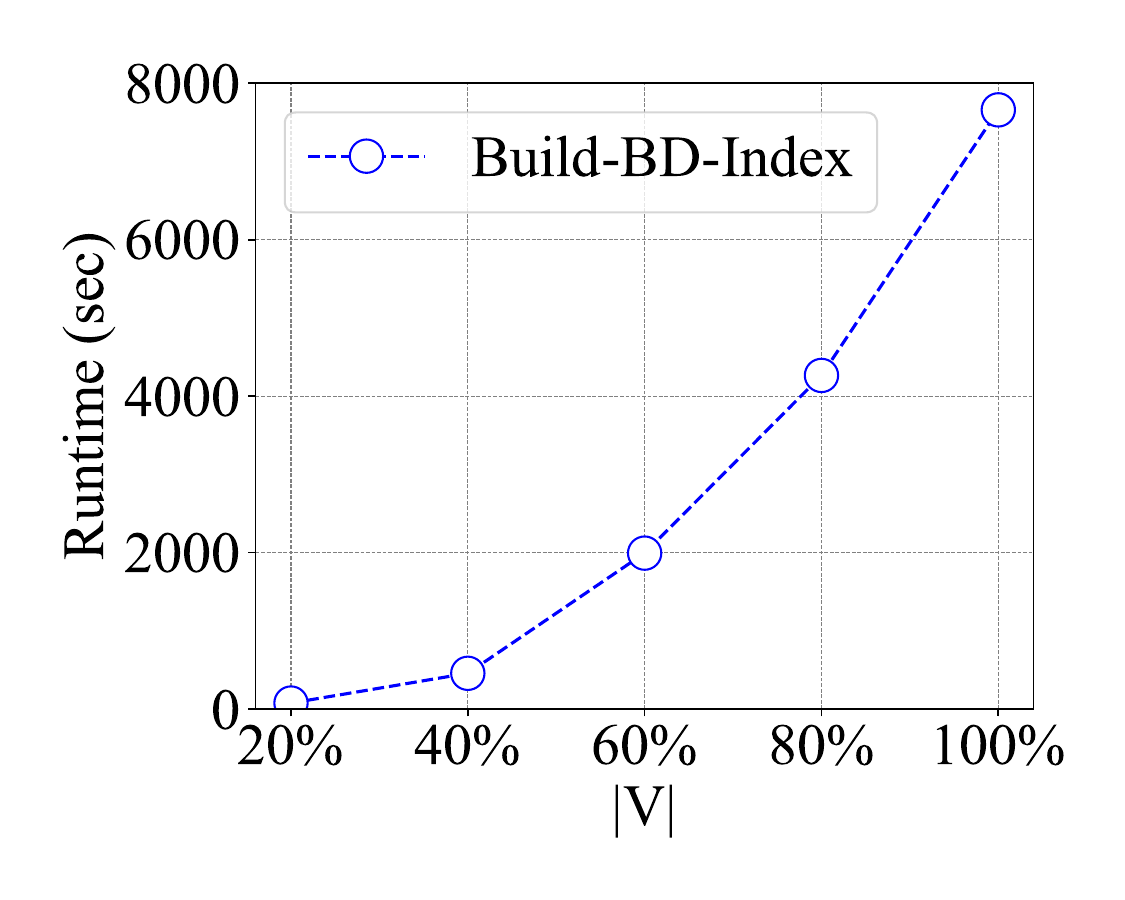}
    \vspace*{-0.8cm}
    \subcaption[font=scriptsize]{\textmd{Dataset \kw{WI}, varying $|V|$.}} \label{wikiesV}
  \end{subfigure}
  \begin{subfigure}{0.45\linewidth}
    \centering
    \includegraphics[width=1.0\linewidth]{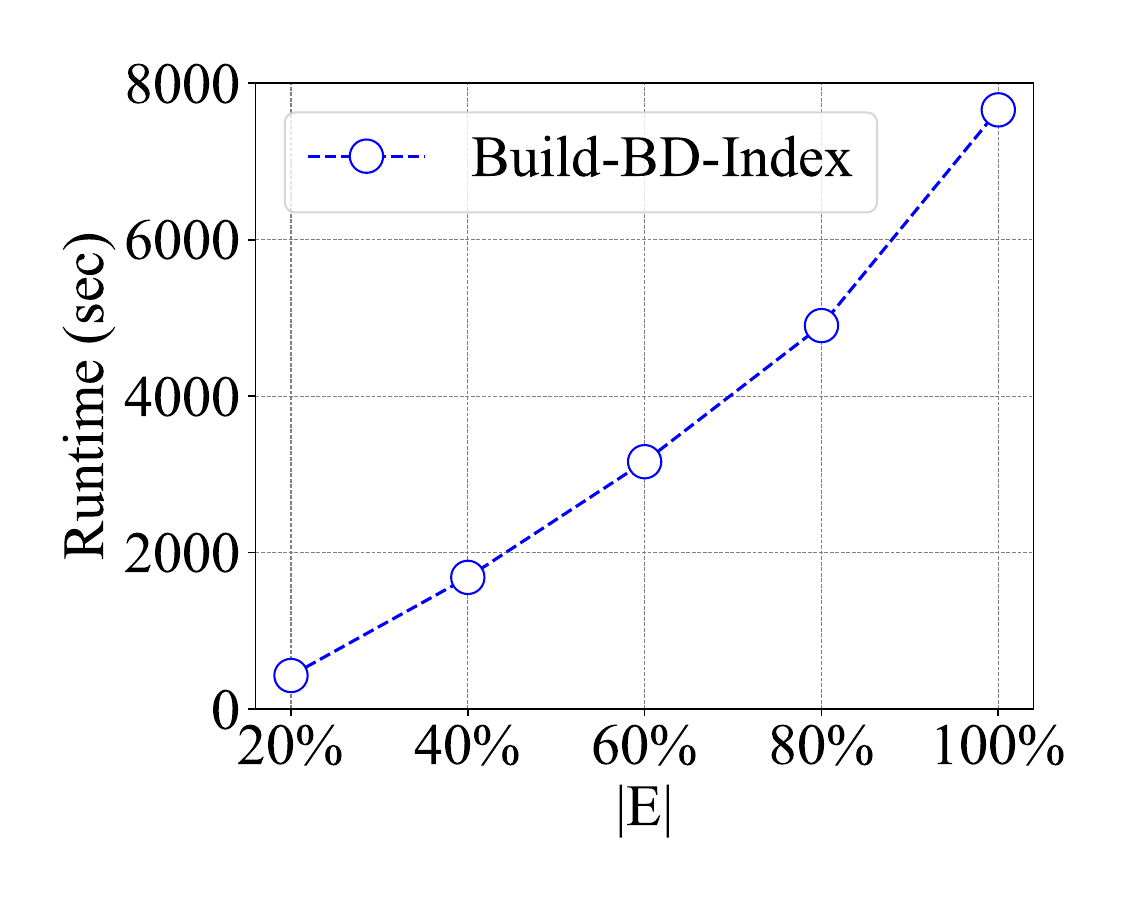}
    \vspace*{-0.8cm}
    \subcaption[font=scriptsize]{\textmd{Dataset \kw{WI}, varying $|E|$.}} \label{wikiesE}
  \end{subfigure}
  \vspace*{-0.2cm}
  \caption{Scalability test of \kw{Build-BD-Index} algorithm.} \label{scal}
\end{figure}

\stitle{Exp-\expnumber: Scalability test of index construction.} We evaluate the scalability of the \kw{Build-BD-Index} algorithm by constructing \kw{BD-Index} for subgraphs containing $\{20\%, 40\%, 60\%, 80\%, 100\%\}$ of the original vertices or edges. \figref{scal} shows the runtime results for \kw{PO} and \kw{WI}, with other datasets exhibiting similar trends. The runtime increases smoothly and predictably with the growth of $|V|$ and $|E|$, indicating that \kw{Build-BD-Index} scales well with both graph size dimensions. In all cases, the growth trend remains stable, with no sudden spikes or inefficiencies observed. These results confirm the strong scalability of our \kw{Build-BD-Index} algorithm.

\subsection{Index maintenance on dynamic graphs}

\begin{figure}[t]
    \centering
    \includegraphics[width=0.98\linewidth]{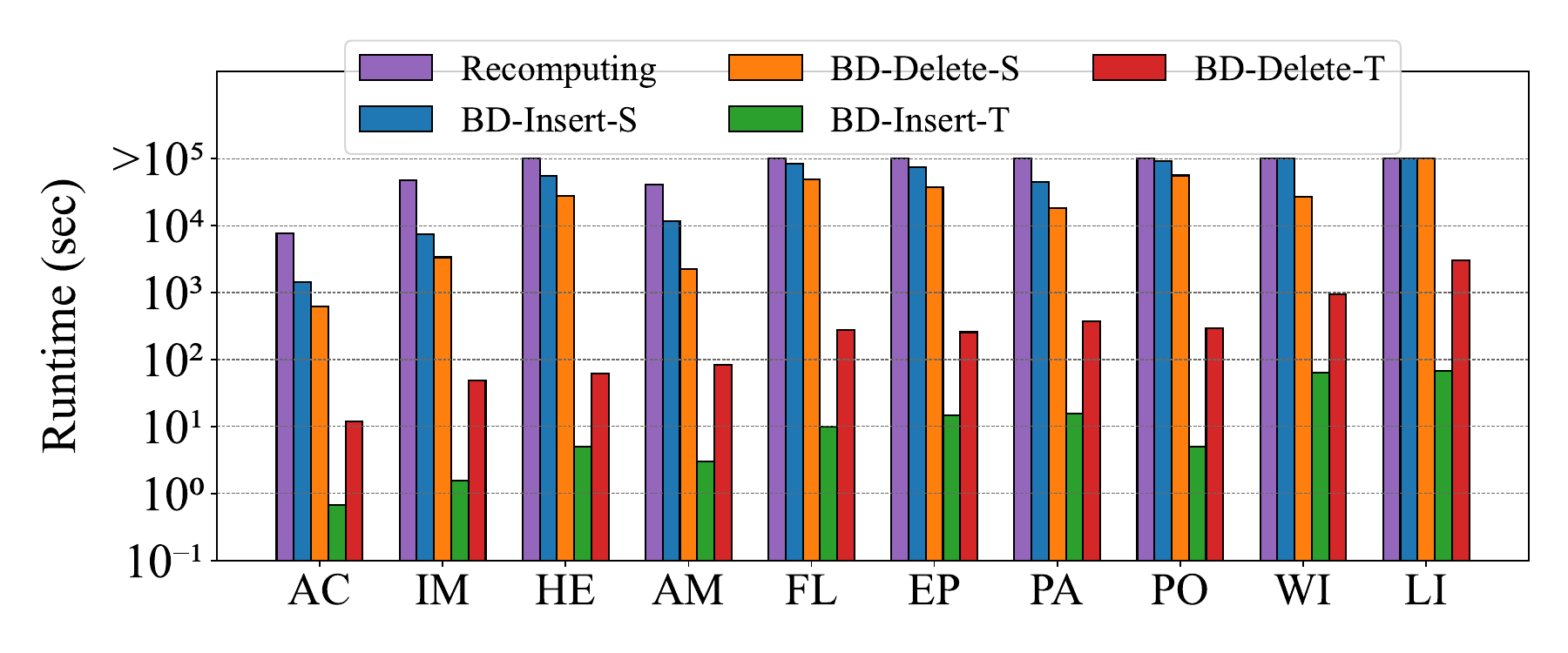}
    \vspace*{-0.5cm}
    \caption{Runtime of maintenance algorithms of \kw{BD-Index} (total time for processing 100 random edge deletions and insertions).} \label{dynamic}
\end{figure}

\stitle{Exp-\expnumber: Runtime of index maintenance algorithms.} Here we evaluate the runtime of maintenance algorithms for \kw{BD-Index}, including the baseline \kw{Recomputing}, space-efficient algorithms (\kw{BD-Insert-S} and \kw{BD-Delete-S}), and time-efficient algorithms (\kw{BD-Insert-T} and \kw{BD-Delete-T}). For each dataset, we perform 100 random edge updates (deletions followed by re-insertions) and measure the total processing time. The results are presented in \figref{dynamic}. 

As seen, the baseline algorithm \kw{Recomputing} is extremely slow, completing within the $10^5$-second runtime limit only for datasets \kw{AC}, \kw{IM}, and \kw{AM}. It is approximately one order of magnitude slower than the space-efficient algorithms and 2–4 orders of magnitude slower than the time-efficient algorithms. For space-efficient algorithms, \kw{BD-Insert-S} and \kw{BD-Delete-S} exhibit comparable runtimes, but both exceed the $10^5$-second limit on the large dataset \kw{LI}. In contrast, the time-efficient algorithms \kw{BD-Insert-T} and \kw{BD-Delete-T} are substantially faster, achieving 3–4 orders of magnitude and 1–2 orders of magnitude speedups, respectively, over their space-efficient counterparts. For example, on dataset \kw{PO}, the total runtimes of \kw{BD-Insert-S}, \kw{BD-Delete-S}, \kw{BD-Insert-T}, and \kw{BD-Delete-T} for processing 100 edge deletions and insertions are 91,860 seconds, 55,972 seconds, 4.9 seconds, and 291 seconds, respectively, corresponding to speedups of 18,747$\times$ and 192$\times$ for insertion and deletion. These results demonstrate the superior efficiency of our proposed \kw{BD-Insert-T} and \kw{BD-Delete-T}.

Additionally, we observe that \kw{BD-Insert-T} is about an order of magnitude faster than \kw{BD-Delete-T}. This performance difference stems from the inherent complexity of edge deletion operations in \kw{BD-Delete-T}, compared to the relatively straightforward implementation of \kw{BD-Insert-T}. These results show that maintaining \kw{BD-Index} for edge insertions is more efficient than for deletions, highlighting our algorithm's practical advantages in real-world applications where graph updates primarily consist of edge insertions.

\begin{figure}[t]
    \centering
    \includegraphics[width=0.95\linewidth]{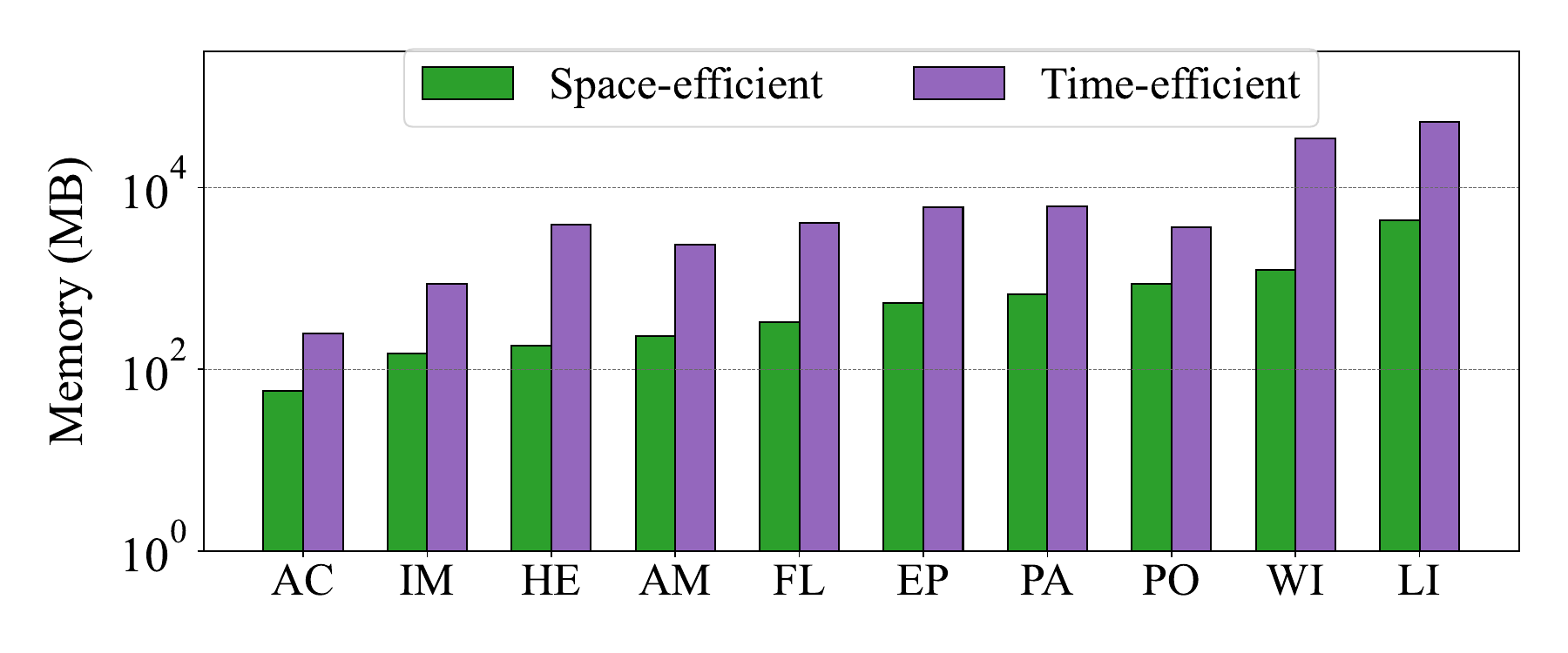}
    \vspace*{-0.5cm}
    \caption{Memory usage of two maintenance approaches.} \label{dynamic_memory}
\end{figure}

\stitle{Exp-\expnumber: Memory usage of two maintenance approach.} We evaluate the memory overhead of two maintenance strategies for \kw{BD-Index}: the space-efficient approach and the time-efficient approach. The total memory consumption comprises the \kw{BD-Index} size, the maintenance process overhead, and the storage for egalitarian orientations (used only in the time-efficient approach). The results are shown in \figref{dynamic_memory}. As seen, the time-efficient approach consumes 4 to 28 times more memory than the space-efficient one. For example, on the largest dataset \kw{LI}, the space-efficient method uses 4,413 MB, while the time-efficient method requires 52,597 MB. Notably, even for massive graphs like \kw{LI} (containing over 100 million edges), the time-efficient method's memory overhead remains practical at approximately 51 GB, well within modern server-grade hardware capacities. On the other hand, in terms of runtime, the time-efficient approach achieves up to four orders of magnitude speedup over the space-efficient method. This highlights that our time-efficient approach offers a favorable time-space trade-off.


%

\begin{figure}[t]
  \captionsetup[subfigure]{justification=centering}
  \begin{subfigure}{0.48\linewidth}
    \centering
    \includegraphics[width=1.0\linewidth]{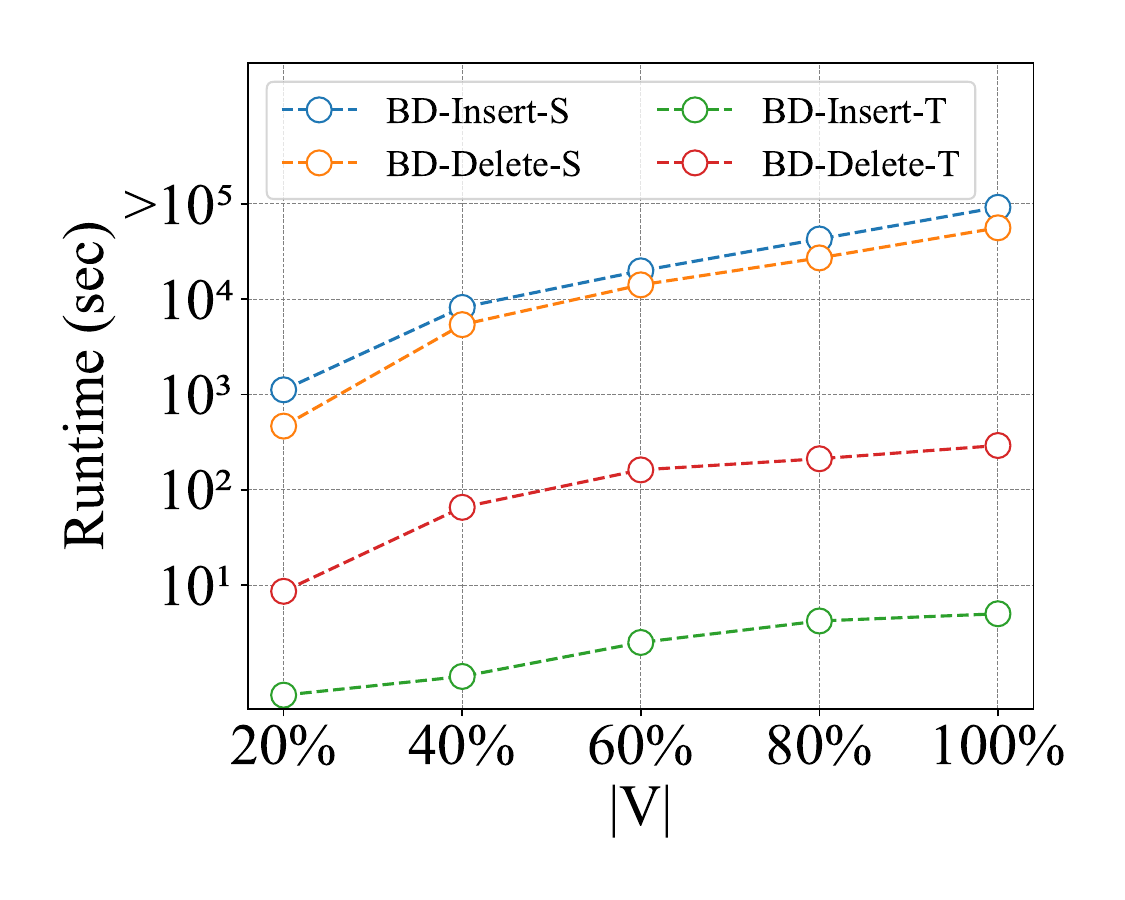}
    \vspace*{-0.8cm}
    \subcaption[font=scriptsize]{\textmd{Dataset \kw{PO}, varying $|V|$.}} \label{dynamic_pokecV}
  \end{subfigure}
  \begin{subfigure}{0.48\linewidth}
    \centering
    \includegraphics[width=1.0\linewidth]{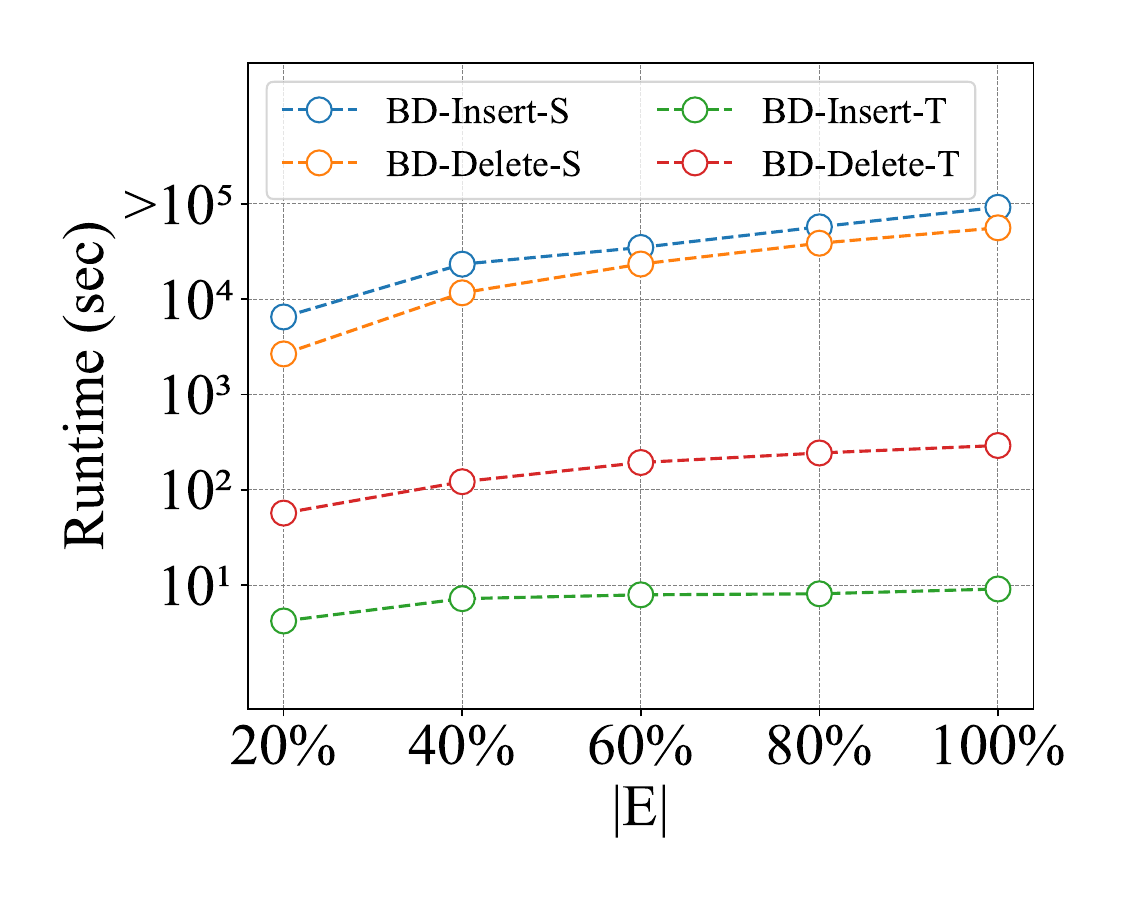}
    \vspace*{-0.8cm}
    \subcaption[font=scriptsize]{\textmd{Dataset \kw{PO}, varying $|E|$.}} \label{dynamic_pokecE}
  \end{subfigure}
  \begin{subfigure}{0.48\linewidth}
    \centering
    \includegraphics[width=1.0\linewidth]{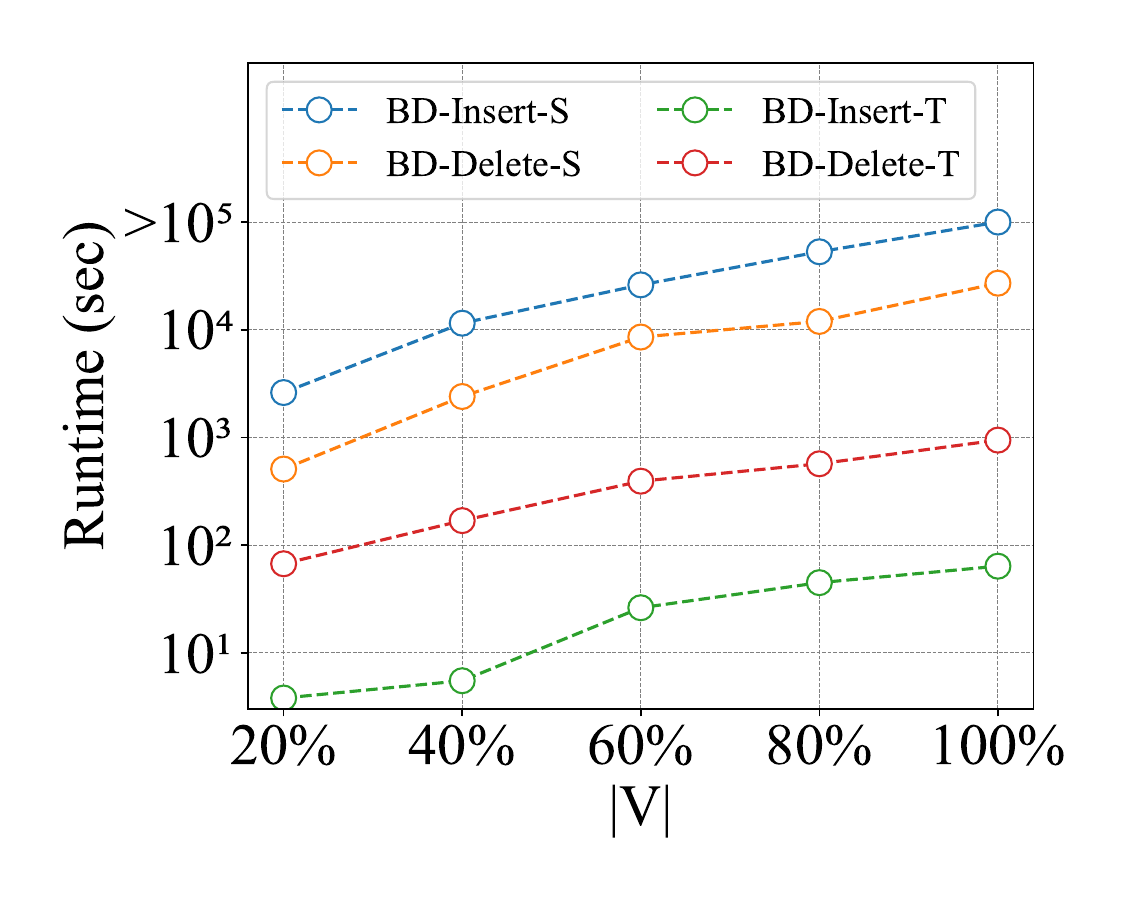}
    \vspace*{-0.8cm}
    \subcaption[font=scriptsize]{\textmd{Dataset \kw{WI}, varying $|V|$.}} \label{dynamic_wikiesV}
  \end{subfigure}
  \begin{subfigure}{0.48\linewidth}
    \centering
    \includegraphics[width=1.0\linewidth]{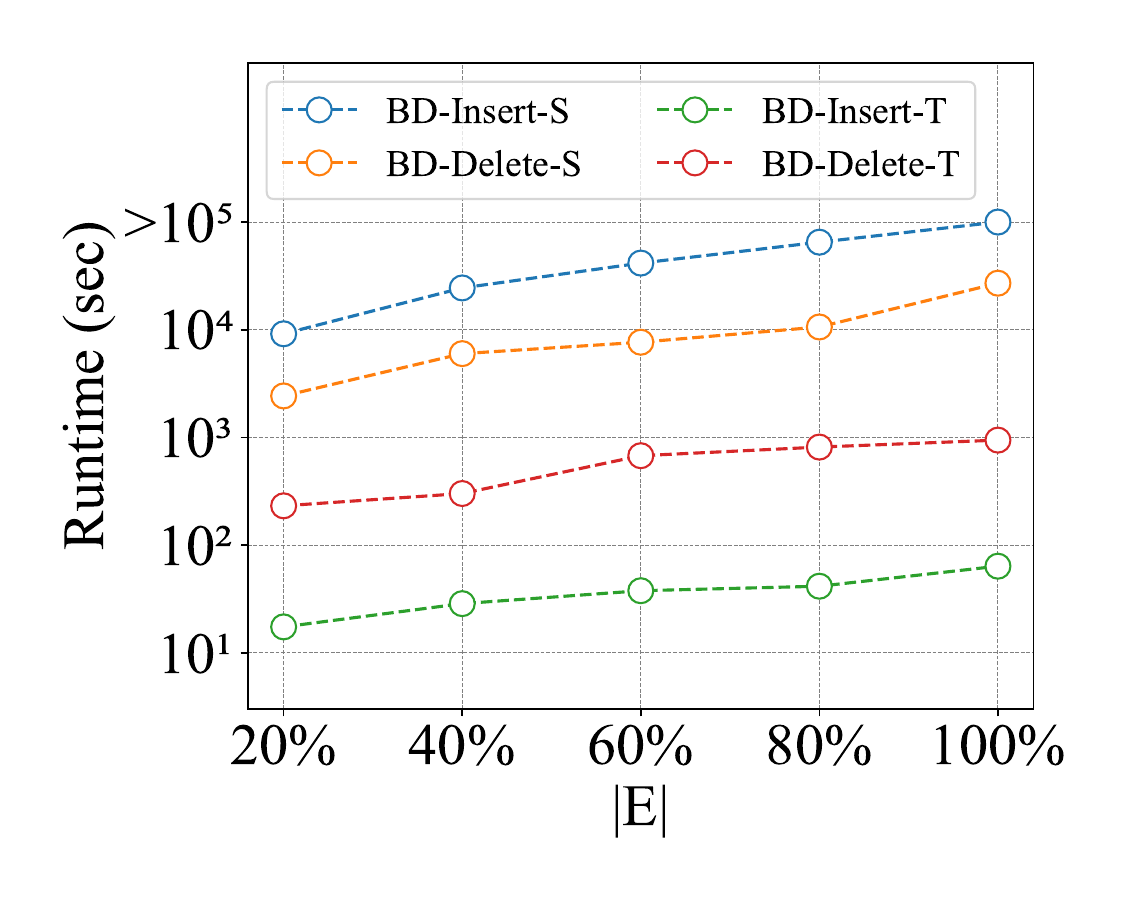}
    \vspace*{-0.8cm}
    \subcaption[font=scriptsize]{\textmd{Dataset \kw{WI}, varying $|E|$.}} \label{dynamic_wikiesE}
  \end{subfigure}
  \vspace*{-0.2cm}
  \caption{Scalability test of maintenance algorithms (measuring by the total time for processing 100 random edge deletions and insertions).} \label{dynamic_scal}
\end{figure}

\stitle{Exp-\expnumber: Scalability test of index maintenance algorithms.} This experiment evaluates the scalability of our maintenance algorithms using the subgraphs generated in Exp-5. For each subgraph, we perform 100 random edge deletions followed by re-insertions, measuring the total runtime. \figref{dynamic_scal} shows the results on \kw{PO} and \kw{WI}, with other datasets exhibiting similar trends. As shown, the space-efficient algorithms \kw{BD-Insert-S} and \kw{BD-Delete-S} exhibit slow runtimes, and even encounter timeout issues (>$10^5$ seconds) when the graph becomes large. In contrast, the time-efficient algorithms \kw{BD-Insert-T} and \kw{BD-Delete-T} maintain fast performance with gradual runtime increases as graphs grow. These findings demonstrate the superior scalability of the time-efficient algorithms in handling edge updates on large-scale graphs.

\stitle{Exp-\expnumber: Effect of different edge update strategies.} This experiment evaluates the maintenance algorithms under different edge update strategies. We define the degree of an edge as the sum of its endpoints' degrees. The edges are then partitioned into three categories: low-degree (edges with lowest-$1/3$ degree), medium-degree (edges with middle-$1/3$ degree), and high-degree (edges with highest-$1/3$ degree), representing regions with different density in the graph. For each category, we randomly select 100 edges for deletion and re-insertion, forming distinct update strategies. \figref{strategy} shows the total runtime across all datasets under different update strategies. The time-efficient algorithms \kw{BD-Insert-T} and \kw{BD-Delete-T} consistently outperform the space-efficient algorithms \kw{BD-Insert-S} and \kw{BD-Delete-S} by 1-5 orders of magnitude. Notably, our proposed algorithms exhibit minimal runtime variation across different edge selection strategies. In particular, the time-efficient algorithms maintain high performance regardless of update edge type, confirming their robustness.

\begin{figure}[t]
  \captionsetup[subfigure]{justification=centering}
  \begin{subfigure}{0.95\linewidth}
    \centering
    \includegraphics[width=1.0\linewidth]{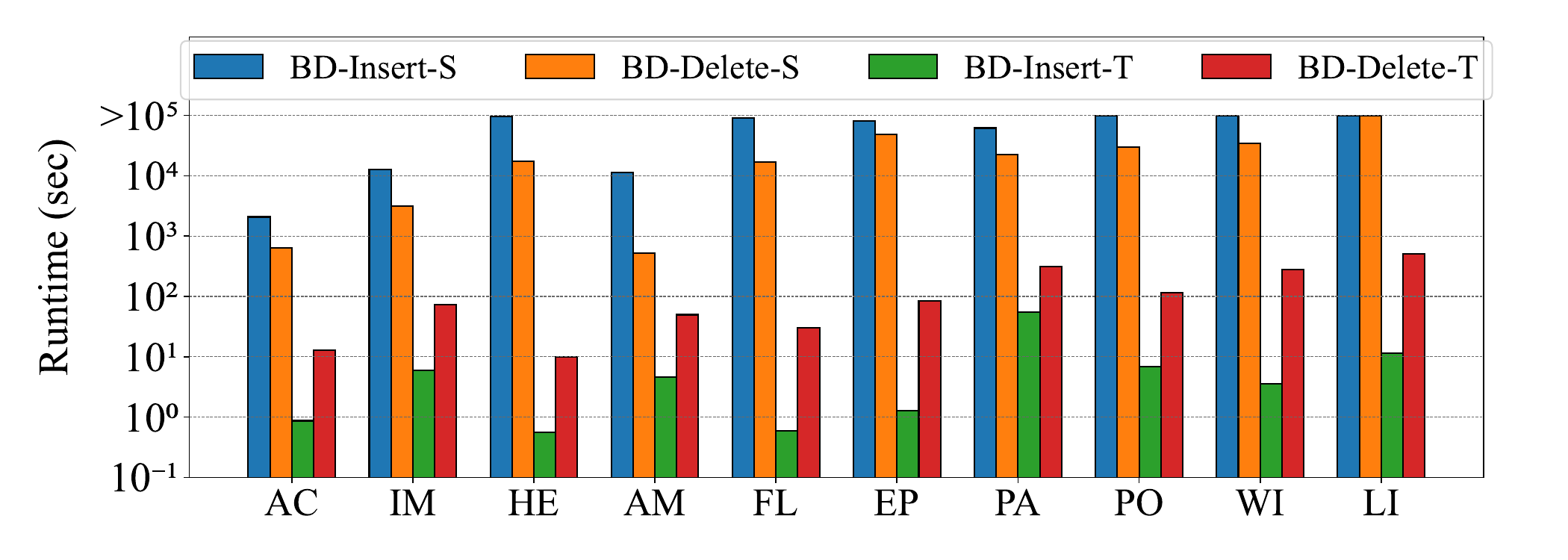}
    \vspace*{-0.8cm}
    \subcaption[font=scriptsize]{\textmd{Low-degree edges.}} \label{low}
  \end{subfigure}
  \begin{subfigure}{0.95\linewidth}
    \centering
    \includegraphics[width=1.0\linewidth]{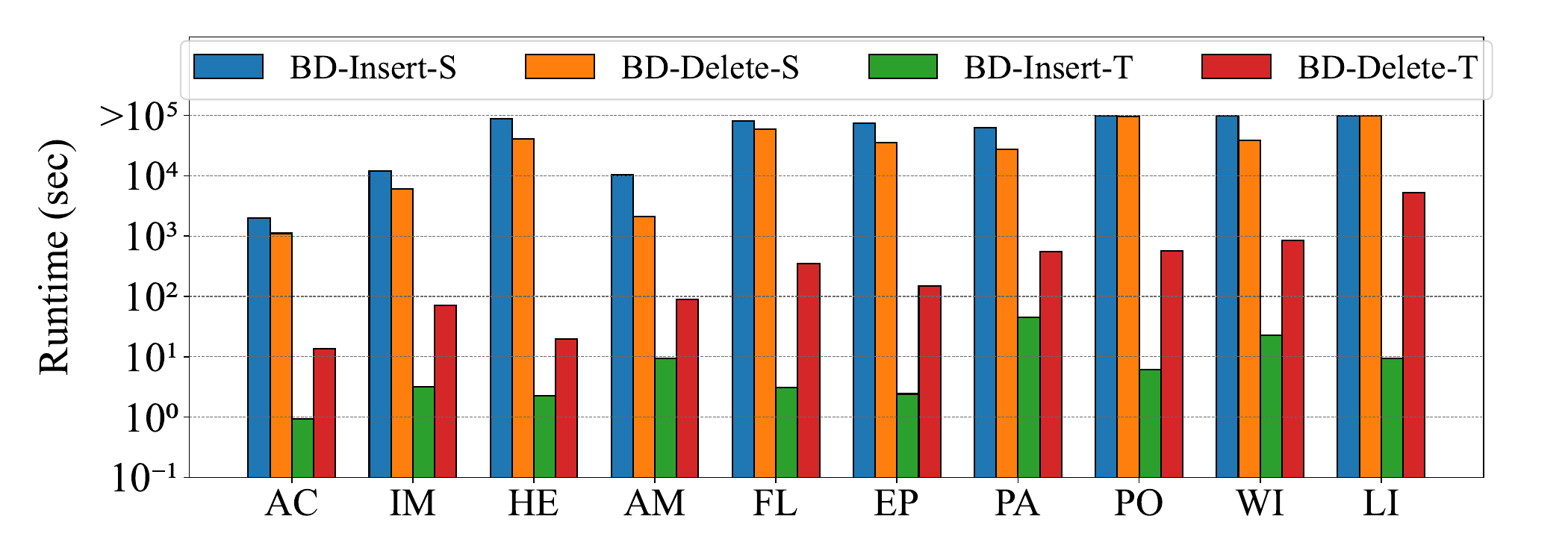}
    \vspace*{-0.8cm}
    \subcaption[font=scriptsize]{\textmd{Medium-degree edges.}} \label{mid}
  \end{subfigure}
  \begin{subfigure}{0.95\linewidth}
    \centering
    \includegraphics[width=1.0\linewidth]{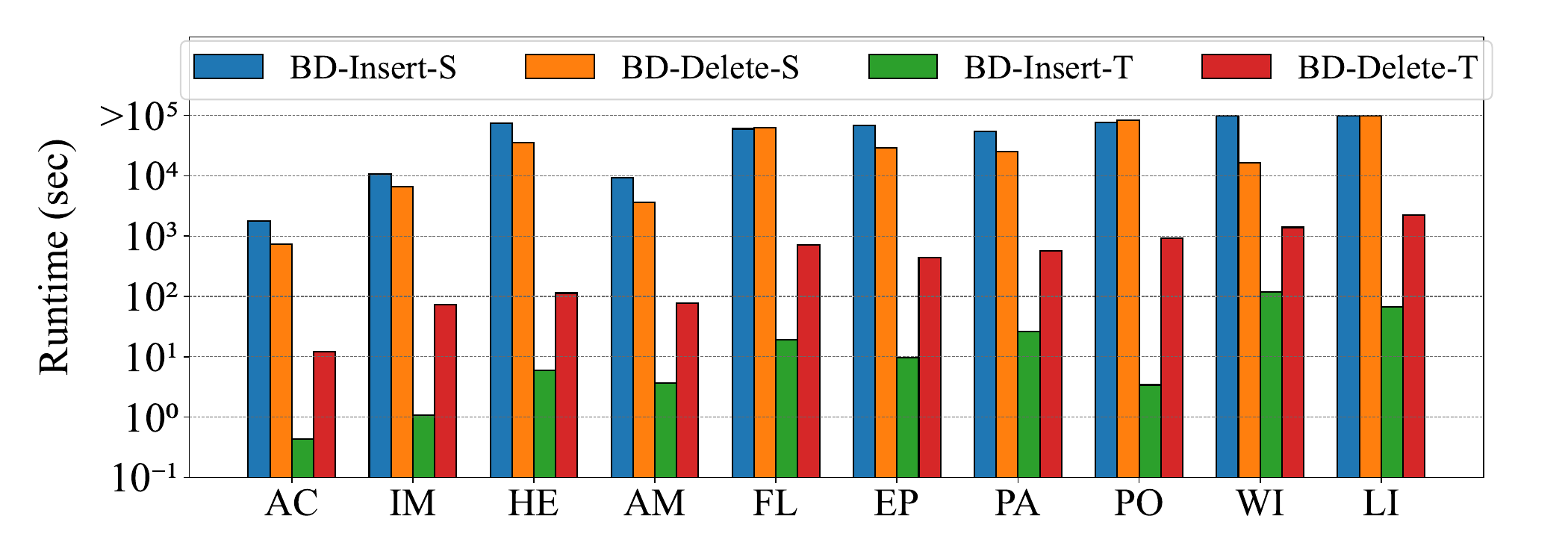}
    \vspace*{-0.8cm}
    \subcaption[font=scriptsize]{\textmd{High-degree edges.}} \label{high}
  \end{subfigure}
  \vspace*{-0.2cm}
  \caption{Runtime of maintenance algorithms with different edge selection strategies (total time of 100 random edge deletions and insertions).} \label{strategy}
\end{figure}

{\color{\mycolorb}

\subsection{Case studies}

\stitle{Performance of different algorithms under mixed updates and queries.} To better simulate the update and query patterns of real streaming systems, we conduct a case study on the real-world temporal \kw{AM-app} dataset\footnote{\color{\mycolorb}Data source: \url{https://amazon-reviews-2023.github.io/data_processing/0core.html}.}, which is an Amazon e-commerce user–product review network for the appliances category. The dataset contains $|E|=2.1$M edges, $|U|=1.7$M users, and $|V|=94.3$K products, where each edge is associated with a timestamp representing a review. 

First, we simulate edge updates in an insertion-only mode: specifically, the latest 2,000 edges (covering the period from 2023-08-13 to 2023-09-13) are incrementally inserted into the graph in chronological order. During the insertion process, we generate 20 query sessions at random time points. Each session lasts 10 seconds with queries distributed randomly over its duration. By varying the number of queries per session, we simulate different query frequencies. In our setup, the system processes queries and updates sequentially in chronological order, and subsequent operations enter a queue and await prior operations' completion. All reported times are therefore measured as turnaround times, defined as the duration from issue to completion of each query or update. The results are presented in \subfigref{insonly} (note that the online algorithms do not require update time). \kw{BD-Query-S} and \kw{BD-Update-S} denote the query and update times of \kw{BD-Index} under the space-efficient maintenance strategy, while \kw{BD-Query-T} and \kw{BD-Update-T} represent the corresponding times under the time-efficient strategy. We observe that as queries become frequent ($\approx$100 queries per session, equivalent to $\approx$10 queries per second), the average query turnaround times of the online algorithms \kw{Online} and \kw{Online++} quickly rises beyond 10 seconds, and their maximum turnaround times exceeds 40 seconds, indicating severe queuing delays that render them impractical for frequent-query workloads. In contrast, the space-efficient strategy maintains low average query times ($\approx$0.3 seconds), but suffers from high maximum query times ($\approx$10 seconds) due to its slow update processing (queries issued immediately after an update must wait for the update to complete). The time-efficient strategy, on the other hand, achieves consistently low query and update times, with the average and maximum query times remaining below $0.01$ seconds and $0.1$ seconds, respectively.

\begin{figure}[t]
  \captionsetup[subfigure]{justification=centering}
  \begin{subfigure}{0.9\linewidth}
    \centering
    \includegraphics[width=1.0\linewidth]{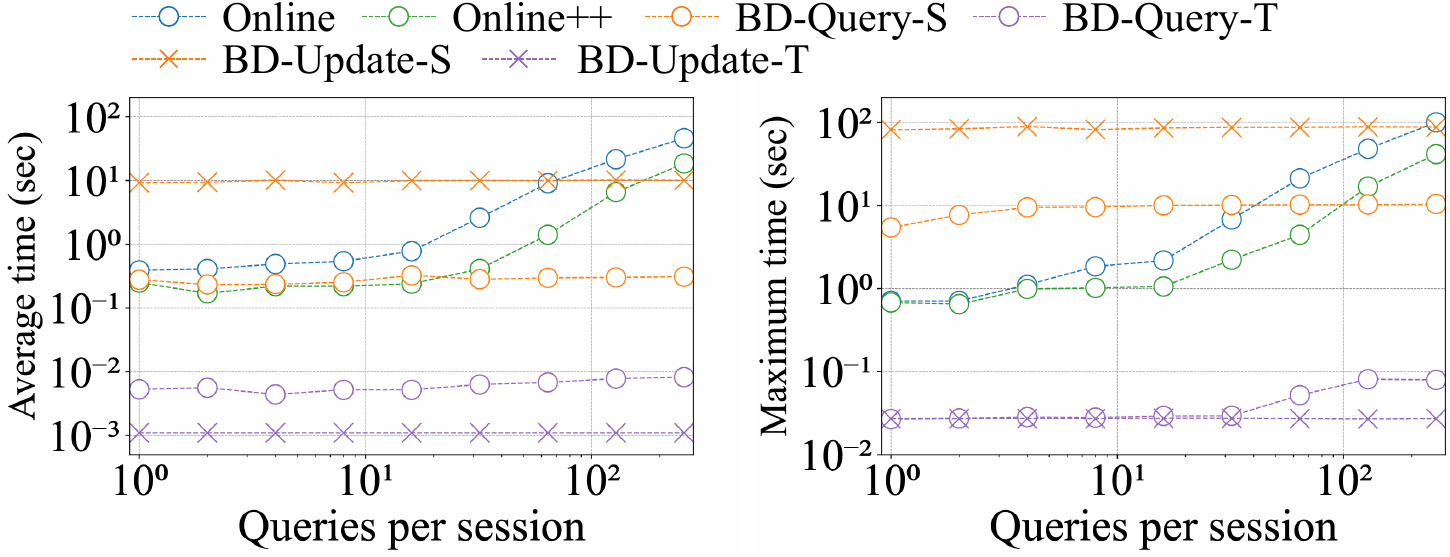}
    \vspace*{-0.5cm}
    \subcaption[font=scriptsize]{\color{\mycolorb}\textmd{\scalebox{0.97}{Average and maximum turnaround time in insertion-only case.}}} \label{insonly}
  \end{subfigure}
  \begin{subfigure}{0.9\linewidth}
    \centering
    \includegraphics[width=1.0\linewidth]{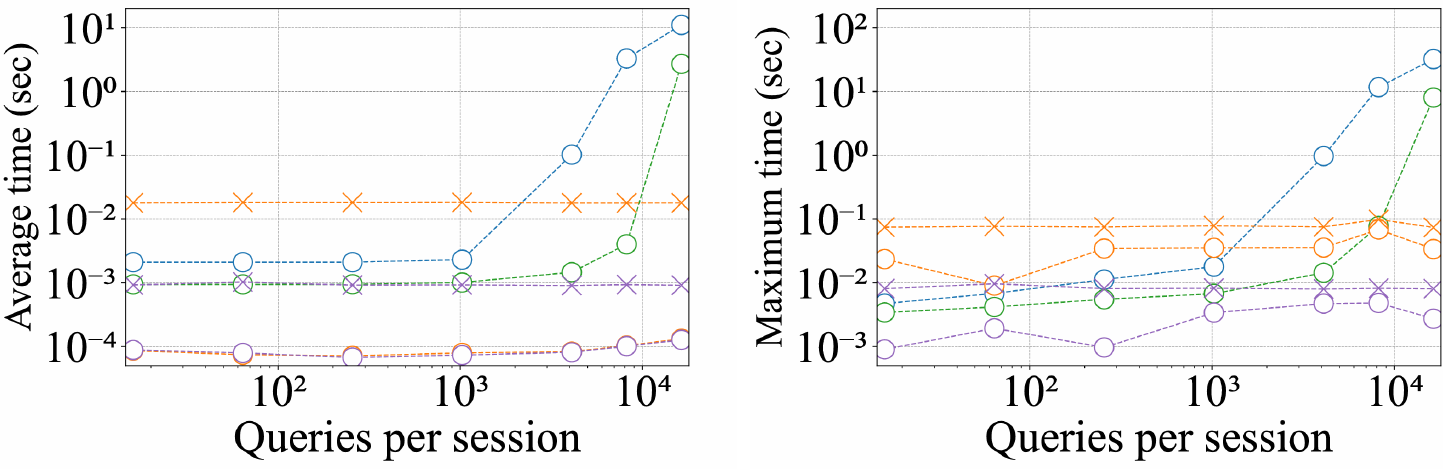}
    \vspace*{-0.5cm}
    \subcaption[font=scriptsize]{\color{\mycolorb}\textmd{\scalebox{0.97}{Average and maximum turnaround time in first-in-first-out case.}}} \label{fifo}
  \end{subfigure}
  \vspace*{-0.2cm}
  \caption{\color{\mycolorb}Query turnaround time and update turnaround time of different algorithms on dataset \kw{AM-app}.} \label{case}
\end{figure}

Second, we consider another update mode, first-in-first-out. Starting from an empty graph, we insert 100,000 edges in chronological order (spanning from 2002-11-18 to 2014-07-01) while maintaining a sliding time window of one year (i.e., edges older than one year are removed during the insertion process). The query generation follows the same procedure as in the insertion-only case, and the results are presented in \subfigref{fifo}. As the query frequency increases ($\approx$$10^4$ queries per session, equivalent to $\approx$$10^3$ queries per second), the maximum query turnaround time of \kw{Online} and \kw{Online++} rise dramatically to 32 seconds and 8 seconds, respectively. In contrast, the query turnaround times of the space-efficient and time-efficient strategies never exceed 0.1 seconds and 0.01 seconds, respectively. This efficiency stems from the fact that \kw{BD-Index} achieves optimal $O(|R|)$ query time, enabling direct retrieval of the target subgraph without redundant computation. \comment{These results indicate that the online algorithms are too slow to handle frequent queries, potentially causing timeouts (e.g., exceeding 100 seconds). For the space-efficient strategy, its query turnaround time mainly depends on its update overhead; when the graph is very large (e.g., the insertion-only case where $|E|\approx 2$M), some queries may be delayed due to waiting for updates, but most queries can still be answered within 0.3 seconds. For the time-efficient strategy, all query turnaround times in our tests remained below 0.1 seconds, fully meeting real-time requirements in practical applications. Moreover, the extra orientations maintained by this strategy required only <185 MB of additional space, making it the best-performing approach in our case study.}

\begin{figure}[t]
  \captionsetup[subfigure]{justification=centering}
  \begin{subfigure}{0.48\linewidth}
    \centering
    \includegraphics[width=1.0\linewidth]{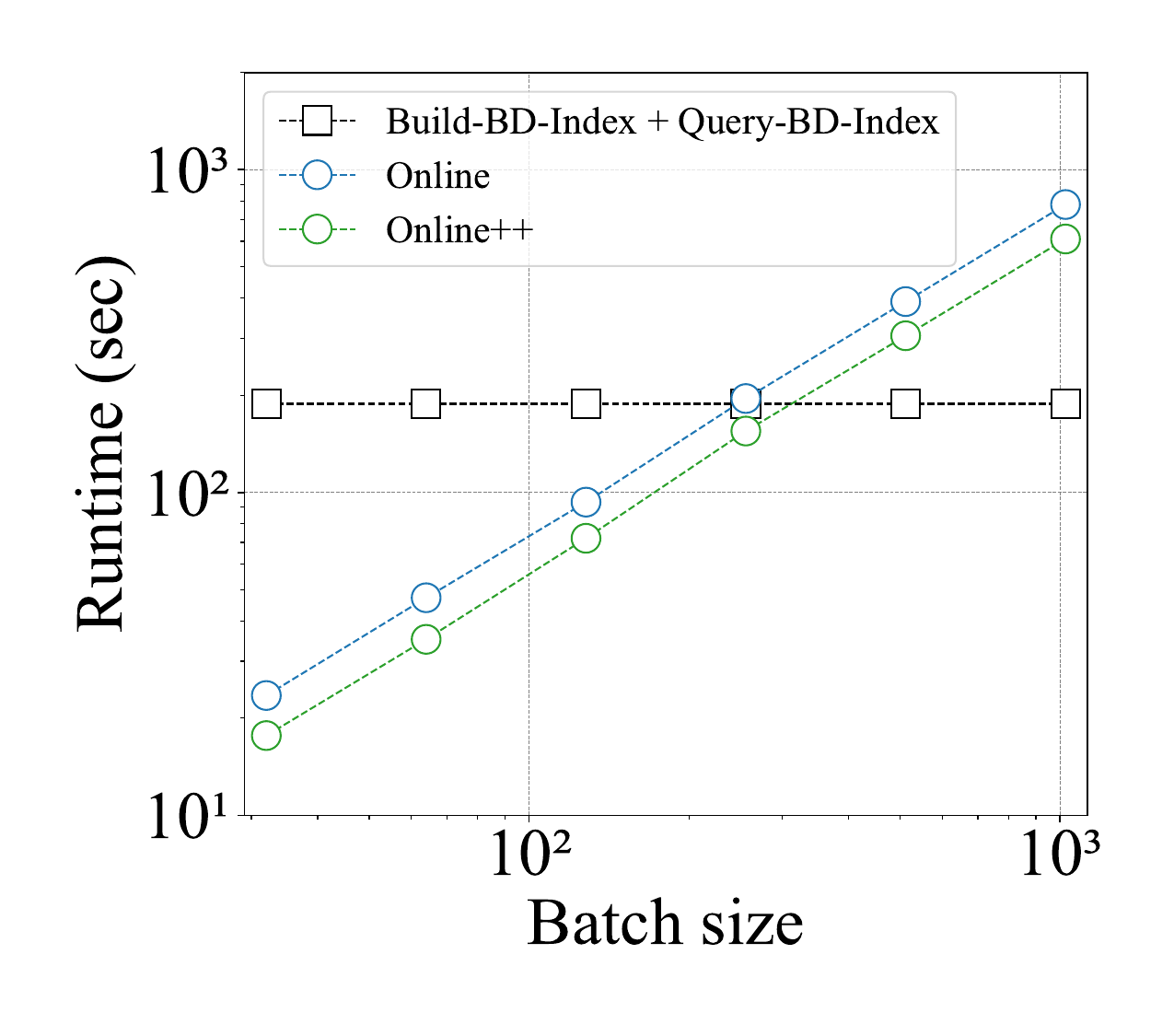}
    \vspace*{-0.8cm}
    \subcaption[font=scriptsize]{\color{\mycolorb}\textmd{Batch-query results.}} \label{batch-query}
  \end{subfigure}
  \begin{subfigure}{0.48\linewidth}
    \centering
    \includegraphics[width=1.0\linewidth]{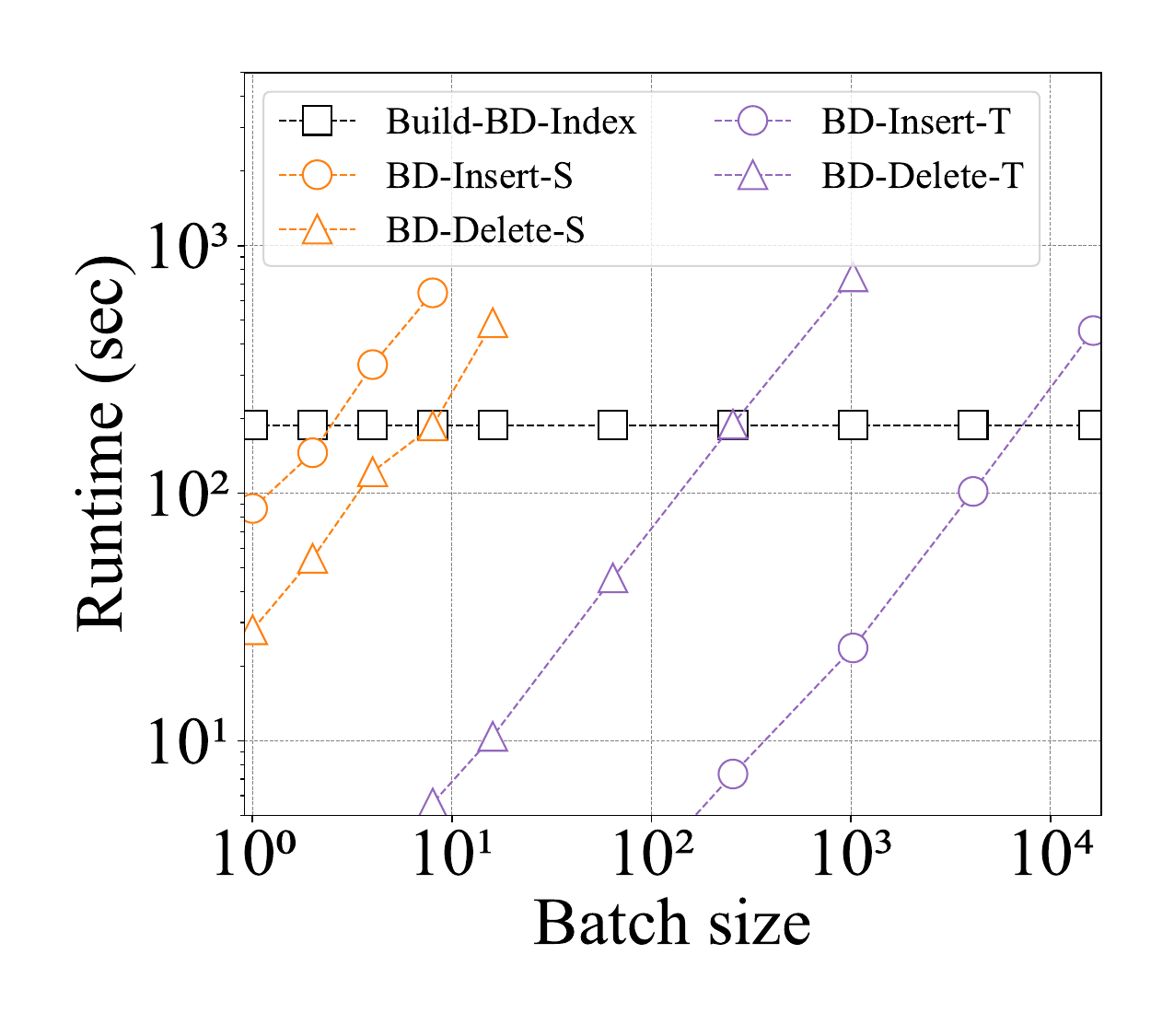}
    \vspace*{-0.8cm}
    \subcaption[font=scriptsize]{\color{\mycolorb}\textmd{Batch-update results.}} \label{batch-update}
  \end{subfigure}
  \vspace*{-0.2cm}
  \caption{\color{\mycolorb}Batch-query and batch-update results on the temporal \kw{AM} dataset.} \label{batch}
\end{figure}

\stitle{Performance of \kw{BD-Index} under batch queries and updates.} In this case study, we conduct tests on batch updates and batch queries using the real-world temporal Amazon review dataset \kw{AM} (details in \tabref{dataset}). For the batch-query test, we evaluate whether it is more efficient to process a batch of queries using the online algorithms or to construct the index and process queries based on it. The results are shown in \subfigref{batch-query}. We observe that when the query batch size exceeds 300, constructing the \kw{BD-Index} and answering queries on the index outperforms computing all queries directly using the online algorithms. For instance, at 1,024 queries, the index-based approach is over 3.24$\times$ faster than the online baseline \kw{Online++}. For the batch-update test, we evaluate whether it is more efficient to maintain the index incrementally using dynamic maintenance algorithms or to rebuild it from scratch upon receiving a batch of updates. The results are shown in \subfigref{batch-update}. As seen, the maintenance speed of the space-efficient strategy is relatively slow: the time to rebuild the index is roughly equivalent to inserting 3 edges and deleting 8 edges. In contrast, the time-efficient strategy demonstrates strong batch-update capability: the time to rebuild the index is roughly equivalent to inserting 5,000 edges and deleting 250 edges. This indicates that the time-efficient strategy can handle large-scale updates much more effectively, making it highly suitable for dynamic environments where frequent and sizable updates are expected.

\stitle{Summary.} The above case studies demonstrate that our \kw{BD-Index}-based approach consistently outperforms online algorithms in query processing, particularly under high-frequency query workloads. In realistic recommendation systems, query rates can reach 100,000 queries per second \cite{DBLP:conf/cikm/ZhangTWW24}, and real-time systems typically require responses within 0.5 seconds \cite{amazon,DBLP:conf/cikm/ZhangTWW24,DBLP:conf/sigmod/KersbergenSS22}. Under such stringent conditions, our index-based approach is far more suitable than online methods, as \kw{BD-Index} achieves optimal $O(|R|)$ query complexity and thus provides minimal latency. In our case studies simulating realistic workloads, the time-efficient maintenance strategy of \kw{BD-Index} processes all queries with turnaround times below 0.1 seconds, demonstrating exceptional capability for supporting $(\alpha,\beta)$-dense subgraph queries in highly dynamic environments. The space-efficient strategy, while more memory-friendly, incurs slower update performance and may therefore be preferable only when updates are relatively infrequent or memory constraints are strict.
}

\section{RELATED WORKS}

\stitle{Cohesive subgraph models in bipartite graphs.} There exists a wide range of cohesive models for bipartite graphs, such as biclique \cite{biclique1, biclique2, biclique3, biclique4} and its relaxed variant, the $k$-biplex \cite{biplex1, biplex2, biplex3}, as well as quasi-biclique models \cite{quasibiclique1,quasiclique2,quasiclique3}. In addition, butterfly-based model $k$-bitruss \cite{bitruss1, bitruss2, bitruss3} and degree-based model $(\alpha,\beta)$-core \cite{abcore1, abcore2, abcore3} have also been studied. For identifying the densest regions of a graph, the densest subgraph model is widely used \cite{densest1, densest2, densest3, densest4}. In terms of connectivity, the $k$-neighbor connectivity model has been proposed \cite{connectivity}. Recently, the $(\alpha,\beta)$-dense subgraph model was introduced \cite{ddbipartite} to effectively capture bipartite graph density structures with relatively efficient computation. For dynamic graphs, there exist algorithms for maintaining bicliques \cite{biclique_dynamic} and $(\alpha,\beta)$-cores \cite{abcore3}. To the best of our knowledge, we are the first to investigate efficient solutions for $(\alpha,\beta)$-dense subgraph search in both static and dynamic graphs.

\stitle{Density-based subgraph models in unipartite graphs.} In unipartite graphs, the most fundamental problems related to graph density is the densest subgraph search problem \cite{goldberg1, densest1, flowless, supermodularity}, which aims to find a subgraph that maximizes the ratio between the number of edges and the number of nodes. To address diverse application requirements, this problem has been extended to several variants, such as locally-dense decomposition \cite{locally-dense}, top-$k$ densest subgraphs \cite{DBLP:conf/kdd/QinLCZ15, DBLP:journals/pvldb/MaCLH22}, anchored densest subgraphs \cite{DBLP:conf/sigmod/DaiQC22, DBLP:conf/kdd/YeLLLLW24}, densest $k$-subgraph \cite{bourgeois2013exact, DBLP:journals/jal/AsahiroITT00}, and density decomposition \cite{dd}. In dynamic settings, many algorithms have been proposed for maintaining density-based structures, such as for the densest subgraph \cite{densest_dynamic1, DBLP:conf/stoc/SawlaniW20, mapreduce, DBLP:conf/wdag/SarmaLNT12, DBLP:conf/waw/BahmaniGM14}, top-$k$ densest subgraphs \cite{DBLP:conf/cikm/NasirGMG17}, and density decomposition \cite{dd}. \comment{Although the $(\alpha,\beta)$-dense subgraph search and maintenance problems studied in this paper are related to density-based models in unipartite graphs, existing techniques designed for unipartite settings cannot be directly applied to bipartite graphs.} However, existing techniques designed for unipartite settings cannot be directly applied to the $(\alpha,\beta)$-dense subgraph search problem.

\section{CONCLUSION}

This paper studies the problem of efficient $(\alpha,\beta)$-dense subgraph search and maintenance in bipartite graphs. First, leveraging the hierarchical property of $(\alpha,\beta)$-dense subgraphs, we introduce the concepts of $\alpha$-rank and $\beta$-rank to capture their inclusion relationships. Using these ranks, we organize nodes into $p$ compact node lists, forming a novel index structure called \kw{BD-Index}, which achieves optimal query time $O(|D_{\a, \b}|)$ with linear space complexity $O(|E|)$. We also propose an index construction algorithm with time complexity $O(p \cdot |E|^{1.5} \cdot \log |U \cup V|)$. To handle dynamic updates, we establish several novel update theorems\comment{ that characterize how $\alpha$-ranks and $\beta$-ranks change under edge insertions and deletions}. Building upon these theorems, we present two maintenance strategies. The space-efficient method maintains the index in $O(p \cdot |E|^{1.5})$ time and $O(|E|)$ space. The time-efficient method employs egalitarian orientations to reduce update time to $O(p \cdot |E|)$ while using $O(p \cdot |E|)$ space. Experiments on 10 real-world datasets demonstrate the efficiency and scalability of our solutions in both static and dynamic graphs.

\balance
\bibliographystyle{ACM-Reference-Format}
\bibliography{main}

\comment{
\appendix
\section*{APPENDIX}

In this appendix, we provide the missing proofs of the theorems presented in the paper, including \theref{insertion-update-theorem}, \theref{deletion-update-theorem}, \theref{u-update-theorem}, \theref{bd-insert-t-correctness}, and \theref{bd-delete-t-correctness}. We begin by proving the following lemma, which serves as the foundation for our subsequent proofs.

\begin{Lemma} \label{egalitarian-property}
    Given a graph $G$, an alpha value $\a$, and an egalitarian orientation $\vec{E}$, we have the following properties: (1) For any node $v \in V$, it holds that $\vec{d}_v(\vec{E}) \in \{r_\a(v), r_\a(v) + 1\}$; (2) For any nodes $x, y \in (U \cup V)$, if $r_\a(x) > r_\a(y)$, then the directed edge $(x, y)$ in $\vec{E}$ must point to $y$.
\end{Lemma}
\begin{proof}
    We first prove property (1). According to the definition of egalitarian orientation and \theref{orientation-to-rank}, we know that the indegree of $v$ $\vec{d}_v(\vec{E}) \le r_\a(v) + 1$; otherwise, $v$ would have a higher $\alpha$-rank value. Moreover, we have that either $v$ has an indegree greater than $r_\a(v)$, or $v$ can reach a node in $V$ with indegree greater than $r_\a(v)$. If $v$ has indegree greater than $r_\a(v)$, the property holds. Otherwise, by condition (2) in the definition of egalitarian orientation, $v$ must have indegree greater than $r_\a(v) - 1$, and thus the property also holds.

    Next, we prove property (2). According to the construction of $D_{\a,\b}$ in the proof of \theref{orientation-to-rank}, let $T = \{v \in V \mid \vec{d}_v(\vec{E}) > \b\}$. Then, $D_{\a,\b}$ consists of $T$ together with all nodes that can reach $T$, which implies that all edges between $D_{\a,\b}$ and $(U \cup V) \setminus D_{\a,\b}$ are directed from $D_{\a,\b}$ to $(U \cup V) \setminus D_{\a,\b}$. Therefore, if $r_\a(x) > r_\a(y)$, node $x$ belongs to a denser subgraph than $y$, and the edge $(x, y)$ must be directed toward $y$, which proves the property.
\end{proof}

Based on the properties of egalitarian orientation established in \lemref{egalitarian-property}, we first prove \theref{u-update-theorem}, followed by the proofs of \theref{insertion-update-theorem} and \theref{deletion-update-theorem}.

\paragraph{\textsc{\upshape Proof of \theref{u-update-theorem}}}

When $|N(u)| \le \a \Rightarrow d_u \le \a$, it follows from \lemref{degree} that $u \notin D_{\a,0}$, and thus $r_\a(u) = -1$. Otherwise, if $|N(u)| > \a$, let $\vec{E}$ be an egalitarian orientation of the given graph $G$ with respect to $\alpha$. According to the definition of egalitarian orientation, the indegree of $u$ is exactly $\a$, so there are $\a$ neighbors in $N(u)$ with directed edges pointing to $u$. By \lemref{egalitarian-property}, these neighbors must have $\alpha$-rank values greater than or equal to $r_\a(u)$.  

In addition, since $r_\a(u)$ is the $\alpha$-rank of $u$, $u$ must be able to reach a node $v_1 \in V$ whose indegree is greater than $r_\a(u)$. Let $(u, v_2)$ be the first edge on the path $u\rightsquigarrow v_1$. It follows that $r_\a(v_2) = r_\a(u)$, and by contradiction, we can show that $u$ cannot reach any node with an $\alpha$-rank greater than $r_\a(u)$.  

In summary, $N(u)$ contains $\a$ neighbors pointing to $u$, all of whose $\alpha$-ranks are greater than or equal to $r_\a(u)$. On the other hand, among the neighbors pointed to by $u$, the one with the highest $\alpha$-rank is $v_2$, whose rank is exactly $r_\a(u)$. Therefore, the $(\a+1)$-th largest $\alpha$-rank among the neighbors of $u$ is $r_\a(u)$, which completes the proof. 
\hfill\qedsymbol{}\vspace{0.2cm}

\paragraph{\textsc{\upshape Proof of \theref{insertion-update-theorem}}}

We divide the proof into three cases.

\textit{Case 1:} When $|N(u)| < \a$, both $r_N$ and $\b$ equal $-1$. Let $\vec{E}$ be an egalitarian orientation (defined in \defref{egalitarian-orientation}, which we will introduce later) of the given graph $G$ with respect to $\alpha$. We directly insert the edge $(u, v)$ into $\vec{E}$ with direction $(v, u)$. According to the definition of egalitarian orientation, the resulting $\vec{E}$ remains egalitarian. Since the reachability of all nodes remains unchanged, the $\alpha$-rank of each node also remains unchanged, which proves the case.

\textit{Case 2:} When $|N(u)| \ge \a$ and $r_\a(u) > r_\a(v)$, we have $r^N = r_\a(u)$ and $\b = r_\a(v)$ by \theref{u-update-theorem}. For any $\b' \in [-1, \b]\cup [\b+2, +\infty)$, let $\vec{E}$ be an orientation that satisfies the condition in \defref{ab-dense-subgraph} with alpha value as $\a$ and beta value as $\b'$. We insert the edge $(u,v)$ into $\vec{E}$ with direction towards $v$. It is easy to verify that the updated $\vec{E}$ still satisfies the condition in \defref{ab-dense-subgraph}, so $D_{\a,\b'}$ remains unchanged after insertion. Therefore, only $D_{\a,\b+1}$ may change after insertion. Since $\alpha$-rank values do not decrease, it follows that only the nodes with $r_\a = \b$ may have their $\alpha$-rank increased to $\b+1$, which proves the case.

\textit{Case 3:} When $|N(u)| \ge \a$ and $r_\a(u) \le r_\a(v)$, we have $r^N \ge r_\a(u)$ and $\b = r^N$. Let $\vec{E}$ be an egalitarian orientation of the given graph $G$ with respect to $\alpha$. We insert the edge $(u,v)$ into $\vec{E}$ with direction toward $u$, so $u$ now has $(\a+1)$ in-neighbors. The value $r^N$ represents the lowest $\alpha$-rank among these $(\a+1)$ neighbors; let $v_N$ be the corresponding neighbor. We then reverse the edge $(v_N, u)$ in $\vec{E}$ to become $(u, v_N)$. For any $\b' \in [-1, \b] \cup [\b+2, +\infty)$, the updated $\vec{E}$ still satisfies the condition in \defref{ab-dense-subgraph} with alpha value as $\a$ and beta value as $\b'$. Therefore, only the nodes in $V$ with $r_\a = \b$ may have their $\alpha$-rank increased to $\b+1$, which completes the proof.
\hfill\qedsymbol{}\vspace{0.2cm}

\paragraph{\textsc{\upshape Proof of \theref{deletion-update-theorem}}}

Let $\vec{E}$ be an egalitarian orientation of the given graph $G$ and alpha $\a$. We divide the proof into three cases.

\textit{Case 1:} When $|N(u)| \le \a$, it follows from \theref{degree} that $r_\a(u) = -1$. According to the definition of egalitarian orientation, all neighbors of $u$ are directed toward $u$ in $\vec{E}$, so the directed edge $(v,u)$ can be directly removed. It is clear that the orientation remains egalitarian, and all nodes’ $r_\a$ values remain unchanged, which completes the proof.

\textit{Case 2:} When $|N(u)| > \a$ and the edge $(u,v)$ is directed toward $v$ in $\vec{E}$. According to \lemref{egalitarian-property}, we have $r_\a(u) \ge r_\a(v)$, and thus $\b = r_\a(v)$. We directly remove the edge $(u,v)$ from $\vec{E}$. For any beta value $\b' \in [-1, \b-1] \cup [\b+1, +\infty)$, it is easy to verify that $\vec{E}$ still satisfies the condition in \defref{ab-dense-subgraph} with alpha value $\a$ and beta value $\b'$. This implies that only the subgraph $D_{\a,\b}$ is affected by the edge deletion. Since deleting an edge cannot increase any node’s $\alpha$-rank, it follows that only the nodes with $r_\a = \b$ may have their rank reduced to $\b - 1$, which proves the case.

\textit{Case 3:} When $|N(u)| > \a$ and the edge $(u,v)$ is directed toward $u$ in $\vec{E}$. According to \lemref{egalitarian-property}, we have $r_\a(u) \le r_\a(v)$, and thus $\b = r_\a(u)$. From the proof of \theref{u-update-theorem}, there exists a neighbor $v_2 \in N(u)$ such that $(u, v_2)$ is in $\vec{E}$ and $r_\a(v_2) = r_\a(u)$. We first remove the edge $(v, u)$ from $\vec{E}$, and then reverse the edge $(u, v_2)$ to become $(v_2, u)$. For any beta value $\b' \in [-1, \b-1] \cup [\b+1, +\infty)$, it is easy to verify that the updated $\vec{E}$ still satisfies the condition in \defref{ab-dense-subgraph} with alpha value $\a$ and beta value $\b'$. This shows that only the nodes in $V$ with $r_\a = \b$ may have their rank decreased to $\b - 1$, which completes the proof.
\hfill\qedsymbol{}\vspace{0.2cm}

\paragraph{\textsc{\upshape Proof of \theref{bd-insert-t-correctness}}}

Since the algorithm processes each node list (i.e., each alpha value) separately, we only need to prove that within the loop for a given alpha value (lines 2–7), the algorithm correctly updates the egalitarian orientation. Once the orientation is correctly maintained, the corresponding $r_\a$ values and the \kw{BD-Index} can be correctly updated based on \theref{orientation-to-rank}.  

Therefore, let the current alpha value be $\a$. If $d_u(G) < \a$ (line 4), it is clear that the egalitarian orientation is updated correctly. Otherwise, if $d_u(G) > \a$, let $(v_1, u)$ be the last edge on the path $v_{\min} \rightsquigarrow u$. Since $v_{\min}$ can reach $v_1$ or $v_{\min} = v_1$, it follows from \theref{egalitarian-property} that $r_\a(v_{\min}) \ge r_\a(v_1)$. Additionally, since $\vec{d}_{v_{\min}} \le \vec{d}_{v_1}$, we also have $r_\a(v_{\min}) \le r_\a(v_1)$. Therefore, $r_\a(v_{\min}) = r_\a(v_1)$. Similarly, we can conclude that all nodes along the path $v_{\min} \rightsquigarrow v_1$ have the same $\alpha$-rank as $v_{\min}$. Now, suppose we first reverse the path $v_{\min} \rightsquigarrow v_1$ in the original orientation. It is easy to see that the resulting orientation remains egalitarian. Then, we reverse the edge $(v_1, u)$ (i.e., the full path $v_{\min} \rightsquigarrow u$ has been reversed). Since $v_{\min}$ is the node with the smallest indegree among those that can reach $u$, the resulting orientation is still egalitarian. Hence, \kw{BD-Insert-T} correctly maintains the egalitarian orientation, and by \theref{orientation-to-rank}, it also correctly updates the \kw{BD-Index}.
\hfill\qedsymbol{}\vspace{0.2cm}

\paragraph{\textsc{\upshape Proof of \theref{bd-delete-t-correctness}}}

Similar to the correctness proof of \kw{BD-Insert-T}, the correctness of \kw{BD-Delete-T} also only requires showing that the algorithm correctly maintains the egalitarian orientation for each alpha value $\a$. First, if $d_u(G) \le \a$, it is straightforward to verify the correctness based on the definition of egalitarian orientation. Otherwise, if $d_u(G) > \a$, we consider two cases:

\textit{Case 1:} $(u,v) \in \vec{E}$ (line 6). In this case, according to the proof of \theref{u-update-theorem}, the neighbor $v_1 \in N(u)$ that is pointed to by $u$ and has the highest $r_\a$ satisfies $r_\a(v_1) = r_\a(u)$. Since $v_1$ can reach $v_{\max}$ or is equal to $v_{\max}$, we have $r_\a(v_1) \ge r_\a(v_{\max})$. On the other hand, since $v_{\max}$ has a higher indegree than $v_1$, it follows that $r_\a(v_{\max}) \ge r_\a(v_1)$. Therefore, $r_\a(v_1) = r_\a(v_{\max})$. Similarly, we can conclude that all nodes along the path $v_1 \rightsquigarrow v_{\max}$ have the same $\alpha$-rank as $v_{\max}$. Hence, by reversing the path $u \rightsquigarrow v_{\max}$ and deleting the edge $(u,v)$ or $(v,u)$ from $\vec{E}$, it is easy to see that the resulting orientation remains egalitarian.

\textit{Case 2:} $(v,u) \in \vec{E}$ (line 6). This case can be further divided into two subcases: (1) If $v_{\max} = v$, then the algorithm simply removes the edge $(u,v)$ without reversing any path. This implies that $v$ cannot reach any node in $V$ with a higher indegree. According to the definition of egalitarian orientation, any node in $V$ that can reach $v$ must have indegree at least $\vec{d}_v - 1$, so decreasing $v$'s indegree by removing $(u,v)$ does not violate the egalitarian condition. (2) If $v_{\max} \neq v$, we can use a similar argument as in Case 1 to show that the resulting orientation remains egalitarian after the path reversal and edge deletion.

Therefore, the algorithm correctly maintains the egalitarian orientation, and by \theref{orientation-to-rank}, it also correctly maintains the \kw{BD-Index}.
\hfill\qedsymbol{}\vspace{0.2cm}
}

\end{document}